\def\titlerunning#1{\gdef\titrun{#1}}
\def\author#1{\gdef\autrun{\def\and{\unskip, }#1}\gdef\@author{#1}}
\def\address#1{{\def\and{\\\hspace*{18pt}}\renewcommand{\thefootnote}{}%
\footnote {#1}}%
\markboth{\autrun}{\titrun}}
\def\email#1{e-mail: #1}
\def\subjclass#1{{\renewcommand{\thefootnote}{}%
\footnote{\emph{Mathematics Subject Classification (2010):} #1}}}
\def\keywords#1{\par\medskip
\noindent\textbf{Keywords.} #1}
\newtheorem{Thm}{Theorem}[section]
\newtheorem{Prop}[Thm]{Proposition}
\newtheorem{Lemma}[Thm]{Lemma}
\newtheorem{Cor}[Thm]{Corollary}
\newtheorem{Remark}[Thm]{Remark}
\theoremstyle{definition}
\newtheorem{Def}[Thm]{Definition}
\numberwithin{equation}{section}
\def\freeze#1{\text{\underbar{\ensuremath{#1}}}}
\let\leq\leqslant
\let\geq\geqslant
\let\preceq\preccurlyeq
\let\succeq\succcurlyeq
\def\straight(#1,#2,#3,#4,#5){
  \node[Nw=.5,Nh=4,Nfill=y,Nmr=0,Nframe=n](node1)(#1,0){}
  \node[Nw=.5,Nh=4,Nfill=y,Nmr=0,Nframe=n](node2)(#2,0){}
  \drawedge[AHnb=0,linewidth=#5,ELside=#3,ELdist=1](node1,node2){#4}
}
\def\curvel(#1,#2,#3,#4,#5,#6,#7){
  \foo=#1\relax
  \advance\foo#4\relax
  \divide\foo2\relax
  \node[Nframe=n,Nw=0,Nh=0](A)(#1,#2){}
  \node[Nframe=n,Nw=0,Nh=0](B)(#4,#2){}
  \drawqbedge[AHnb=0,ELside=#7](A,\the\foo,#3,B){#6}
  \ifnum#2=#5\else
  \node[Nframe=n,Nw=0,Nh=0](A0)(#1,#5){}
  \node[Nframe=n,Nw=0,Nh=0](B0)(#4,#5){}
  \drawedge[AHnb=0,dash=1 0](A,A0){}
  \drawedge[AHnb=0,dash=1 0](B,B0){}
  \fi
}
\let\op=\llbracket
\let\cl=\rrbracket
\def\terms#1{\ensuremath{\mathcal{T}_{#1}}}
\def\pv#1{\ensuremath{{\sf#1}}}
\def\Om#1#2{\ensuremath{\overline\Omega_{#1}{\sf#2}}}
\def\omup#1#2#3{\ensuremath{\Omega^{#1}_{#2}{\sf#3}}}
\def\omc#1#2{\omup{\omega}{#1}{#2}}
\def\rk#1{\ensuremath{\mathop{\text{rank}}[{#1}]}}
\def\pj#1{\ensuremath{p_{\sf#1}}}
\def\clos#1{\ensuremath{\mathrm{cl}(#1)}}
\def\closv#1#2{\ensuremath{\mathrm{cl}_\pv{#1}(#2)}}
\def\ind#1{\ensuremath{\mathrm{ind} (#1)}}
\def\subrel#1#2{\mathrel{\mathop{#2}\limits_{#1}}}
\begin{document}


\baselineskip=17pt


\titlerunning{Normal forms for free aperiodic semigroups}

\title{McCammond's normal forms for free aperiodic semigroups
  revisited}

\author{J. Almeida %
  \and %
  J. C. Costa %
  \and %
  M. Zeitoun %
}

\date{}

\maketitle

\address{ %
  J. Almeida: %
  CMUP, Dep.\ Matemática, Faculdade de Ciências, Universidade do
  Porto, Rua do Campo Alegre 687, 4169-007 Porto, Portugal; %
  \email{jalmeida[AT]fc.up.pt} %
  \and %
  J. C. Costa: %
  CMAT, Dep.\ Matemática e Aplicações, Universidade do Minho, Campus
  de Gualtar, 4700-320 Braga, Portugal; %
  \email{jcosta[AT]math.uminho.pt} %
  \and %
  M. Zeitoun: %
  LaBRI, Université Bordeaux \& CNRS UMR 5800, 351 cours de la
  Libération, 33405 Talence Cedex, France; %
  \email{mz[AT]labri.fr} %
}

\subjclass{Primary 20M05, 20M07; Secondary 20M35, 68Q70}

\begin{abstract}
  This paper revisits the solution of the word problem for
  $\omega$-terms interpreted over finite aperiodic semigroups,
  obtained by J.~McCammond. The original proof of correctness of
  McCammond's algorithm, based on normal forms for such terms, uses
  McCammond's solution of the word problem for certain Burnside
  semigroups. In this paper, we establish a new, simpler, correctness
  proof of McCammond's algorithm, based on properties of certain
  regular languages associated with the normal forms. This method
  leads to new applications.

  \keywords{Pseudovariety, relatively free profinite semigroup, word
    problem, McCammond normal form, aperiodic semigroup, finite
    semigroup, star-free language, regular language.}
\end{abstract}

\section{Introduction}
\label{sec:introd}

An $\omega$-term is a formal expression obtained from letters of an
alphabet $X$ using two operations: the binary, associative,
concatenation and the unary $\omega$-power. Any $\omega$-term $\alpha$
can be given a natural interpretation on a finite semigroup $S$ as a
mapping $\alpha_S:S^X\to S$, as follows: each letter $x$ of $X$ is
interpreted as the mapping sending each element of $S^X$ to its image
on~$x$, the concatenation is viewed as the semigroup multiplication,
while the $\omega$-power is interpreted as the unary operation which
sends each element of $S$ to its unique idempotent power. The
$\omega$-word problem for a class $\mathcal{C}$ of finite semigroups
consists in deciding whether two $\omega$-terms have the same
interpretation over every semigroup of~$\mathcal{C}$.

One motivation for considering the $\omega$-word problem is that its
decidability is one of the requirements of a property of
pseudovarieties (classes of finite semigroups closed under taking
subsemigroups, homomorphic images, and finite direct products) called
\emph{tameness}, introduced by the first author and
Steinberg~\cite{Almeida:1999b,Almeida&Steinberg:2000b,Almeida&Steinberg:2000a}
to solve the decidability problem for iterated semidirect products of
pseudovarieties. In spite of its limitations for that purpose under
current knowledge, tameness remains a property of interest which has
also been used to solve membership problems involving other types of
operators~\cite{Almeida&Costa&Zeitoun:2004}
(see~\cite{Almeida&Costa&Zeitoun:2005b}
and~\cite[Section~3.7.3]{Rhodes&Steinberg:qt} for a discussion). A
difficult problem occurring in computer science is related to a weak
form of tameness~\cite{Almeida99:SomeAlgorProbl}. It asks if it is
possible to separate two given regular languages by a language
recognized by a semigroup of a given pseudovariety. For the
pseudovariety \pv A of all aperiodic (or group-free) semigroups,
recognizing exactly first-order definable
languages~\cite{Schutzenberger:1965,McNaughton&Papert:71}, it amounts
to finding a first-order formula holding on (all words of) one
language, and whose negation holds on the other one. It was solved
algebraically by Henckell~\cite{Henckell:1988}, and by simple combinatorial methods
by Place and the third author~\cite{PZ:lics14}.

The $\omega$-word problem has been solved for some pseudovarieties. The case
of the pseudovariety of all $\mathcal{J}$-trivial semigroups, solved by the
first author in~\cite{almeida:1990}, constitutes a classical example. Another
remarkable example, achieved by McCammond~\cite{McCammond:1999a}, is given by
the pseudovariety \pv A.  Recently, an alternative algorithm for deciding the
$\omega$-word problem for \pv A has been proposed
in~\cite{HuschenbettKufleitner:LIPIcs:2014:4472}. It is based on
Ehrenfeucht-Fra\"\i{}ss\'e games played on representations of $\omega$-words,
and this approach makes it possible to obtain an \textsc{Exptime} upper bound
for this decision problem. One should note however that the correctness proof
of this new algorithm itself relies on McCammond's algorithm.

The $\omega$-word problem has been solved for other pseudovarieties. It has
been obtained by the second author~\cite{Costa:2001} for the pseudovariety of
local semilattices. The first and third
authors~\cite{Almeida&Zeitoun:AutomTheorApproac:2007} solved the $\omega$-word
problem for the pseudovariety of $\mathcal{R}$-trivial~semigroups, and their
techniques have been adapted for the pseudovariety \pv{DA}, which consists of all finite
semigroups whose regular $\mathcal{J}$-classes are aperiodic
semigroups~\cite{Moura:2011:a}. Recently, the second
author~\cite{Costa:Canonical-forms-free-k-semigroups:2014:a} has applied
techniques similar to the ones of this paper to show decidability of the
$\omega$-word problem for the pseudovariety of all finite semigroups.

Unlike the cases of local
semilattices~\cite{Costa&Teixeira:2005,Costa&Nogueira:2009} and
$\mathcal{R}$-trivial
semigroups~\cite{Almeida&Costa&Zeitoun:2004,Almeida&Costa&Zeitoun:2005b},
there is of yet no published proof of tameness of \pv A, but the above
mentioned solution of the $\omega$-word problem for \pv A is a step forward in
that direction. McCammond's solution~\cite{McCammond:1999a} consists in the
reduction of arbitrary $\omega$-terms to a certain normal form. McCammond then
goes on to show that different $\omega$-terms in normal form cannot have the
same interpretation over~\pv A, which he does by invoking his results on free
Burnside semigroups~\cite{McCammond:1991}.

\subsection*{Contributions} We give an alternative proof of
McCammond's normal form theorem for $\omega$-terms over~\pv
A, which is independent of the theory of free Burnside
semigroups. Our approach consists in associating to each $\omega$-term $\alpha$
a decreasing sequence of regular languages $(L_n[\alpha])_n$, whose key
property is that, if $\alpha$ is in McCammond's normal form, then $L_n[\alpha]$
is ultimately star-free. Another crucial element in the proof is the
fact that if $\alpha$ and $\beta$ are $\omega$-terms in normal form and
$L_n[\alpha]\cap L_n[\beta]\neq\emptyset$ for all $n$, then $\alpha=\beta$.

This new approach, and particularly the fact that the languages
$L_n[\alpha]$ are star-free, also yields new applications on the
structure of the free pro-\pv A semigroup. Some elements of this
semigroup, called \emph{$\omega$-words}, have a nice form: they can
actually be represented by an $\omega$-term. We show that in the free
pro-\pv A semigroup, every factor of an $\omega$-word is also an
$\omega$-word. In turn this result is a central piece
in~\cite{Almeida&Costa&Zeitoun:2009}, whose main result provides a
characterization of $\omega$-words in the free pro-\pv A semigroup.

The paper is organized as follows. In Section~\ref{sec:preliminaries},
we review background material, including the description of
McCammond's normal form. We introduce term expansions and the
languages $L_n[\alpha]$ in Section~\ref{sec:term-expansions}, and we prove
some of their basic properties. Section~\ref{sec:combinatorial-lemmas}
is mainly devoted to the proof of a combinatorial and central lemma,
about $\omega$-terms whose $\omega$-powers are not nested. In
Section~\ref{sec:omega-word-problem-over-A}, we present the main
properties of the languages $L_n[\alpha]$ and the alternative proof of
uniqueness of McCammond's normal forms for $\omega$-terms over~\pv A.  In
Section~\ref{sec:star-freeness-Ln} we establish the star-freeness of
$L_n[\alpha]$ for $\alpha$ in normal form and $n$ large enough. Finally, we
investigate in Section~\ref{sec:factors-of-omega-words} other
properties of the languages $L_n[\alpha]$, and derive some~applications.

\section{Preliminaries}
\label{sec:preliminaries}

In this section we briefly recall the basic definitions and results
that will be used throughout the paper. The reader is referred
to~\cite{Almeida:1994a,Rhodes&Steinberg:qt} for general background,
and to~\cite{Almeida:2003c} for a quick introduction to the
classical theories of pseudovarieties, regular languages and profinite
semigroups. For further details about combinatorics on words,
see~\cite{Lothaire:1983,Lothaire:2001}.

\subsection{Words}
\label{sec:words}

In the following, $X$ denotes a finite nonempty alphabet. The
free semigroup (resp.\ the free monoid) generated by $X$
is denoted by $X^+$ (resp.\ by $X^*$). The length of a word $u \in X^*$
is denoted by $|u|$. Given words $u$ and $v$, we write $u\preceq v$ if $u$
is a prefix of~$v$ and $u\prec v$ if $u\preceq v$ and $u\neq v$.
If $v=uw$, we denote by $u^{-1}v$ the suffix $w$ of $v$.  When
$w=xyz=x'y'z'$, we say that the factors $y$ and $y'$ of $w$ are
\emph{synchronized in $w$} if $x=x'$ and $z=z'$ (whence $y=y'$).  They
\emph{overlap} if $x\preceq x'\prec xy$ or~$x'\preceq x\prec x'y'$.
They \emph{overlap on (at least) $k>0$ positions} if in addition $y=u_1vu_2$
and $y'=u'_1v'u'_2$ where $|v|=|v'|=k$ and $v,v'$ are synchronized in~$w$.

The following result is known as Fine and Wilf's Theorem
(see~\cite{Lothaire:1983,Lothaire:2001}).

\begin{Prop}[Fine and Wilf's Theorem]
  \label{prop:fine:wilf}
  Let $u,v\in X^+$. If two powers $u^k$ and $v^\ell$ of $u$ and $v$
  have a common prefix of length at least $|u|+|v|-\gcd(|u|,|v|)$,
  then $u$ and $v$ are powers of the same~word.\qed
\end{Prop}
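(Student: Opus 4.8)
The plan is to reduce Fine and Wilf's theorem to the Periodicity Lemma. Put $p=|u|$, $q=|v|$, $g=\gcd(p,q)$, and let $w$ be a common prefix of $u^k$ and $v^\ell$ with $|w|\ge p+q-g$. Since $|w|\ge p+q-g\ge p$ and $w\preceq u^k$, the word $u$ is the length-$p$ prefix of $w$; likewise $v$ is the length-$q$ prefix of $w$. Moreover $w$ has period $p$ (being a prefix of $u^k$) and period $q$ (being a prefix of $v^\ell$); here a word $x$ \emph{has period $d$} if $x[i]=x[i+d]$ whenever both sides are defined, writing $x[i]$ for the $i$-th letter. It therefore suffices to prove the \emph{Periodicity Lemma}: a word of length at least $p+q-\gcd(p,q)$ having periods $p$ and $q$ has period $g$. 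Granting this, letting $t$ be the length-$g$ prefix of $w$ and using $g\mid p$ and $g\mid q$, one gets $u=t^{p/g}$ and $v=t^{q/g}$, so $u$ and $v$ are powers of~$t$.

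To prove the Periodicity Lemma I would first reduce to the coprime case. Write $p=gp'$, $q=gq'$ with $\gcd(p',q')=1$. For each residue $r\in\{0,\dots,g-1\}$, the subword $w[r]\,w[r+g]\,w[r+2g]\cdots$ has periods $p'$ and $q'$, and since $|w|\ge g(p'+q'-1)$ a length check shows it has length at least $p'+q'-1$; the coprime case then forces each such subword to be constant, which says precisely that $w$ has period $g$. For the coprime case $\gcd(p,q)=1$, I would induct on $p+q$. If $p=1$ or $q=1$ the word is already constant. Otherwise, assume $p<q$, set $r=q-p$ (so $\gcd(p,r)=1$), and let $w_1$ be obtained from $w$ by deleting its last $p$ letters. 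A direct index computation using the periods $p$ and $q$ of $w$ shows $w_1$ has periods $p$ and $r$, while $|w_1|=|w|-p\ge q-1=p+r-1$; by the induction hypothesis $w_1$ is constant, say equal to $c^{|w_1|}$. Finally $|w|\ge p+q-1\ge 2p$, so every position $j\ge|w|-p$ satisfies $w[j]=w[j-p]$ with $j-p$ inside $w_1$, whence $w=c^{|w|}$.

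The conceptual content is all in the two routine-looking reductions above; the delicate point is the bookkeeping that makes the induction close with the \emph{sharp} constant $p+q-\gcd(p,q)$. One must verify that $w_1$ genuinely inherits period $r$ on its whole length — the admissible index range works out to be exactly $0\le i\le|w|-1-q$ — that $|w_1|$ meets the bound $p+r-\gcd(p,r)$, possibly with equality, and that $|w|\ge 2p$ so that the last propagation step by the period $p$ lands inside $w_1$. I expect this tight boundary arithmetic, rather than any structural difficulty, to be the main thing to get right, since every inequality involved is attained; this tightness is exactly what Fine and Wilf's bound expresses.
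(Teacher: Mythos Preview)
The paper does not give its own proof of this proposition: it is stated with a trailing \qed\ and a citation to Lothaire, so there is no argument to compare against. Your proposal is a correct, standard proof of Fine and Wilf via the Periodicity Lemma, and the bookkeeping you flag (that $w_1$ inherits period $r=q-p$ exactly on the index range $0\le i\le |w|-1-q$, that $|w_1|\ge p+r-1$, and that $|w|\ge 2p$ so the final propagation lands in $w_1$) all checks out with the sharp constant.
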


A \emph{primitive word} is a word that cannot be written in the form
$u^n$ with $n>1$. Two words $w$ and $z$ are \emph{conjugate} if one
can write $w=uv$ and $z=vu$, where $u,v\in X^*$. All conjugates of a
primitive word are also primitive. Let an order be fixed for the
letters of the alphabet~$X$. A \emph{Lyndon word} is a primitive word
that is minimal, with respect to the lexicographic ordering, in its
conjugacy class. We recall a property following from
\cite[Prop.~5.1.2]{Lothaire:1983}.

\begin{Lemma}
  \label{l:Lyndon-less-than-suffices}
  If $t\in X^*$ is both a prefix and a suffix of a Lyndon word $w$,
  then either $t$ is the empty word, or $t$ is the word $w$ itself.
\end{Lemma}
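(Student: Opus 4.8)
The plan is to argue by contradiction, exploiting the minimality of a Lyndon word in its conjugacy class together with Fine and Wilf's Theorem. Suppose $t$ is both a nonempty proper prefix and a nonempty proper suffix of the Lyndon word $w$, so that $w = tp = qt$ for some nonempty words $p,q$ with $|p|=|q|=|w|-|t|$. The conjugate $z = pq$ of $w$ obtained by cycling the prefix $t$ to the end: since $w = tp$, the conjugate $pt$ is in the conjugacy class of $w$, and by Lyndon minimality $w \preceq_{\text{lex}} pt$. My first step is to extract from the two factorizations $w = tp = qt$ the overlap structure: reading $w$ from the left it begins with $t$, and reading it aligned so that it ends with $t$, these two occurrences of $t$ either are disjoint, abut, or overlap inside $w$.

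The key case is when they overlap, i.e.\ $|t| > |w|/2$. Then $w$ has a common prefix (namely $t$ itself, and also a long prefix coming from the suffix occurrence) of length $|t|$ with a shifted copy of itself; more precisely, writing $w = tp$ and $w = qt$ with $|p| = |q| = d := |w| - |t|$, one gets that $t$, hence $w$, is periodic with period $d$: indeed comparing the two expressions position by position shows $w_i = w_{i+d}$ for all valid $i$. When $2d \le |t|$, i.e.\ $|w| \ge |t| + d$... more carefully, I would invoke Proposition~\ref{prop:fine:wilf}: the length $|t|$ of the shared prefix is $\ge 2d = |p|+|q| \ge |p|+|q|-\gcd$, forcing $p$ and $q$ (or rather $w$ and its shift) to be powers of a common word, contradicting primitivity of $w$. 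For the remaining subcase where $|t|$ is not quite that large, I would iterate: the periodicity $d$ of $w$ means $w$ is a prefix of $r^\infty$ for $r$ the length-$d$ prefix, and since $d \mid$ does not necessarily divide $|w|$... here is where one must be slightly careful, but ultimately the bound in Fine--Wilf, applied to $u = v$ situations or to the pair (prefix period, suffix period), yields that $w$ is a proper power unless $t$ is empty. The non-overlapping cases ($2|t| \le |w|$) are easier: then $w = t w' t$ for some (possibly empty) $w'$, and cycling gives the conjugate $w't t = w' t^2$... again comparing with the Lyndon minimality $w \preceq w' t \cdots$ constrains the lexicographic order of $t$ against the relevant prefixes of $w$, and one derives that $t$ being simultaneously a border forces $w$ to fail minimality or primitivity.

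The main obstacle I anticipate is the bookkeeping in the overlap case to land exactly on the Fine--Wilf hypothesis $|u|+|v|-\gcd(|u|,|v|)$ without an off-by-one error, and the possible need to combine the border structure with the lexicographic (Lyndon) condition rather than primitivity alone — it is conceivable that pure primitivity is not enough and one genuinely needs that $w$ is \emph{Lyndon}, comparing $w$ with the conjugate $p t$ (where $w = t p$) and using $w \preceq_{\text{lex}} p t$ to rule out the periodic configuration. A clean alternative that sidesteps some casework is the following: if $t$ is a proper nonempty border of $w$, write $w = t s = s' t$; then $s t = t^{-1} w t$-style manipulations show $w$ divides $t s t$ which... rather, the cleanest route is to note that a Lyndon word has no proper nonempty border at all (this is essentially \cite[Prop.~5.1.2]{Lothaire:1983}, cited just above), and to reconstruct that standard fact: if $t$ were such a border then the conjugate starting at position $|t|$ would be $\le w$ lexicographically while also $w$ itself starts with $t$, and chasing this inequality one position at a time down the border chain produces an infinite strictly descending sequence or a periodicity contradicting primitivity. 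I would write up whichever of these two packagings (direct Fine--Wilf, or the descending-conjugate argument) turns out shorter, and I expect two or three short paragraphs suffice.
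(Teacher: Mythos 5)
There is a genuine gap: you never commit to a single argument, and the route you develop in most detail cannot be made to work. The statement is precisely that a Lyndon word has no nonempty proper border, and a border is too weak a hypothesis for Fine and Wilf to bite on its own: a border $t$ of $w$ with $0<|t|<|w|$ gives $w$ exactly one period $d=|w|-|t|$, whereas Proposition~\ref{prop:fine:wilf} needs \emph{two} periods meeting the bound $|u|+|v|-\gcd(|u|,|v|)$. A primitive word can perfectly well be bordered, even in your ``key overlap case'' $|t|>|w|/2$: the word $ababa$ is primitive and has the border $aba$. So every branch of your sketch that ends in ``contradicting primitivity of $w$'' is aiming at a false target; the Lyndon hypothesis is not optional. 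You half-acknowledge this (``it is conceivable that pure primitivity is not enough''), but the Lyndon-specific fallback you offer --- ``chasing this inequality one position at a time down the border chain produces an infinite strictly descending sequence or a periodicity'' --- is a gesture, not an argument, and the write-up ends with ``whichever of these two packagings turns out shorter,'' i.e.\ with no proof at all.

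The missing ingredient is the characterization that the paper is implicitly invoking when it cites \cite[Prop.~5.1.2]{Lothaire:1983} (the paper itself offers no proof of this lemma, only that citation): a Lyndon word is lexicographically strictly smaller than each of its proper nonempty suffixes, in the order where a proper prefix is smaller than the word extending it. Granting that, the lemma is immediate: if $t$ is a border with $0<|t|<|w|$, then $t$ is a proper nonempty suffix of $w$, so $w$ is strictly smaller than $t$; but $t$ is also a proper prefix of $w$, so $t$ is strictly smaller than $w$ --- a contradiction. If you prefer not to quote that characterization, you must prove it (or an equivalent) from the definition of Lyndon words via conjugates; neither of your packagings does so, and the comparison you do set up, $w\le pt$ for the conjugate $pt$ of $w=tp$, does not by itself yield the contradiction.
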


\subsection{\texorpdfstring{Pseudowords and
    $\omega$-words}{Pseudowords and omega-words}}
\label{sec:pseudowords--words}

In this paper, we deal with the pseudovariety~\pv A of all finite
aperiodic, or group-free, semigroups. These are the
finite semigroups $T$ for which there exists some integer $n>0$ such
that $s^n=s^{n+1}$ for every $s\in T$. We write \pv S for the class
of all finite~semigroups.

Given a pseudovariety \pv V, we denote by \Om XV the free pro-\pv V
semigroup over~$X$ (see~\cite{Almeida:2003c} for its construction and
main properties). We briefly recall here some of its properties needed
in the paper. First, \Om XV is a compact topological semigroup whose
elements are called \emph{pseudowords} over~\pv V. For $\pv V=\pv S$
or $\pv A$, the free semigroup $X^+$ embeds in $\Om XV$ and is dense
in \Om XV. For $L\subseteq X^+$, we denote by $\clos L$, resp.\
$\closv AL$ its closure in \Om XS, resp.\ in \Om XA. There is a unique
continuous homomorphism from $\Om XS$ to $\Om XA$ sending each $x\in
X$ to itself, and we denote it by \pj A. Note that $\pj A(\clos
L)=\closv AL$.

Given $z\in\Om XV$, the closed subsemigroup of \Om XV generated by $z$
contains a single idempotent denoted by $z^\omega$, which is the limit
of the sequence $z^{n!}$. Note that $zz^{\omega}=z^{\omega}z$. We set
$z^{\omega+1}=zz^{\omega}$. In \Om XA, we have
$z^{\omega+1}=z^\omega$. For $\alpha,\beta\in\Om XS$, we say that \pv
A satisfies $\alpha=\beta$ if $\pj A(\alpha)=\pj A(\beta)$. For
example, \pv A satisfies $z^{\omega+1}=z^\omega$ for all $z\in\Om XS$.

A \emph{unary semigroup} is an algebra $(S,\cdot,\tau)$, with $\cdot$
binary and associative and $\tau$ unary. A free pro-\pv V semigroup
has a natural structure of unary semigroup, where $\tau$ is
interpreted as the $\omega$-power. We denote by \omc XV the unary
subsemigroup of \Om XV generated by $X$, whose elements are called
\emph{$\omega$-words} over $\pv V$. Each $\omega$-word has a
representation by a formal term over $X$ in the signature
$\{\cdot,\omega\}$, called an \emph{$\omega$-term}. We do not
distinguish between $\omega$-terms that only differ in the order in
which multiplications are to be carried
out.
Finally, let $\terms{X}$ be the unary semigroup of
$\omega$-terms, which is freely generated by~$X$ as a unary semigroup.
Sometimes, it will be useful to consider also the empty
  $\omega$-term, which is identified with the empty word.

\subsection{\texorpdfstring{The $\omega$-word problem
    for \pv A}{The omega-word problem for A}}
\label{sec:mccamm-norm-form}

McCammond~\cite{McCammond:1999a} represents $\omega$-terms over~$X$ as
nonempty well-parenthesized words over the alphabet
$Y=X\uplus\{\mathord{(},\mathord{)}\}$, which do not have $()$ as a
factor. The $\omega$-term associated with such a word is obtained by
replacing each matching pair of parentheses $(*)$ by $(*)^\omega$. For
example, the parenthesized word $((a)b)$ represents the $\omega$-term $(a^\omega
b)^\omega$. Conversely, every $\omega$-term over~$X$ determines a unique
well-parenthesized word over $Y$. We identify $\terms{X}$ with the set
of these well-parenthesized words over~$Y$. From hereon, we will
usually refer to an $\omega$-term meaning its associated word over $Y$. In
particular, there is a natural homomorphism of unary semigroups
$\epsilon:\terms{X}\to\omc XA$ that fixes each $x\in X$ when we view $X$ as a
subset of $\terms{X}$ and \omc XA in the natural way. To avoid
ambiguities in the meaning of the parentheses, we write $\epsilon[w]$ for the
image of $w\in \terms{X}$ under~$\epsilon$.

The $\omega$-word problem for~\pv A (over $X$) consists in deciding
whether two given elements of $\terms{X}$ have the
same image under~$\epsilon$. This problem was solved by McCammond by
effectively transforming any $\omega$-term into a certain
\emph{normal form} with the same image under~$\epsilon$, and by
proving that two $\omega$-terms in normal form with the same image
under~$\epsilon$ are necessarily equal. In order to describe the
normal form, let us fix a total ordering on the alphabet $X$, and
extend it to $Y=X\cup\{\mathord{(},\mathord{)}\}$ by letting
${(}<x<{)}$ for all $x\in X$. The \emph{rank} of an $\omega$-term
$\alpha$ is the maximum number $\rk \alpha$ of nested parentheses in
it.

McCammond's normal form is defined recursively. \emph{Rank~0 normal
  forms} are the words from~$X^*$. Assuming that rank~$i$
normal forms have been defined, a \emph{rank~$i+1$ normal form
  ($\omega$-term)} is an $\omega$-term of the form
$$\alpha_0(\beta_1)\alpha_1(\beta_2)\cdots\alpha_{n-1}(\beta_n)\alpha_n,$$
where the $\alpha_j$ and $\beta_k$ are $\omega$-terms such that the
following conditions hold:
\begin{enumerate}
\item\label{item:rank-i-1} each $\beta_k$ is a Lyndon word of
  rank~$i$;
\item\label{item:rank-i-2} no intermediate $\alpha_j$ is a prefix of a
  power of $\beta_j$ or a suffix of a power of~$\beta_{j+1}$;
\item\label{item:rank-i-3} replacing each subterm $(\beta_k)$ by
  $\beta_k\beta_k$, we obtain a rank~$i$ normal form;
\item\label{item:rank-i-4} at least one of the properties
  \ref{item:rank-i-2} and \ref{item:rank-i-3} fails if we remove
  from~$\alpha_j$ a prefix $\beta_j$ (for $0<j$) or a suffix $\beta_{j+1}$
  (for~$j<n$).
\end{enumerate}

For instance, if the letters $a,b\in X$ are such that $a<b$, then the
terms $(a) ab(b)$, $b(ab)abaa(a)aaab(aab)$ and $((a) ab(b)ba)(a)
ab(b)$ are in normal form.

McCammond's procedure to transform an arbitrary $\omega$-term into one in
normal form, while retaining its value under $\epsilon$, consists in applying
elementary changes determined by the following rewriting rules:
\begin{xalignat*}{2}
&1.~((\alpha))\rightleftarrows(\alpha)&
&4R.~ (\alpha)\alpha\rightleftarrows(\alpha)\\[-1ex]
&2.~ (\alpha^k)\rightleftarrows(\alpha)&
&4L.~\, \alpha(\alpha)\rightleftarrows(\alpha)\\[-1ex]
&3.~ (\alpha)(\alpha)\rightleftarrows(\alpha)&
&5.~\ \;\,(\alpha\beta)\alpha\rightleftarrows\alpha(\beta\alpha)
\end{xalignat*}
We call the application of a rule of type 1--4 from left to right
(resp.\ from right to left) a \emph{contraction} (resp.\ an
\emph{expansion}) of that type.

Since all the rules are based on identities of unary semigroups that
are valid in~\pv A (in fact, all but those of type~4 are valid in~\pv
S), it follows that the elementary changes preserve the value of the
$\omega$-term under~$\epsilon$. Hence McCammond's algorithm does indeed transform
an arbitrary $\omega$-term into one in normal form with the same image
under~$\epsilon$. We don't describe here McCammond's procedure because
usually we will work with $\omega$-terms already in normal form. The reader
interested in the algorithm is referred to the original
paper~\cite{McCammond:1999a} or to~\cite{Almeida&Costa&Zeitoun:2009}
for a more condensed description.

\section{\texorpdfstring{Expansions of $\omega$-terms}{Expansions of omega-terms}}
\label{sec:term-expansions}

The main tools of this paper is to associate to any $\omega$-term
$\alpha$ a decreasing sequence $\bigl(L_n[\alpha]\bigr)_n$ of regular
languages. Informally, for $n>0$, the language $L_n[\alpha]$ is
obtained from $\alpha$ by replacing each $\omega$-power by a power of
exponent at least $n$. That is, $L_n[\alpha]$ is the language obtained
from $\alpha$ by replacing each ``$\omega$'' by ``$\geq n$'', where we
set $L^{\geq n}=L^*L^n$ for $L\subseteq X^{+}$.

Clearly, the sequence $\bigl(L_n[\alpha]\bigr)_{n}$ is decreasing, and
$\epsilon[\alpha]$ belongs to the topological closure, in \Om XA, of each $L_n[\alpha]$. The key
result (Theorem~\ref{t:star-free} below) is that $L_n[\alpha]$ is star-free
for $\alpha$ in normal form and $n$ large enough.

We now formally define $L_n[\alpha]$, first defining intermediate
expansions that only unfold the outermost $\omega$-powers enclosing
subterms of maximum  rank. The main differences between this definition
and McCammond's ``rank~$i$ expansions''
\cite[Definition~10.5]{McCammond:1999a} are that we require the
exponents to be beyond a fixed threshold and we do not require that
the $\omega$-terms be in normal~form.

\begin{Def}[Word expansions]
  Let $n$ be a positive integer. For a word $\alpha\in X^*$, we let
  $E_n[\alpha]=\{\alpha\}$. Let $i\geq0$. For an $\omega$-term
  \begin{equation}
    \label{eq:an-omega-word}
    \alpha=\gamma_0(\delta_1)\gamma_1\cdots(\delta_r)\gamma_r
    \ \parbox[t]{.6\textwidth}{where all $\delta_k$ are $\omega$-terms of
      rank~$i$ and all $\gamma_j$ are either empty, or $\omega$-terms of rank at most~$i$,}
  \end{equation}
  we let
  $$E_n[\alpha] =\{\gamma_0\delta_1^{n_1}\gamma_1\cdots\delta_r^{n_r}\gamma_r: \ n_1,\ldots,n_r\geq n\}.$$
  For a set $W$ of $\omega$-terms, we let $E_n[W]=\bigcup_{\alpha\in W}E_n[\alpha]$. We then let
  $$L_n[\alpha]=E_n^{\rk \alpha}[\alpha],$$
  where $E_n^k$ is the $k$-fold iteration of the operator $E_n$. For a set $W$ of
  $\omega$-terms, we let $L_n[W]=\bigcup_{\alpha\in W}L_n[\alpha]$.
\end{Def}
For example, let $\alpha=(a^\omega b)^\omega$ and $n=3$. We have $\rk
\alpha=2$, so $L_3[\alpha]=E_3^2[\alpha]$. Then,
$E_3[\alpha]=\{(a^\omega b)^p\mid p\geq 3\}$ and
$L_3[\alpha]=(a^*a^3b)^*(a^*a^3b)^3$.

\begin{Lemma}\label{l:Ln}
  The following formulas hold:
  \begin{enumerate}
  \item\label{item:Ln-1} for $\omega$-terms $\alpha$ and $\beta$,
    $$E_n[\alpha\beta]=
    \begin{cases}
      E_n[\alpha]\,E_n[\beta] & \mbox{if $\rk \alpha=\rk \beta$}\\
      \alpha\,E_n[\beta] & \mbox{if $\rk \alpha<\rk \beta$}\\
      E_n[\alpha]\,\beta & \mbox{if $\rk \alpha>\rk \beta$};
    \end{cases}
    $$
  \item\label{item:Ln-1b} for an $\omega$-term $\alpha$, $L_n[\alpha]=L_n[E_n[\alpha]]$;
  \item\label{item:Ln-2} for sets $U$ and $V$ of $\omega$-terms, we have $L_n[UV]=L_n[U]\,L_n[V]$;
  \item\label{item:Ln-3} for a factorization $\alpha=\gamma_0(\delta_1)\gamma_1\cdots(\delta_r)\gamma_r$ of an $\omega$-term as in~\eqref{eq:an-omega-word}:
    $$L_n[\alpha]=L_n[\gamma_0]\,L_n[(\delta_1)]\,L_n[\gamma_1]\cdots
    L_n[(\delta_r)]\, L_n[\gamma_r];$$
  \item\label{item:Ln-4} for an $\omega$-term $\alpha$,
    $L_n[(\alpha)]= L_n[\alpha]^*L_n[\alpha]^n$.
  \end{enumerate}
\end{Lemma}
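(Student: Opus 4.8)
The plan is to prove the five formulas of Lemma~\ref{l:Ln} essentially in the order stated, since each relies on the previous ones, with everything ultimately reducing to careful bookkeeping about how the operator $E_n$ interacts with concatenation and with the rank function.

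\medskip

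\noindent\textbf{Part \ref{item:Ln-1}.} First I would establish the case analysis for $E_n[\alpha\beta]$ directly from the definition of word expansions. The point is that when we write $\alpha\beta$ in the canonical form~\eqref{eq:an-omega-word}, the maximal-rank $\omega$-powers of $\alpha\beta$ are exactly those of $\alpha$ together with those of $\beta$ when $\rk\alpha=\rk\beta$; only those of $\beta$ (with $\alpha$ absorbed into the leading $\gamma_0$-part) when $\rk\alpha<\rk\beta$; and symmetrically when $\rk\alpha>\rk\beta$. One has to be a little careful about how the ``boundary'' factors $\gamma_r$ of $\alpha$ and $\gamma_0$ of $\beta$ merge, but since $E_n$ only substitutes exponents into the $\delta_k$ and leaves the $\gamma_j$ untouched, the substitution sets multiply out exactly as claimed. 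This is the routine base computation and I expect no real difficulty here.

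\medskip

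\noindent\textbf{Parts \ref{item:Ln-1b} and \ref{item:Ln-2}.} For \ref{item:Ln-1b}, note that if $\rk\alpha=k$ then $E_n[\alpha]$ consists of $\omega$-terms of rank $\le k-1$ (the outermost layer of maximal-rank $\omega$-powers has been unfolded), so $L_n[E_n[\alpha]]=E_n^{k-1}[E_n[\alpha]]=E_n^k[\alpha]=L_n[\alpha]$ — provided we check that applying $E_n$ to a set of rank-$\le k-1$ terms and then iterating $k-1$ times agrees with iterating $k$ times on $\alpha$; the subtlety is that different terms in $E_n[\alpha]$ may have different ranks, but $E_n$ acts as the identity on rank-$0$ words, so extra iterations are harmless and the formula $L_n[\alpha]=E_n^{\rk\alpha}[\alpha]$ extends consistently to $L_n[W]$ for any finite-rank set $W$. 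For \ref{item:Ln-2}, I would argue that $L_n[UV]=L_n[U]\,L_n[V]$ follows by iterating part \ref{item:Ln-1}: at each application of $E_n$, concatenation distributes according to the rank comparison, and after enough iterations all ranks drop to $0$, at which point the two sides visibly coincide. The cleanest route is induction on $\max(\rk U,\rk V)$, using \ref{item:Ln-1} for the inductive step and \ref{item:Ln-1b} to peel off one layer; this requires a small lemma that $L_n$ is well-defined and ``rank-stable'' on sets, which is where I anticipate the only genuine care is needed.

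\medskip

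\noindent\textbf{Parts \ref{item:Ln-3} and \ref{item:Ln-4}.} Part \ref{item:Ln-3} is then immediate: the factorization $\alpha=\gamma_0(\delta_1)\gamma_1\cdots(\delta_r)\gamma_r$ is a concatenation, so repeated application of \ref{item:Ln-2} splits $L_n[\alpha]$ into the product $L_n[\gamma_0]\,L_n[(\delta_1)]\cdots L_n[(\delta_r)]\,L_n[\gamma_r]$. For part \ref{item:Ln-4}, observe that $(\alpha)$ as an $\omega$-term has rank $\rk\alpha+1$, and its canonical form~\eqref{eq:an-omega-word} is simply $(\delta_1)$ with $\delta_1=\alpha$ and all $\gamma_j$ empty; hence $E_n[(\alpha)]=\{\alpha^{n_1}:n_1\ge n\}=\{\alpha\}^*\{\alpha\}^n$ as a set of $\omega$-terms, and applying $L_n$ to this set and using \ref{item:Ln-2} repeatedly gives $L_n[(\alpha)]=L_n[\{\alpha\}^*\{\alpha\}^n]=L_n[\alpha]^*L_n[\alpha]^n$. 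The only thing to verify is that $L_n$ commutes with the (countable) union defining the star, which holds because $L_n$ is defined unionwise on sets of $\omega$-terms. The main obstacle in the whole lemma, such as it is, is bookkeeping the rank comparisons in \ref{item:Ln-1} and making the induction in \ref{item:Ln-2} rigorous; none of the steps is deep.
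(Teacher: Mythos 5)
Your proposal is correct and follows essentially the same route as the paper: part~\ref{item:Ln-1} directly from the definition, part~\ref{item:Ln-1b} by counting iterations of $E_n$, part~\ref{item:Ln-2} by induction on a rank bound using \ref{item:Ln-1} and \ref{item:Ln-1b}, and parts~\ref{item:Ln-3} and~\ref{item:Ln-4} as consequences. (One small simplification: every element of $E_n[\alpha]$ has rank exactly $\rk\alpha-1$, since the $\delta_k$ in~\eqref{eq:an-omega-word} all have rank $\rk\alpha-1$ and survive as powers in the expansion, so the worry about mixed ranks in your part~\ref{item:Ln-1b} does not arise.)
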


\begin{proof}
  \ref{item:Ln-1} is immediate from the definition of the operator
  $E_n$. For~\ref{item:Ln-1b}, since $E_n[\alpha]$ is a set of
  $\omega$-terms whose rank is $\rk\alpha-1$, we
  have
  $$L_n[E_n[\alpha]]=E_n^{\rk \alpha-1}[E_n[\alpha]]=E_n^{\rk
    \alpha}[\alpha]=L_n[\alpha].$$

  We first establish \ref{item:Ln-2} when the rank of elements of
  $U\cup V$ is bounded by some $m>0$, proceeding by induction on~$m$.
  For $\omega$-terms $\alpha$ and $\beta$ of rank at most $m$, we have
  \begin{align*}
    L_n[\alpha\beta] \subrel{\hbox{\scriptsize\ref{item:Ln-1b}}}{=}L_n[E_n[\alpha\beta]]&=\left\{
      \begin{array}[c]{ll}
        L_n[E_n[\alpha ]E_n[\beta]]&\mbox{if $\rk\alpha =\rk\beta$}\\
        L_n[\alpha E_n[\beta]]&\mbox{if $\rk\alpha <\rk\beta$}\\
        L_n[E_n[\alpha]\beta]&\mbox{if $\rk\alpha >\rk\beta$}
      \end{array}\right.\text{ by~\ref{item:Ln-1}}\\
    &=L_n[\alpha ]\,L_n[\beta]\text{ by induction hypothesis and \ref{item:Ln-1b}.}
  \end{align*}
  To conclude the induction step, note that
  \begin{equation}
    \label{eq:LnUV}
    L_n[UV]
    =\bigcup_{\alpha\in U,\,\beta\in V}L_n[\alpha\beta]
    =\bigcup_{\alpha\in U,\,\beta\in V}L_n[\alpha]L_n[\beta]
    =L_n[U]L_n[V].
  \end{equation}
  This shows in particular that $L_n[\alpha\beta] = L_n[\alpha]L_n[\beta]$ for all
  $\omega$-terms $\alpha$ and $\beta$, so that \eqref{eq:LnUV} still holds for
  arbitrary sets $U$ and $V$, which establishes~\ref{item:Ln-2}.

  Property \ref{item:Ln-3} follows from \ref{item:Ln-2} by induction
  on the number of factors. For~\ref{item:Ln-4}, we have
  $$L_n[(\alpha)]
  \subrel{\hbox{\scriptsize\ref{item:Ln-1b}}}{=}
  L_n[E_n[(\alpha)]]
  =\bigcup_{m\geq n}L_n[\alpha^m]
  \subrel{\hbox{\scriptsize\ref{item:Ln-2}}}{=}
  \bigcup_{m\geq n}L_n[\alpha]^m
  =L_n[\alpha]^*L_n[\alpha]^n.\popQED
  $$
\end{proof}

In case $\alpha$ is a rank ($i+1$) $\omega$-term in normal form, the
elements of $E_1[\alpha]$ are precisely McCammond's ``rank~$i$
expansions of $\alpha$''. Since Lemma~10.7 of~\cite{McCammond:1999a}
states that every such rank~$i$ expansion of $\alpha$ remains in
normal form and since $E_1[\alpha]\supseteq E_2[\alpha]\supseteq
E_3[\alpha]\supseteq\cdots$, we obtain the following result.

\begin{Lemma}
  \label{l:McCammond-Lemma-10.7}
  If $\alpha$ is an $\omega$-term in normal form, then all $\omega$-terms of
  $E_n^k[\alpha]$ for $n,k\geq 1$ are also in normal form.\qed
\end{Lemma}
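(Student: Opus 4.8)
Lemma~\ref{l:McCammond-Lemma-10.7}: if $\alpha$ is an $\omega$-term in normal form, then every $\omega$-term in $E_n^k[\alpha]$ (for $n,k\geq 1$) is also in normal form.

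**Plan.** The proof reduces, by a double induction, to a single key fact already quoted in the text: McCammond's Lemma~10.7 of~\cite{McCammond:1999a}, which asserts that each \emph{rank~$i$ expansion} of a rank~$(i+1)$ normal form is again in normal form. So the first step is to make precise the relationship between our operator $E_n$ and McCammond's rank~$i$ expansions. If $\alpha$ is a normal form of rank $i+1$, written as in the definition $\alpha=\alpha_0(\beta_1)\alpha_1\cdots(\beta_n)\alpha_n$ with each $\beta_k$ a Lyndon word of rank~$i$, then $E_1[\alpha]$ consists exactly of the $\omega$-terms $\alpha_0\beta_1^{m_1}\alpha_1\cdots\beta_n^{m_n}\alpha_n$ with all $m_k\geq 1$, which is precisely the set of rank~$i$ expansions of $\alpha$ in McCammond's sense; hence by his Lemma~10.7 every element of $E_1[\alpha]$ is in normal form (of rank at most $i$). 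For $\alpha\in X^*$ (rank $0$) the claim is trivial since $E_n[\alpha]=\{\alpha\}$. Since $E_n[\alpha]\subseteq E_1[\alpha]$ for every $n\geq 1$, it follows that for any normal form $\alpha$ and any $n\geq 1$, every $\omega$-term in $E_n[\alpha]$ is again in normal form. This handles the case $k=1$.

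**The induction.** Now I would argue by induction on $k$. The case $k=0$ is vacuous ($E_n^0[\alpha]=\{\alpha\}$) and $k=1$ was just treated. Suppose the statement holds for some $k\geq 1$, and let $\gamma\in E_n^{k+1}[\alpha]=E_n\bigl[E_n^k[\alpha]\bigr]$. Then there is some $\delta\in E_n^k[\alpha]$ with $\gamma\in E_n[\delta]$. By the induction hypothesis $\delta$ is in normal form, and then by the $k=1$ case applied to $\delta$ we conclude that $\gamma$ is in normal form. This closes the induction and proves the lemma for all $n,k\geq 1$.

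**Where the work really sits.** Essentially all the mathematical content is imported from McCammond's Lemma~10.7; our contribution here is only the bookkeeping that identifies $E_1[\alpha]$ with his rank~$i$ expansions and the monotonicity $E_n[\alpha]\subseteq E_1[\alpha]$, which in turn rests on the observation (used implicitly in the definition of $E_n$) that a rank-$(i+1)$ normal form has a canonical factorization of the shape~\eqref{eq:an-omega-word} in which the $\delta_k$ are the Lyndon words $\beta_k$. The only point demanding a moment's care is the rank decrease: $E_n$ applied to a rank-$(i+1)$ term produces terms of rank at most $i$ (since the outermost $\omega$-powers enclosing maximum-rank subterms are unfolded), so repeated application terminates and is consistent with the definition $L_n[\alpha]=E_n^{\rk\alpha}[\alpha]$; but this is already spelled out in the paragraph preceding the lemma and in the proof of Lemma~\ref{l:Ln}\ref{item:Ln-1b}, so no further argument is needed. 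I expect no genuine obstacle — the proof is a two-line induction on top of a cited result — and I would present it essentially as above.
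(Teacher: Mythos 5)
Your proposal is correct and follows essentially the same route as the paper: identify $E_1[\alpha]$ with McCammond's rank~$i$ expansions, invoke his Lemma~10.7, use the inclusion $E_n[\alpha]\subseteq E_1[\alpha]$, and iterate over $k$. The paper leaves the induction on $k$ implicit where you spell it out, but there is no substantive difference.
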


We now associate to each term $\alpha$ a parameter $\mu[\alpha]$ playing an
important role in this paper. First define the \emph{length} of an
$\omega$-term $\alpha$ as the length of the corresponding well-parenthesized
word over $Y$, and denote it $|\alpha|$. For an $\omega$-term $\alpha$ as
in~\eqref{eq:an-omega-word}, the factors of $\alpha$ of the form
$(\delta_j)\gamma_j(\delta_{j+1})$ are called \emph{crucial portions} of $\alpha$.

\begin{Def}
  Let $\alpha$ be an $\omega$-term. In case $\alpha\in X^+$, let
  $\mu[\alpha]=0$. Otherwise,~let
  $$\mu[\alpha]=2^{\rk \alpha}\max\{|\beta|:
  \beta\text{ is a crucial portion of $\alpha^2$}\}.$$
\end{Def}

It is important to point out the following simple observation.

\begin{Lemma}
  \label{l:mu-vs-expansions}
  If $\alpha$ is an $\omega$-term and $\bar \alpha\in E_n[\alpha]$, then $\mu[\bar \alpha]\leq\mu[\alpha]$.
\end{Lemma}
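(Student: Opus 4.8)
The plan is to analyze how the crucial portions of $\bar\alpha^2$ relate to those of $\alpha^2$, keeping track of the rank. Write $\alpha=\gamma_0(\delta_1)\gamma_1\cdots(\delta_r)\gamma_r$ as in~\eqref{eq:an-omega-word}, with $\rk\alpha=i+1$, so that the $\delta_k$ have rank $i$ and each $\bar\alpha\in E_n[\alpha]$ is obtained by substituting each $(\delta_k)$ by a power $\delta_k^{n_k}$ with $n_k\geq n$. If $\alpha\in X^+$ the statement is trivial since then $E_n[\alpha]=\{\alpha\}$ and $\mu[\alpha]=0$; so assume $r\geq1$. The first observation is that $\rk\bar\alpha\leq\rk\alpha$, in fact $\rk\bar\alpha=i$ unless $\alpha$ already had some inner structure of rank $i$ in the $\gamma_j$'s, in which case still $\rk\bar\alpha\leq i<i+1$. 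Hence the factor $2^{\rk\bar\alpha}$ in $\mu[\bar\alpha]$ is at most $2^{\rk\alpha-1}=\tfrac12\,2^{\rk\alpha}$, giving us a factor of $2$ to spend.

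Next I would bound the length of a crucial portion of $\bar\alpha^2$ in terms of the crucial portions of $\alpha^2$. A crucial portion of $\bar\alpha^2$ is a factor $(\delta')\gamma'(\delta'')$ where $(\delta')$ and $(\delta'')$ are consecutive outermost parentheses of $\bar\alpha^2$ and $\gamma'$ is the word strictly between them. Since $\bar\alpha$ is obtained from $\alpha$ by replacing subterms $(\delta_k)$ of rank $i$ by powers $\delta_k^{n_k}$, whose internal parentheses have rank $\le i-1$, the outermost parentheses of $\bar\alpha^2$ that enclose subterms of maximum rank $\rk\bar\alpha$ come from parentheses sitting inside the $\gamma_j$'s (or inside the $\delta_k$'s). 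The key point is that between two such consecutive outermost parentheses in $\bar\alpha^2$, the material $\gamma'$ is a factor of a word of the form $u\,\delta_k^{n_k}\,v$ or $u\,\delta_k^{n_k}\,\gamma_k\,\delta_{k+1}^{n_{k+1}}\,v$ coming from one crucial portion $(\delta_k)\gamma_k(\delta_{k+1})$ of $\alpha^2$; but $\gamma'$ contains no occurrence of a maximum-rank parenthesis, so it cannot swallow a whole period $\delta_k$ together with its inner maximum-rank parentheses — more precisely, $\gamma'$ is confined to lie within a bounded window around the crucial portion of $\alpha^2$ it came from. A careful bookkeeping shows $|(\delta')\gamma'(\delta'')|\le 2\max\{|\beta|:\beta\text{ crucial portion of }\alpha^2\}$, the factor $2$ accounting for the fact that a crucial portion of $\bar\alpha^2$ may straddle the junction of the two copies of $\bar\alpha$ and thus be covered by at most two crucial portions of $\alpha^2$ (using that $\delta_k^2\preceq\delta_k^{n_k}$ records the relevant prefixes/suffixes already present in $\alpha^2$).

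Combining the two estimates:
$$
\mu[\bar\alpha]=2^{\rk\bar\alpha}\max_{\beta'}|\beta'|
\le 2^{\rk\alpha-1}\cdot 2\max_{\beta}|\beta|
=2^{\rk\alpha}\max_{\beta}|\beta|=\mu[\alpha],
$$
where $\beta'$ ranges over crucial portions of $\bar\alpha^2$ and $\beta$ over crucial portions of $\alpha^2$. If $\bar\alpha\in X^+$ then $\mu[\bar\alpha]=0\le\mu[\alpha]$ trivially, so the claim holds in all cases.

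The main obstacle is the combinatorial bookkeeping in the middle step: precisely delimiting how far a crucial portion of $\bar\alpha^2$ can extend once the $\omega$-powers have been unfolded into long powers $\delta_k^{n_k}$, and checking that a crucial portion of $\bar\alpha^2$ is always contained in (the unfolding of) a single crucial portion of $\alpha^2$, except at the one seam between the two copies of $\bar\alpha$, where two are needed. The rank drop giving the spare factor $2$ is what makes the inequality come out cleanly.
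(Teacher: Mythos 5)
Your overall frame is the same as the paper's: the rank drop $\rk{\bar\alpha}=\rk\alpha-1$ contributes a factor $\tfrac12$, and one then needs only that every crucial portion of $\bar\alpha^2$ has length at most $2\max\{|\beta|\}$ over crucial portions $\beta$ of $\alpha^2$. But that length bound is precisely the content of the lemma, and you do not prove it: you defer it to ``a careful bookkeeping'' and, worse, the justification you sketch for the factor~$2$ points at the wrong place. You attribute it to crucial portions of $\bar\alpha^2$ straddling the seam between the two copies of $\bar\alpha$; in fact the seam costs nothing, because $\mu$ is defined via crucial portions of $\alpha^2$ (not of $\alpha$), so the seam of $\bar\alpha^2$ is covered by the single crucial portion $(\delta_r)\gamma_r\gamma_0(\delta_1)$ of $\alpha^2$ and yields $|\bar\beta|\leq|\beta|$ there. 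Your claim that a crucial portion of $\bar\alpha^2$ is ``always contained in (the unfolding of) a single crucial portion of $\alpha^2$, except at the one seam, where two are needed'' is therefore not where the difficulty lies, and read as a length bound it is false.

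The case that actually forces the factor~$2$ is a crucial portion $\bar\beta$ of $\bar\alpha^2$ lying entirely inside one power $\delta_k^{n_k}$. Since each copy of $\delta_k$ contains at least one outermost parenthesized subterm of maximal rank of $\bar\alpha^2$, the interior of $\bar\beta$ cannot contain a whole copy of $\delta_k$, so either $\bar\beta$ is a factor of $\delta_k\delta_k$ (two consecutive copies), or it exits the power and is a factor of $\delta_k\gamma_k\delta_{k+1}$. In the second case $|\bar\beta|\leq|(\delta_k)\gamma_k(\delta_{k+1})|$ directly. In the first case one only gets $|\bar\beta|\leq 2|\delta_k|\leq 2|\beta|$ for any crucial portion $\beta$ of $\alpha^2$ containing $(\delta_k)$ --- and $2|\delta_k|$ genuinely can exceed $|(\delta_k)\gamma_k(\delta_{k+1})|$ when $\delta_k$ is much longer than $\gamma_k\delta_{k+1}$, so no single crucial portion of $\alpha^2$ bounds $|\bar\beta|$ without the factor~$2$. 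This two-case split (factor of $\delta\gamma\delta'$ for a crucial portion $(\delta)\gamma(\delta')$ of $\alpha^2$, versus factor of $\delta\delta$ for a maximal-rank $(\delta)$) is the whole proof in the paper; supplying it would close your gap.
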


\begin{proof}
  The statement is clear if $\rk\alpha\leq1$. Otherwise, $\mu[\bar
  \alpha]=2^{\rk{\bar \alpha}}|\bar \beta|$ for some crucial portion
  $\bar \beta$ of $\bar \alpha^2$. Since
  $2^{\rk{\alpha}}=2\cdot2^{\rk{\bar \alpha}}$, it suffices to show
  that there exists a crucial portion $\beta$ of $\alpha^2$ such that
  $|\bar \beta|\leq2|\beta|$. Since $\bar \alpha^2\in E_n(\alpha^2)$
  by Lemma~\ref{l:Ln}, $\bar \beta$ is a factor of either some
  $\delta\gamma\delta'$ where $(\delta)\gamma(\delta')$ is a crucial
  portion of $\alpha^2$, or of some $\delta\delta$ where $(\delta)$ is
  a factor of $\alpha$ of maximum rank. In the first case, choose
  $\beta=(\delta)\gamma(\delta')$ so that $|\bar \beta|\leq|\beta|$.
  In the second one, take for $\beta$ any crucial portion of
  $\alpha^2$ involving $(\delta)$. Then $|\bar
  \beta|\leq2|\delta|\leq2|\beta|$, as required.
\end{proof}

For an $\omega$-term $\alpha$ of positive rank, we distinguish the innermost,
rank~1, parentheses as new letters $\op$ and $\cl$. We extend the
ordering over the enlarged alphabet $X\cup\{{\op},{\cl}\}$ by letting
${\op}<x<{\cl}$ ($x\in X$). Under this interpretation, we view $\alpha$ as an
$\omega$-term over $X\cup\{{\op},{\cl}\}$, denoted $\freeze \alpha$ and called the
\emph{freeze} of~$\alpha$.\label{def:freeze}
\begin{Remark}
  \label{remark:freezed-parentheses}
  The freeze~$\freeze \alpha$ of an $\omega$-term $\alpha$ satisfies
  the relations $\rk{\freeze \alpha}=\rk \alpha-1$, and $\mu[\freeze
  \alpha]\leq\mu[\alpha]/2$. Moreover, if $\alpha$ (resp.\ its crucial
  portions) is in normal form, then so is~$\freeze\alpha$ (resp.\ so
  are its crucial portions).
\end{Remark}

\section{A synchronization result}
\label{sec:combinatorial-lemmas}

We prove in this section a synchronization result for $\omega$-terms of rank~1.

\begin{Prop}
  \label{p:synchro}
  Let $\alpha=u_0(v_1) u_1\cdots (v_r) u_r$ and $\beta=z_0(t_1) z_1\cdots (t_s) z_s$ be
  two $\omega$-terms of rank~1 in normal form, and let
  $n\geq\max\{\mu[\alpha],\mu[\beta]\}$. Let
  $$w=u_0v_1^{n_1} u_1\cdots v_r^{n_r} u_r=z_0t_1^{m_1} z_1\cdots t_s^{m_s}
  z_s\in L_n[\alpha]\cap L_n[\beta].$$ Then $r=s$, and for all $i$, $u_i=z_i$,
  $n_i=m_i$ and $v_i=t_i$. In particular, $\alpha=\beta$.
\end{Prop}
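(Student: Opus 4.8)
The plan is to compare the two factorizations of $w$ by peeling off blocks from the left, using the normal-form conditions and Fine–Wilf's Theorem to force the factorizations to match step by step. Write $w=u_0v_1^{n_1}u_1\cdots v_r^{n_r}u_r=z_0t_1^{m_1}z_1\cdots t_s^{m_s}z_s$. We argue by induction on $r+s$ (or on $|w|$), showing that $u_0=z_0$ and the first power blocks coincide, then stripping them off and invoking the induction hypothesis on the remaining suffix. The base case $r=0$ or $s=0$ is easy: if $r=0$ then $w=u_0\in X^*$ has a bounded length, while if $s>0$ the other side exhibits a power $t_1^{m_1}$ with $m_1\geq n\geq\mu[\beta]$, and since $n$ exceeds the relevant length bounds this is impossible unless $s=0$ as well, giving $u_0=z_0=w$.

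For the inductive step, suppose $r,s\geq 1$. The key is to locate the first occurrences of the periodic blocks $v_1^{n_1}$ and $t_1^{m_1}$ inside $w$ and show they are synchronized. Since $n$ is large (at least $\mu[\alpha]$ and $\mu[\beta]$, hence larger than $2^{\rk\alpha}$ times any crucial portion length, and in particular larger than $|u_0|,|u_1|,|v_1|$ and $|z_0|,|z_1|,|t_1|$), the blocks $v_1^{n_1}$ and $t_1^{m_1}$ are long enough that they must overlap in $w$: there is no room for $u_0$ (resp. $z_0$) to be so long that $v_1^{n_1}$ starts only after $t_1^{m_1}$ has ended, or vice versa. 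Once they overlap, they overlap on at least $|v_1|+|t_1|$ positions (again because the exponents are huge), so by Fine–Wilf's Theorem $v_1$ and $t_1$ are powers of a common word; as both are Lyndon words of rank $1$ (condition \ref{item:rank-i-1} of the normal form), primitivity forces $v_1=t_1$. Call this common primitive word $v$.

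Now that $v_1=t_1=v$, both prefixes $u_0v^{n_1}$ and $z_0v^{m_1}$ are prefixes of $w$ whose ``periodic parts'' line up; the remaining task is to show $u_0=z_0$ (whence $n_1=m_1$ follows by comparing lengths up to the point where the period is broken, using conditions \ref{item:rank-i-2} and \ref{item:rank-i-4}). Here is where the normal-form conditions do the real work: if, say, $|u_0|<|z_0|$, then $z_0=u_0v^c r$ for some $c\geq 0$ and some proper prefix $r$ of $v$ (by the periodicity we just established), and one checks that this makes $z_0$ end with a prefix of a power of $t_1$ — violating condition \ref{item:rank-i-2} — unless $r$ is empty, in which case $z_0=u_0v^c$ has a suffix equal to a power of $v_1=t_1$, again contradicting \ref{item:rank-i-2} (the minimality condition \ref{item:rank-i-4} is what rules out the boundary cases $c=0$ with $u_0$ a prefix of a power of $v$). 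Symmetrically $|u_0|>|z_0|$ is impossible, so $u_0=z_0$. Then $n_1=m_1$ by looking at the first position after $u_0$ where the $v$-periodicity of $w$ breaks, and we cancel $u_0v^{n_1}$ from the left of both factorizations. The remaining equality $v_1^{-n_1}u_0^{-1}w = u_1v_2^{n_2}\cdots u_r = z_1 t_2^{m_2}\cdots z_s$ is of the same shape with smaller $r+s$, so the induction hypothesis applies and yields $r=s$, $u_i=z_i$, $n_i=m_i$, $v_i=t_i$ for all $i$. Finally $\alpha=\beta$ since a rank-$1$ normal form is determined by the sequence $(u_0,v_1,u_1,\ldots,v_r,u_r)$.

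The main obstacle I anticipate is the bookkeeping in the overlap argument: carefully justifying that the exponents being $\geq n\geq\mu[\alpha],\mu[\beta]$ really does force $v_1^{n_1}$ and $t_1^{m_1}$ to overlap on at least $|v_1|+|t_1|$ positions, and pinning down exactly which normal-form clause is violated in each boundary configuration when comparing $u_0$ with $z_0$. The role of the factor $2^{\rk\alpha}$ in $\mu[\alpha]$ and of passing to $\alpha^2$ in its definition is presumably to guarantee enough slack at the seams between consecutive blocks $v_i^{n_i}u_iv_{i+1}^{n_{i+1}}$; getting the constants to line up cleanly so that ``$n$ large'' uniformly covers every comparison is the delicate point, but it is a finite combinatorial check rather than a conceptual difficulty.
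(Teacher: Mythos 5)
Your overall plan (work left to right, use Fine--Wilf on the leading power blocks, then use the normal-form conditions to rule out misalignment at the seams) matches the shape of the paper's argument, but there is a genuine gap at the step where you determine $n_1=m_1$ and, more generally, where each power block ends. You propose to read off $n_1$ from ``the first position after $u_0$ where the $v$-periodicity of $w$ breaks.'' This fails because condition~\ref{item:rank-i-4} of the normal form does \emph{not} prevent an intermediate factor $u_i$ from beginning with a full copy of $v_i$: it only requires that deleting that copy break condition~\ref{item:rank-i-2}, which happens whenever the remainder of $u_i$ is a suffix of a power of $v_{i+1}$. For example, $(aab)\,aabb\,(ab)$ is a legitimate crucial portion in normal form, and in the corresponding factor $(aab)^{n_i}\,aabb\,(ab)^{n_{i+1}}$ of $w$ the $aab$-periodicity extends one full period beyond $v_i^{n_i}$. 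Consequently the ``break position'' does not locate the end of $v_i^{n_i}$, and comparing break positions cannot by itself yield $n_i=m_i$ when the two factorizations extend into $u_i$ and $z_i$ by different amounts. The paper resolves exactly this ambiguity by synchronizing an entire crucial portion $(t_i)z_i(t_{i+1})$ --- \emph{two} consecutive powers together with the word between them --- against the other factorization (Lemma~\ref{l:conseq-FW-2}, whose three-case analysis is the heart of the proof); the one-power-at-a-time peeling you describe cannot close without an equivalent of that lemma.

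Two smaller inaccuracies occur in the same step. First, condition~\ref{item:rank-i-2} constrains only \emph{intermediate} factors, so it says nothing about $u_0$ and $z_0$; what forbids the initial factor from ending in a copy of $v_1$ is condition~\ref{item:rank-i-4} (this is what the paper invokes to conclude $k=0$ in its first paragraph). Second, when the two overlapping powers of the common Lyndon word have a phase offset that is not a multiple of $|v_1|$ (your case $z_0=u_0v^cr$ with $r$ a proper nonempty prefix of $v$), no normal-form clause is violated; that case is excluded by the synchronization half of Lemma~\ref{l:FW}, i.e.\ by the fact that a primitive word cannot overlap itself nontrivially, which must be proved (the paper does it via the conjugacy argument inside Lemma~\ref{l:FW}) rather than inferred from ``the periodicity we just established.''
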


The remainder of this section is devoted to the proof of
Proposition~\ref{p:synchro}. For a \emph{factorization}
$w=u_0^{}v_1^{n_1} u_1^{}\cdots v_r^{n_r}u_r$ (which will be clear
from the context), we denote by $w[i]$ the word $u_0^{}v_1^{n_1}
\cdots u_{i-1}^{}v_i^{n_i}$---empty for $i=0$, by convention.

We shall use the following synchronization property:
if two powers of Lyndon words have a large common factor, then the
Lyndon words are equal, and the common factor starts in the
\emph{same} position in both~of~them.

\begin{Lemma}\label{l:FW}
  Let $u$ and $v$ be Lyndon words, and let $w$ be a factor of both a
  power of~$u$ and a power of~$v$: $u^m=xwy$ and $v^n=zwt$. If
  $|w|\geq|u|+|v|$, then $u=v$, and there is a factorization $w=w_1w_2$
  such that $xw_1,zw_1\in u^*$.
\end{Lemma}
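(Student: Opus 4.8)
The plan is to apply Fine and Wilf's Theorem to extract the equality $u=v$, and then use the Lyndon structure to pin down the synchronization. First I would observe that since $w$ occurs inside a power of $u$, it has a factorization $w = u_2 u^j u_1$ where $u_2$ is a (possibly empty) proper suffix of $u$, $u_1$ is a (possibly empty) proper prefix of $u$, and $j \geq 0$; concretely, $u_2 = $ the suffix of $u$ starting at the position where $w$ begins in $u^m$. Likewise $w = v_2' v^k v_1'$ with $v_2'$ a proper suffix of $v$ and $v_1'$ a proper prefix of $v$. The point of the hypothesis $|w| \geq |u| + |v|$ is that $w$ then contains at least one full period of $u$ and at least one full period of $v$ appearing with aligned starting points after shifting: more precisely, the conjugate $u' := u_2^{-1}\text{-shift}$, i.e. the rotation of $u$ by $|u_2|$, satisfies $w \preceq (u')^{j+2}$ and similarly $w \preceq (v')^{k+2}$ for the rotation $v'$ of $v$ by $|v_2'|$. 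Then $w$ is a common prefix of a power of $u'$ and a power of $v'$ of length at least $|u|+|v| \geq |u'|+|v'| - \gcd(|u'|,|v'|)$, so Fine and Wilf's Theorem (Proposition~\ref{prop:fine:wilf}) gives that $u'$ and $v'$ are powers of a common word; since $u',v'$ are conjugates of primitive words they are themselves primitive, hence $u' = v'$.

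Next I would recover $u = v$ and the required factorization. From $u' = v'$ we know $u$ and $v$ are conjugate primitive words; but $u$ and $v$ are both Lyndon words, and a conjugacy class of a primitive word contains a unique Lyndon word, so $u = v$. It remains to produce the factorization $w = w_1 w_2$ with $x w_1, z w_1 \in u^*$. Here $x$ is the prefix of $u^m$ before $w$, so $|x| \equiv |u_2'|\pmod{|u|}$ where I now write $w = u_2 u^j u_1$ with $u_2$ a proper suffix of $u$; thus taking $w_1$ to be the prefix of $w$ of length $|u| - |u_2|$ (interpreted as the empty word if $u_2$ is empty) makes $x w_1$ a word whose length is a multiple of $|u|$ and which is a prefix of $u^m$, hence $x w_1 \in u^*$. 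The same computation applies verbatim to the $v$-side: because $u = v$ and because Fine--Wilf forced the rotation of $v$ by $|v_2'|$ to equal the rotation of $u$ by $|u_2|$, we get $|v_2'| \equiv |u_2| \pmod{|u|}$, so with the same $w_1$ we also have $z w_1 \in v^* = u^*$. Setting $w_2 = w_1^{-1} w$ completes the factorization.

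The main obstacle I expect is bookkeeping the rotations correctly — i.e. making sure that the single word $w_1$ works simultaneously on both sides. The subtlety is that Fine--Wilf per se only says the rotations of $u$ and $v$ that are ``visible as prefixes of $w$'' coincide; one must argue carefully that this common rotation, together with primitivity and the Lyndon property, forces not only $u = v$ but also that $w$ begins at the \emph{same} offset modulo $|u|$ inside a power of $u$ as it does inside a power of $v$. This is exactly where Lemma~\ref{l:Lyndon-less-than-suffices} could be invoked as an alternative: if $u = v$ but $w$ started at offsets $p \not\equiv q \pmod{|u|}$, then comparing the two expressions for $w$ on an overlap of length $\geq |u|$ would exhibit a nonempty proper word that is both a prefix and a suffix of (a cyclic rotation reducible to) the Lyndon word $u$, contradicting Lemma~\ref{l:Lyndon-less-than-suffices}. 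I would phrase the final synchronization step through whichever of these two routes is cleanest, but the Fine--Wilf-plus-primitivity route seems most direct; the rest is routine modular arithmetic on prefix lengths.
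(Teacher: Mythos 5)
Your proof is correct and follows essentially the same route as the paper's: pass to the conjugates of $u$ and $v$ of which $w$ is a prefix of a power, apply Fine and Wilf's Theorem plus primitivity to equate those conjugates, deduce $u=v$ from uniqueness of the Lyndon representative in a conjugacy class, and use that a primitive word has pairwise distinct rotations to synchronize the two offsets (the paper phrases this last step via the commutation argument $rs=sr\Rightarrow r,s\in p^*$). One small slip: with your notation $w=u_2u^ju_1$ ($u_2$ a suffix of $u$), the correct length of $w_1$ is $|u_2|$ rather than $|u|-|u_2|$, which is what your own parenthetical (that $w_1$ is empty when $u_2$ is) already indicates.
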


\begin{proof}
  The hypothesis implies that $w$ is a prefix of both a power of a
  conjugate $\tilde u=u_2u_1$ of $u=u_1u_2$ and of a power of a conjugate
  $\tilde v=v_2v_1$ of $v=v_1v_2$. By Fine and Wilf's Theorem
  (Proposition~\ref{prop:fine:wilf}) $\tilde u$ and $\tilde v$ are
  powers of the same word. Since $\tilde u$ and $\tilde v$ are
  primitive, they are equal, hence the Lyndon words in their class,
  $u$ and $v$ respectively, are also equal.

  By symmetry, one may assume that $u_1\neq1$ and $u_1\preceq v_1$. Since $u_2u_1$ and
  $v_2v_1$ are conjugates of the same primitive word $u_1u_2=v_1v_2$,
  they are of the form $rs$ and $sr$ with $r=u_1^{-1}v_1^{}$ and
  $s=v_2u_1$. Since they are equal, we obtain $r,s\in p^*$ for some word
  $p$ by~\cite[Prop.~1.3.2]{Lothaire:1983}, and since they are
  primitive, we get $r=1$ or $s=1$, whence $u_1=v_1$ and
  $u_2=v_2$. Moreover, $x=u^ku_1$ and $z=u^\ell v_1$. Therefore,
  $w_1=u_2=v_2$ meets the requirements of the~lemma.
\end{proof}

\begin{Remark}
  Let $\alpha=u_0(v_1) u_1\cdots (v_r) u_r$ be an $\omega$-term of
  rank 1. Let $z$ be a nonempty word, and let $m\geq\mu[\alpha]$.
  Then, for each $i\in\{1,\ldots,r\}$, we have
  \begin{equation}
    \label{eq:overlap}
    |z^m|\geq|u_{i-1}u_i|+|v_iz|.
  \end{equation}
  Indeed, one may assume by symmetry that $|u_i| \geq|u_{i-1}|$. Let 
  $\beta$ be $(v_i)u_i(v_{i+1})$ if $i<r$, or $(v_r)u_ru_0(v_1)$ if $i=r$.
  Since $\beta$ is a crucial portion of $\alpha^2$, we have $m\geq\mu[\alpha]\geq2|\beta|
  \geq2|u_i|+|v_i|+1\geq|u_{i-1}v_iu_i|+1$, so
  $|z^m|\geq|z^{|u_{i-1}v_iu_i|+1}|=(|u_{i-1}v_iu_i|+1)|z|\geq|u_{i-1}u_i|+|v_iz|$.
\end{Remark}

We next consider synchronizations with one single $\omega$-power.

\begin{Lemma}
  \label{l:conseq-FW}
  Let $\alpha=u_0(v_1) u_1\cdots (v_r) u_r$ be an $\omega$-term of rank 1, whose
  crucial portions are in normal form. Let $z$ be
  a Lyndon word, and let
  \begin{equation*}
    w=u_0^{}v_1^{n_1} u_1^{}\cdots v_r^{n_r}u_r^{}\in L_n[\alpha],
    \quad\text{with }n\geq\max\{\mu[\alpha],|z|+1\}.
  \end{equation*}
  Consider a prefix of $w$ of the form $pz^m$ with $m\geq n$ such
  that, for some $i\geq1$, the following inequalities hold:
  $$\bigl|w[i-1]|\leq|p|<|w[i]\bigr|.$$
  Then $z=v_i$ and
  \begin{enumerate}
  \item\label{item:conseq-FW-1a} either there is a factorization
    $u_{i-1}=qv_i^k$ such that $p=w[i-1]q$;
  \item\label{item:conseq-FW-1b} or there exists $k$ such that
    $p=w[i-1]u_{i-1}^{}v_i^k$.
  \end{enumerate}
\end{Lemma}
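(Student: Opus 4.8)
The plan is to analyze the prefix $pz^m$ of $w$ by locating where its long power $z^m$ sits relative to the factorization $w=u_0^{}v_1^{n_1}u_1^{}\cdots v_r^{n_r}u_r^{}$. The hypothesis $|w[i-1]|\le|p|<|w[i]|$ says that $p$ ends somewhere inside the block $u_{i-1}^{}v_i^{n_i}$ (recall $w[i]=u_0^{}v_1^{n_1}\cdots u_{i-1}^{}v_i^{n_i}$ while $w[i-1]$ ends just before $u_{i-1}$), so the power $z^m$ begins inside or just after $u_{i-1}$ and must run forward through the remaining copies of $v_i$. First I would bound the length of $z^m$ from below: since $m\ge n\ge\mu[\alpha]$ and $|z|\ge 1$, the inequality \eqref{eq:overlap}, namely $|z^m|\ge|u_{i-1}u_i|+|v_iz|$, shows that $z^m$ extends past $u_{i-1}$ and well into the block $v_i^{n_i}$, in fact overlapping a power of $v_i$ on at least $|v_i|+|z|$ positions. (One has to check that $n_i$ is large enough that $v_i^{n_i}$ really has that much room to the right before $u_i$; this follows again from $n_i\ge n\ge\mu[\alpha]\ge 2|v_i|$, so $|v_i^{n_i}|\ge|u_{i-1}|+|v_i|+|z|$ after subtracting the part of $u_{i-1}$ already consumed.)

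Next I would invoke Lemma~\ref{l:FW} (the Fine–Wilf synchronization lemma) applied to the two Lyndon words $z$ and $v_i$: the word $z^m$ — more precisely a suitably long central factor of it — is a factor of a power of $z$ and simultaneously a factor of the power $v_i^{n_i}$ sitting in $w$, and the common part has length $\ge|z|+|v_i|$. Lemma~\ref{l:FW} then forces $z=v_i$ and gives a factorization $v_i=v_i'v_i''$ such that the two ``left contexts'' of the common factor — one measured inside $z^*$, one measured inside the $v_i^{n_i}$ block of $w$ — both lie in $v_i^*$ up to the cut $v_i'$. Translating the $z^*$-side: the occurrence of $z^m$ starts at a position which, relative to the start of $v_i$-block, is $\equiv -|v_i'|\pmod{|v_i|}$; translating the $w$-side: the same. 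Combining, the left endpoint $|p|$ of the power $pz^m$, measured from $|w[i-1]|$, is congruent mod $|v_i|$ to a fixed residue, which means $p$ ends exactly at a position of the form $w[i-1]q$ where $q$ is a prefix of $u_{i-1}$ that is itself a power $v_i^k$ of $v_i$ extended by... — at this point I would split into the two cases of the conclusion.

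Now for the case distinction. Either the start of $z^m$ lies inside $u_{i-1}$, i.e.\ $|w[i-1]|\le|p|<|w[i-1]u_{i-1}|$, or it lies inside the $v_i$-power, i.e.\ $|w[i-1]u_{i-1}|\le|p|<|w[i]|$. In the second case, since $z=v_i$, the power $z^m=v_i^m$ sits inside $v_i^{n_i}$ starting at a position which must be a multiple of $|v_i|$ from the start of that block (two powers of the same primitive word $v_i$ overlapping by more than $|v_i|$ positions are synchronized to a common phase, by Fine–Wilf again, or directly since $v_i$ is primitive); hence $p=w[i-1]u_{i-1}^{}v_i^k$ for some $k$, which is case~\ref{item:conseq-FW-1b}. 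In the first case, the occurrence of $z^m=v_i^m$ begins inside $u_{i-1}$ and extends into $v_i^{n_i}$; let $q=u_{i-1}'$ be the prefix of $u_{i-1}$ before that occurrence, so $p=w[i-1]q$ and $q^{-1}u_{i-1}$ is a prefix of $v_i^m$, i.e.\ a prefix of a power of $v_i$. Here is where the normal-form hypothesis on the crucial portions of $\alpha$ enters: by condition~\ref{item:rank-i-2} of the normal form, no intermediate $u_j$ is a prefix of a power of $v_j$ or a suffix of a power of $v_{j+1}$; applied suitably (and to the crucial portion $(v_i)u_i(v_{i+1})$ or its analogue wrapping around $\alpha^2$), the only way $q^{-1}u_{i-1}$ can be a prefix of a power of $v_i$ while $u_{i-1}$ itself is not is that $q^{-1}u_{i-1}$ is in fact a suffix that aligns, forcing $q$ to be a full word $v_i^k$ followed by a prefix synchronizing with the $v_i$-block, and then a short combinatorial argument (using that $v_i$ is Lyndon, via Lemma~\ref{l:Lyndon-less-than-suffices}, so that a word which is both a prefix and a suffix of a power of $v_i$ in the right way must be a genuine power) pins down $u_{i-1}=qv_i^k$ with $p=w[i-1]q$, which is case~\ref{item:conseq-FW-1a}.

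The main obstacle I expect is the bookkeeping in this last case: disentangling ``$z^m$ starts inside $u_{i-1}$'' requires carefully tracking how the occurrence of $v_i^m$ straddles the boundary between $u_{i-1}$ and $v_i^{n_i}$, and arguing — from primitivity/Lyndon-ness of $v_i$ together with the normal-form condition~\ref{item:rank-i-2} — that the straddle forces $u_{i-1}$ to decompose as $qv_i^k$ rather than merely having some random prefix that happens to be a power of $v_i$. The boundary-overlap argument (two synchronized powers of $v_i$, one from $w$'s own $v_i$-block and one from the hypothesized $z^m=v_i^m$) is the combinatorial heart; everything else (the length estimate via \eqref{eq:overlap}, the initial application of Lemma~\ref{l:FW} to get $z=v_i$) is routine. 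I would also double-check the edge case $i=r$, where the block to the right of $v_r^{n_r}$ is $u_r$ and there is no ``wrap'', but since we only need room strictly to the right up to $|w[i]|=|w[r]|$ and not past it, and $n\ge\mu[\alpha]$ already covers the crucial portion $(v_r)u_ru_0(v_1)$ of $\alpha^2$, the same estimates apply verbatim.
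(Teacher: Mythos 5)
There is a genuine gap: your opening length estimate does not establish the overlap you need. From $|w[i-1]|\le|p|<|w[i]|$ you only know that $z^m$ starts somewhere in the block $u_{i-1}v_i^{n_i}$; it may start arbitrarily close to the \emph{right} end of $v_i^{n_i}$. In that situation the overlap of $z^m$ with $v_i^{n_i}$ is bounded above by $|p^{-1}w[i]|$, which can be a single letter no matter how long $z^m$ is: the lower bound \eqref{eq:overlap} on $|z^m|$ tells you that $z^m$ runs far to the right (into $u_i$, $v_{i+1}^{n_{i+1}}$, and beyond), not that it meets $v_i^{n_i}$ on $|v_i|+|z|$ positions. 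Your parenthetical check only verifies that the block $v_i^{n_i}$ is long, which is not the issue. So in the sub-case $|p^{-1}w[i]|<|v_iz|$ you cannot apply Lemma~\ref{l:FW} to $z$ and $v_i$, and the conclusion $z=v_i$ is simply not proved. The paper's proof treats exactly this configuration separately: there $z^m$ is shown (using \eqref{eq:overlap} again, which also forces $i<r$) to overlap $v_{i+1}^{n_{i+1}}$ on at least $|v_{i+1}z|$ positions, Lemma~\ref{l:FW} then gives $z=v_{i+1}$ and $pz^k=w[i]u_i$ for some $k<m$, whence $u_i$ would be a suffix of a power of $v_{i+1}$ --- contradicting that the crucial portion $(v_i)u_i(v_{i+1})$ is in normal form. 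This is the one place in the lemma where the normal-form hypothesis is actually indispensable.

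A secondary symptom of the same problem is that you invoke condition~\ref{item:rank-i-2} and Lemma~\ref{l:Lyndon-less-than-suffices} inside your treatment of conclusion~\ref{item:conseq-FW-1a}, where nothing of the sort is required. Once $z=v_i$ is known and the powers $z^m$ and $v_i^{n_i}$ are synchronized by Lemma~\ref{l:FW} (its factorization of the common factor places a common $v_i$-boundary in both powers), the segment of $w$ between positions $|p|$ and $|w[i-1]u_{i-1}|$ is a power of $v_i$ whenever $|p|<|w[i-1]u_{i-1}|$, which is precisely conclusion~\ref{item:conseq-FW-1a} with $p=w[i-1]q$ and $u_{i-1}=qv_i^k$; otherwise one reads off conclusion~\ref{item:conseq-FW-1b}. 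The dichotomy is immediate from the synchronization, with no appeal to the normal form. That your argument uses the normal-form hypothesis only where it is dispensable, and omits the case where it is needed, is the concrete sign that the proof as proposed does not go through.
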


\begin{proof}
  Let $x=p^{-1}w[i]$. We claim that if $|x|\geq|v_iz|$, then $z^m$ and
  $v_i^{n_i}$ overlap on $|v_iz|$ positions. Suppose first that
  $|x|\geq|z^m|$. Then $z^m$ is a prefix of~$x$, which in turn is a
  suffix of $u_{i-1}^{}v_i^{n_i}$, so indeed $z^m$ and $v_i^{n_i}$
  overlap on $|z^m|-|u_{i-1}| \geq|v_iz|$ positions,
  by~\eqref{eq:overlap}. Consider next the case $|v_iz|\leq|x|<|z^m|$.
  Since
  $$\bigl|v_i^{n_i}\bigr|\geq\bigl|v_i^n\bigr|\geq\bigl|v_i^{|z|+1}\bigr|=(|z|+1)|v_i|\geq|v_iz|,$$
  one can consider the suffix $u$ of $v_i^{n_i}$ of length
  $|v_iz|$. Since $x$ and $v_i^{n_i}$ are suffixes of the same word
  and $|x| \geq|v_iz|$, $u$ is a suffix of $x$. Since $x$ is a prefix of
  $z^m$, $u$ is a factor of $z^m$. This proves the claim, so by
  Lemma~\ref{l:FW} applied to $v_i^{n_i}$ and $z^m$, we conclude that
  $z=v_i$, and that \ref{item:conseq-FW-1a} or~\ref{item:conseq-FW-1b}
  hold, depending on whether or not we
  have~$|p|<\big|w[i-1]u_{i-1}\big|$.

  Finally, assume that $|x|<|v_iz|$. From~\eqref{eq:overlap}, we get
  $|z^m|>|xu_i|$ and so $i<r$. Hence, using $m\geq\mu[\alpha]\geq2|v_iu_iv_{i+1}|
  \geq|v_iu_iv_{i+1}|+2$,
  \begin{equation*}
    |z^m|-|xu_i|
    \geq\bigl|z^{|v_iu_iv_{i+1}|+2}\bigr|-|xu_i|
    \geq|v_iu_iv_{i+1}|+2|z|-|xu_i|
    >|v_{i+1}z|.
  \end{equation*}
  Therefore, $z^m$ and $v_{i+1}^{n_{i+1}}$ have a common factor of
  length at least $|v_{i+1}z|$. By Lemma~\ref{l:FW} again, we have
  $z=v_{i+1}$ and $pz^k=w[i]u_i$ for some $k$ such that $1\leq k< m$.
  Since $|p|<\big|w[i]\big|$, it follows that $u_i$ is a suffix of
  $z^k=v_{i+1}^k$, contradicting the hypothesis that $v_i^\omega
  u_i^{}v_{i+1}^\omega$ is a crucial portion in normal form. This
  concludes the proof of the lemma.
\end{proof}

We now develop the inductive argument in order to prove
Proposition~\ref{p:synchro}.

\begin{Lemma}\label{l:conseq-FW-2}
  Let $\alpha=u_0(v_1) u_1\cdots (v_r) u_r$ and  $\beta=(z_1) y(z_2)$ be $\omega$-terms of rank 1 whose
  crucial portions are in normal form. Let
  \begin{equation*}
    w=u^{}_0v_1^{n_1} u^{}_1\cdots v_r^{n_r}u^{}_r\in L_n[\alpha], \text{\quad with }n\geq\max\{\mu[\alpha],\mu[\beta]\}.
  \end{equation*}
  If there is a prefix of $w$ of the form $pz_1^{m_1}yz_2^{m_2}$ with
  $m_1,m_2\geq n$ and
  \begin{equation}
    \label{eq:interval}
    \bigl|w[i-1]\bigr|\leq|p|<\bigl|w[i]\bigr|,
  \end{equation}
  then $z_1=v_i$, $i<r$, $y=u_i$, $z_2=v_{i+1}$, and $pz_1^{m_1}=w[i]$.
\end{Lemma}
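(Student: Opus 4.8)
The plan is to apply Lemma~\ref{l:conseq-FW} twice, once for each of the two nested $\omega$-powers $z_1$ and $z_2$ appearing in the prefix $pz_1^{m_1}yz_2^{m_2}$, and then to match the resulting data against the structure of $\alpha$, using the normal-form hypotheses to discard the degenerate cases. First I would observe that, since $\beta=(z_1)y(z_2)$ is in normal form with its crucial portions in normal form, $z_1$ and $z_2$ are Lyndon words of rank~$0$ (i.e.\ ordinary words), and $n\geq\mu[\beta]\geq|z_1|+1$ and $n\geq|z_2|+1$, so Lemma~\ref{l:conseq-FW} is applicable to each of them with the word $\alpha$.

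Applying Lemma~\ref{l:conseq-FW} to the prefix $pz_1^{m_1}$, with $i$ the index singled out by~\eqref{eq:interval}, yields $z_1=v_i$ together with one of the two alternatives: either $u_{i-1}=qv_i^k$ and $p=w[i-1]q$, or $p=w[i-1]u_{i-1}v_i^k$ for some $k$. I claim the first alternative is impossible: if $p=w[i-1]q$ with $u_{i-1}=qv_i^k$ and $q$ a proper prefix of $u_{i-1}$ (the case $q=u_{i-1}$ being subsumed by the second alternative with $k=0$), then the prefix $pz_1^{m_1}=w[i-1]q v_i^{m_1}$ would force $qv_i^{m_1}$ to be a prefix of $u_{i-1}v_i^{n_i}=qv_i^{k}v_i^{n_i}=qv_i^{k+n_i}$, so $m_1\leq k+n_i$; but then $w[i]=w[i-1]u_{i-1}v_i^{n_i}=w[i-1]qv_i^{k+n_i}$, and since $yz_2^{m_2}$ must follow $pz_1^{m_1}$ inside the prefix while $|p|<|w[i]|$ by~\eqref{eq:interval}, the word $y z_2^{m_2}$ would have to begin strictly inside the block $v_i^{n_i}$; for $m_2\geq n\geq\mu[\beta]$ large this makes $z_2^{m_2}$ and $v_i^{n_i}$ share a factor of length $\geq|v_iz_2|$, so by Lemma~\ref{l:FW} we get $z_2=v_i=z_1$, contradicting that $(z_1)y(z_2)=(z_1)y(z_1)$ is \emph{not} in normal form (rule~3 applies). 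Hence the second alternative holds: $p=w[i-1]u_{i-1}v_i^k$, and then $pz_1^{m_1}=w[i-1]u_{i-1}v_i^{k+m_1}$ must be a prefix of $w[i-1]u_{i-1}v_i^{n_i}u_i\cdots$, so $k+m_1\leq n_i$; moreover if $k+m_1<n_i$ the same factor argument as above applied to the start of $yz_2^{m_2}$ inside $v_i^{n_i}$ again yields $z_2=v_i=z_1$, excluded. Therefore $k=0$, $m_1=n_i$, and $pz_1^{m_1}=w[i-1]u_{i-1}v_i^{n_i}=w[i]$, giving immediately $p=w[i-1]u_{i-1}$ and $z_1=v_i$.

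It remains to locate $y z_2^{m_2}$. Since $pz_1^{m_1}=w[i]$, the continuation $yz_2^{m_2}$ is a prefix of $u_i v_{i+1}^{n_{i+1}}u_{i+1}\cdots u_r$ (if $i<r$) or of $u_r$ (if $i=r$). The latter is impossible: $u_r$ is a plain word while $m_2\geq n\geq 1$ would force $|y z_2^{m_2}|\leq|u_r|$ and, applying Lemma~\ref{l:conseq-FW} to the prefix $(pz_1^{m_1})z_2^{m_2}=w[i']\cdots z_2^{m_2}$ with the factorization of $w$ — here the relevant index is $i+1$ — forces $z_2=v_{i+1}$, so $i<r$ after all. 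Thus $i<r$. Now apply Lemma~\ref{l:conseq-FW} once more, to the prefix $p' z_2^{m_2}$ where $p'=pz_1^{m_1}y=w[i]y$: the singled-out index is $i+1$ since $|w[i]|\leq|w[i]y|=|p'|$ and we must have $|p'|<|w[i+1]|$ (otherwise $z_2^{m_2}$ would start at or after position $|w[i+1]|$, and one repeats the factor argument to get $z_2=v_{i+2}$ — but we can instead argue directly that $y$ cannot swallow the whole block $v_{i+1}^{n_{i+1}}$, since $v_i^\omega u_i v_{i+1}^\omega=z_1^\omega y z_2^\omega$ would then degenerate). Lemma~\ref{l:conseq-FW} gives $z_2=v_{i+1}$ and either $u_i=q'v_{i+1}^{k'}$ with $p'=w[i]q'$, forcing $y=q'$ a proper prefix of $u_i$ with $u_i=yv_{i+1}^{k'}$ — but then $v_i^\omega u_i v_{i+1}^\omega=v_i^\omega y v_{i+1}^{k'}v_{i+1}^\omega=v_i^\omega y v_{i+1}^\omega$ is not in normal form (rule $4R$ applies to the tail $v_{i+1}^{k'}$), contradiction — or $p'=w[i]u_i v_{i+1}^{k'}$, i.e.\ $w[i]y=w[i]u_iv_{i+1}^{k'}$, so $y=u_iv_{i+1}^{k'}$; and again $k'>0$ would violate normal form of $v_i^\omega u_i v_{i+1}^\omega=z_1^\omega y z_2^\omega$, so $k'=0$ and $y=u_i$. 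Collecting: $z_1=v_i$, $i<r$, $y=u_i$, $z_2=v_{i+1}$, and $pz_1^{m_1}=w[i]$, which is exactly the statement.

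**Main obstacle.** The delicate point is the bookkeeping of \emph{which} block of $w$ the second power $z_2^{m_2}$ lands in and, in each branch, producing a genuine contradiction with a \emph{specific} normal-form rule (rule~3 for $z_1=z_2$, rules $4L/4R$ for a stray prefix/suffix $v^{k}$). I expect that making the index-selection for the second application of Lemma~\ref{l:conseq-FW} airtight — i.e.\ proving $|w[i]y|<|w[i+1]|$ rather than assuming it — is where the real work lies; it hinges on the fact that $y=u_i$ is \emph{not} a suffix of a power of $v_{i+1}$ (part of the crucial portion $v_i^\omega u_i v_{i+1}^\omega$ being in normal form), together with the overlap bound~\eqref{eq:overlap} guaranteeing that $z_2^{m_2}$ is long enough to pin down its Lyndon root via Lemma~\ref{l:FW}.
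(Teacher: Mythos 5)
Your overall strategy --- first pin down $pz_1^{m_1}=w[i]$ by ruling out alternative \ref{item:conseq-FW-1a} of Lemma~\ref{l:conseq-FW}, then locate $y$ and $z_2$ --- has a genuine gap at its first step. Alternative \ref{item:conseq-FW-1a} with $q$ a proper prefix of $u_{i-1}$ and $k>0$ is \emph{not} impossible: it is fully compatible with the conclusion of the lemma when $m_1=k+n_i$, in which case $pz_1^{m_1}$ ends exactly at $w[i]$ and $y$ begins there, not ``strictly inside the block $v_i^{n_i}$''. (Concretely, take the paper's normal form $b(ab)abaa(a)aaab(aab)$: here $u_1=abaa=q\,v_2^2$ with $q=ab$, and $p=w[1]\,ab$, $z_1=a$, $m_1=2+n_2$ satisfies all hypotheses and all conclusions.) Your derivation of a contradiction breaks down in several further ways. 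First, $m_1\leq k+n_i$ is asserted without proof; the normal form conditions permit $u_i$ to begin with a power of $v_i$ (again see $u_2=aaab=a\cdot aab$ in the example above), so $z_1^{m_1}$ may extend past $w[i]$ into $u_i$ --- this is exactly the second family of patterns ($pz_1^{m_1}\succeq w[i]$) in the paper's Figure~\ref{fig:l:conseq-FW-2-b}, which your argument never meets. Second, even when $y$ does start strictly inside $v_i^{n_i}$, it may extend past the end of that block, so that $z_2^{m_2}$ overlaps $v_{i+1}^{n_{i+1}}$ rather than $v_i^{n_i}$ and Lemma~\ref{l:FW} yields $z_2=v_{i+1}$ --- no contradiction at all. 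Third, even where your overlap argument does apply, $z_1=z_2$ is not by itself a contradiction: rule~3 only applies to $(\alpha)(\alpha)$ with empty middle, and a term such as $(a)b(a)$ satisfies all of conditions \ref{item:rank-i-1}--\ref{item:rank-i-4}; the actual contradiction in that configuration is that $y$, sandwiched between two synchronized powers of $v_i$, would be a power of $v_i=z_1$, violating condition~\ref{item:rank-i-2} for $\beta$.

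The cases you discard are precisely the heart of the matter. The paper does not attempt to establish $pz_1^{m_1}=w[i]$ first; instead it allows $pz_1^{m_1}$ to end $k$ copies of $v_i$ before or after $w[i]$, shows by overlap arguments and Lemma~\ref{l:FW} that $z_2=v_{i+1}$ and that $y=v_i^kx$, $u_i=xv_{i+1}^\ell$ (or $y=v_i^ku_iv_{i+1}^\ell$), and only then forces $k=\ell=0$ by playing the minimality condition~\ref{item:rank-i-4} of normal forms for $(v_i)x(v_{i+1})$ against the fact that both $(z_1)y(z_2)$ and $(v_i)u_i(v_{i+1})$ are in normal form. Your second stage (the two normal-form arguments killing the stray exponents $k'$ around $y$) is essentially this mechanism and is sound in spirit, but it is invoked only after the unjustified reduction to $pz_1^{m_1}=w[i]$; the exponent on the $z_1$-side must be eliminated by the same normal-form minimality argument, not by the overlap argument you propose.
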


\begin{proof}
  Lemma~\ref{l:conseq-FW} shows that $z_1=v_i$. We first assume
  $pz_1^{m_1}\preceq w[i]$. If $pz_1^{m_1}y\preceq w[i]$ (Case~$(a)$
  of Fig.~\ref{fig:l:conseq-FW-2-a}), then $y$ would be a prefix of a
  power of $v_i=z_1$, which is impossible since $\beta$ is in normal
  form. Hence $pz_1^{m_1}\preceq w[i]\prec pz_1^{m_1}y$, so $y$ and
  $u_i$ overlap. Consider the cases $pz_1^{m_1}y\preceq w[i]u_i$ and
  $pz_1^{m_1}y\succ w[i]u_i$ (Cases~$(b)$ and $(c)$ of
  Fig.~\ref{fig:l:conseq-FW-2-a}, in which the references to $v_{i+1}$
  underneath the straight line are justified below).
  \begin{figure}[ht]
    \centering
    \unitlength=.39mm
    \begin{picture}(150,60)(70,-45)
      \small
      \put(67.5,0){\line(1,0){157}}
      \curvel(80,-10,-20,130,-10,$z_1^{m_1}$,r)
      \curvel(130,-10,-20,160,0,$y$,r)
      \curvel(160,-10,-20,200,-10,$z_2^{m_2}$,r)
      \curvel(110,0,10,180,0,$v_i^{n_i}$,l)
      \straight(110,130,r,$v_i^{k}$,1)
      \node[Nframe=n](a)(140,-35){$(a)$}
    \end{picture}

    \begin{picture}(150,50)(70,-32)
      \small
      \put(65,0){\line(1,0){150}}
      \curvel(80,0,10,130,0,$v_i^{n_i}$,l)
      \curvel(130,0,10,180,0,$u_i$,l)
      \curvel(180,0,10,210,0,$v_{i+1}^{n_{i+1}}$,l)
      \curvel(70,-10,-20,110,-10,$z_1^{m_1}=v_i^{m_1}$,r)
      \curvel(110,-10,-20,150,0,$y$,r)
      \curvel(150,-10,-20,200,-10,$z_2^{m_2}=v_{i+1}^{m_2}$,r)
      \straight(110,130,r,$v_i^{k}$,1)
      \straight(130,150,r,\raise-2ex\hbox{$x$},1)
      \straight(150,180,r,$v_{i+1}^\ell$,1)
      \node[Nframe=n](b)(140,-35){$(b)$}
    \end{picture}
    \quad
    \begin{picture}(150,50)(70,-32)
      \small
      \put(67.5,0){\line(1,0){157}}
      \curvel(80,0,10,130,0,$v_i^{n_i}$,l)
      \curvel(130,0,10,160,0,$u_i$,l)
      \curvel(160,0,10,200,0,$v_{i+1}^{n_{i+1}}$,l)
      \curvel(70,-10,-20,110,-10,$z_1^{m_1}=v_i^{m_1}$,r)
      \curvel(110,-10,-25,180,0,$y$,r)
      \curvel(180,-10,-20,222,-10,$\,z_2^{m_2}=v_{i+1}^{m_2}$,r)
      \straight(110,130,r,$v_i^{k}$,1)
      \straight(160,180,r,$v_{i+1}^\ell$,1)
      \node[Nframe=n](c)(140,-35){$(c)$}
    \end{picture}
    \caption{Three factorization patterns when $pz_1^{m_1}\preceq w[i]$}
    \label{fig:l:conseq-FW-2-a}
  \end{figure}
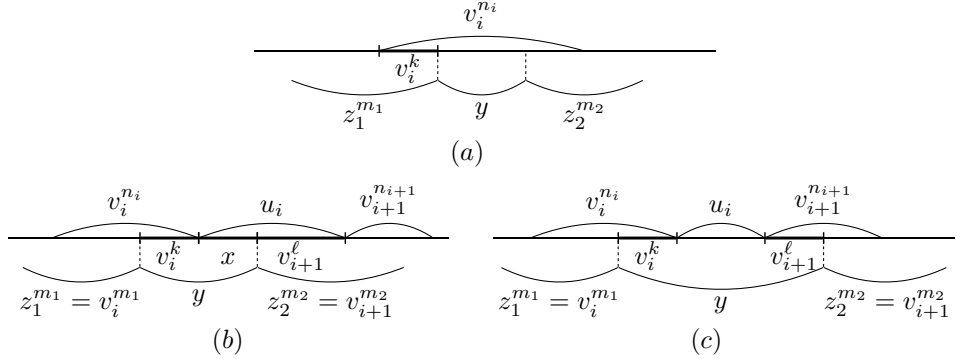
  We claim that $i>r$ and that $z_2^{m_2}$ and $v_{i+1}^{n_{i+1}}$
  overlap on $|v_{i+1}z_2|$ positions in~$w$.
  In Case~$(b)$, $z_2^{m_2}>|u_i|$ by~\eqref{eq:overlap} applied to
  $z=z_2$, hence $i<r$, and~\eqref{eq:overlap} applied at index $i+1$
  instead of $i$ yields $|z_2^{m_2}|\geq|u_i|+|v_{i+1}z_2|$, so
  $z_2^{m_2}$ and $v_{i+1}^{n_{i+1}}$ overlap on $|v_{i+1}z_2|$
  positions in~$w$. In Case~$(c)$, $i>r$ is clear from the assumption
  that $pz_1^{m_1}y\succ w[i]u_i$. Finally, we obtain
  $|v_{i+1}^{n_{i+1}}|=n_{i+1}|v_{i+1}|\geq\mu[\beta]|v_{i+1}|
  \geq2|z_1yz_2||v_{i+1}|$, whence
  $|v_{i+1}^{n_{i+1}}|-|y|\geq|z_2v_{i+1}|$. Similarly, we have
  $|z_2^{m_2}|\geq\mu[\alpha]|z_2| \geq|z_2v_{i+1}|$, so $z_2^{m_2}$
  and $v_{i+1}^{n_{i+1}}$ overlap on $|v_{i+1}z_2|$ positions.

  Therefore $z_2=v_{i+1}$ by Lemma~\ref{l:FW} and for some $k,\ell\geq0$, we
  have $y=v_i^kx$ and $u_i=xv_{i+1}^\ell$ in Case~$(b)$, where $x$ is the
  overlap between $y$ and $u_i$, and $y=v_i^ku_iv_{i+1}^\ell$ in
  Case~$(c)$. We claim that in either case, $k=\ell=0$, which proves the
  statement. In Case~$(b)$, $x$ is not a prefix of a power of $z_1$
  since, otherwise, so would be $y$, contradicting that $\beta$ is in
  normal form. On the other hand, $x$ is not a suffix of a power of
  $z_2$ since, otherwise, so would be~$u_i$, contradicting that
  $(v_i)u_i(v_{i+1})$ is in normal form. Therefore, $(v_i)x(v_{i+1})$
  is in normal form. Since $(v_i)u_i(v_{i+1})$ is also in normal form,
  we deduce in both cases, by condition~\ref{item:rank-i-4} of the
  definition of normal form, that $k=\ell=0$.

  \begin{figure}[ht]
    \centering
    \unitlength=.39mm
    \begin{picture}(150,60)(70,-45)
      \small
      \put(67.5,0){\line(1,0){157}}
      \curvel(80,-10,-20,130,-10,$v_i^{n_i}$,r)
      \curvel(130,-10,-20,160,0,$u_i$,r)
      \curvel(160,-10,-20,200,-10,$v_{i+1}^{n_{i+1}}$,r)
      \curvel(110,0,10,180,0,$z_1^{m_1}$,l)
      \straight(110,130,r,$v_i^{k}$,1)
      \node[Nframe=n](a)(140,-35){$(a)$}
    \end{picture}

    \begin{picture}(150,50)(70,-32)
      \small
      \put(65,0){\line(1,0){150}}
      \curvel(80,0,10,130,0,$z_1^{m_1}$,l)
      \curvel(130,0,10,180,0,$y$,l)
      \curvel(180,0,10,210,0,$z_2^{m_2}$,l)
      \curvel(70,-10,-20,110,-10,$v_i^{n_i}=z_1^{n_i}$,r)
      \curvel(110,-10,-20,150,0,$u_i$,r)
      \curvel(150,-10,-20,200,-10,$v_{i+1}^{n_{i+1}}=z_2^{n_{i+1}}$,r)
      \straight(110,130,r,$v_i^{k}$,1)
      \straight(150,180,r,$v_{i+1}^\ell$,1)
      \node[Nframe=n](b)(140,-35){$(b)$}
    \end{picture}
    \quad
    \begin{picture}(150,50)(70,-32)
      \small
      \put(67.5,0){\line(1,0){157}}
      \curvel(80,0,10,130,0,$z_1^{m_1}$,l)
      \curvel(130,0,10,160,0,$y$,l)
      \curvel(160,0,10,200,0,$v_{i+1}^{n_{i+1}}$,l)
      \curvel(70,-10,-20,110,-10,$v_i^{n_i}=z_1^{m_1}$,r)
      \curvel(110,-10,-25,180,0,$u_i$,r)
      \curvel(180,-10,-20,222,-10,$v_{i+1}^{m_2}$,r)
      \straight(110,130,r,$v_i^{k}$,1)
      \straight(160,180,r,$v_{i+1}^\ell$,1)
      \node[Nframe=n](c)(140,-35){$(c)$}
    \end{picture}
    \caption{Three factorization patterns when $pz_1^{m_1}\succeq w[i]$}
    \label{fig:l:conseq-FW-2-b}
  \end{figure}
  Finally, assume that $pz_1^{m_1}\succeq w[i]$. The resulting three
  factorization patterns are depicted in
  Fig.~\ref{fig:l:conseq-FW-2-b}. Note that they are in correspondence
  with the factorization patterns in Fig.~\ref{fig:l:conseq-FW-2-a}.
  The arguments presented above for the case $pz_1^{m_1}\preceq w[i]$
  therefore apply, \emph{mutatis mutandis}, to the current case.
\end{proof}

\begin{proof}[Proof of Proposition~\ref{p:synchro}]
  We have $|v_1^{n_1}|\geq|v_1|+n_1-1\geq|v_1|+2|z_0t_1|-1\geq|z_0t_1v_1|$.
  Likewise, $|t_1^{m_1}|\geq|u_0v_1t_1|$, so $v_1^{n_1}$ and $t_1^{m_1}$ overlap
  on a factor of length at least $|v_1t_1|$. By
  Lemma~\ref{l:FW}, $v_1=t_1$ and $z_0=u_0t_1^k$ or $u_0=z_0v_1^k$ for some
  $k$. Since $\alpha$ and $\beta$ are in normal form, $k=0$ by
  property~\ref{item:rank-i-4} of normal forms. Hence $u_0=z_0$.
  Suppose inductively that for $i\geq1$, we have $u_{k-1}=z_{k-1}$,
  $n_{k-1}=m_{k-1}$ and $v_k=t_k$ for all $1\leq k\leq i$.
  If $i<s$, then one can apply Lemma~\ref{l:conseq-FW-2}, since the
  portion $(t_i)z_i(t_{i+1})$ is in normal form and
  $w[i-1]\preceq z_0t_1^{m_1}z_1\cdots t_{i-1}^{m_{i-1}}z_{i-1}\prec w[i]$, where the
  notation $w[\cdot]$ refers to the first factorization, so
  that~\eqref{eq:interval} is fulfilled for the word
  $p=z_0t_1^{m_1}z_1\cdots t_{i-1}^{m_i-1}z_{i-1}$. This yields $i<r$,
  $u_i=z_i$, $n_i=m_i$ and $v_{i+1}=t_{i+1}$. The case $i<r$ is
  dual. Finally, if $i=r=s$, then we obtain
  $u_r=z_r$ by left-right symmetry, so $n_r=m_r$.
\end{proof}

\section{\texorpdfstring{The $\omega$-word problem over \pv A}{The omega-terms-word problem over A}}
\label{sec:omega-word-problem-over-A}

In this section we reveal how the languages $L_n[\alpha]$ can be used to
obtain an alternative proof of McCammond's solution of the word
problem for $\omega$-terms over \pv A. The fundamental property of the
languages $L_n[\alpha]$, whose proof is presented in the next section, is
their star-freeness under suitable hypotheses.

\begin{Thm}
  \label{t:star-free}
  Let $\alpha$ be a normal form $\omega$-term and let $n\geq\mu[\alpha]$. Then the
  language $L_n[\alpha]$ is star-free.
\end{Thm}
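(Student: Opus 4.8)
The plan is to prove Theorem~\ref{t:star-free} by induction on $\rk\alpha$, using the closure properties of the class of star-free languages (it is closed under Boolean operations and concatenation, and a language is star-free iff its syntactic monoid is aperiodic). For $\rk\alpha=0$ we have $\alpha\in X^*$ and $L_n[\alpha]=\{\alpha\}$ is trivially star-free. For the inductive step, write $\alpha=\gamma_0(\delta_1)\gamma_1\cdots(\delta_r)\gamma_r$ as in~\eqref{eq:an-omega-word}, so that by Lemma~\ref{l:Ln}\ref{item:Ln-3} and~\ref{item:Ln-4},
$$L_n[\alpha]=L_n[\gamma_0]\,\bigl(L_n[\delta_1]^*L_n[\delta_1]^n\bigr)\,L_n[\gamma_1]\cdots\bigl(L_n[\delta_r]^*L_n[\delta_r]^n\bigr)\,L_n[\gamma_r].$$
Since each $\gamma_j$ and each $\delta_k$ has rank strictly less than $\rk\alpha$, and since $\mu[\bar\alpha]\leq\mu[\alpha]\leq n$ for any subterm coming from an expansion (Lemmas~\ref{l:mu-vs-expansions} and \ref{l:McCammond-Lemma-10.7} guarantee the relevant subterms stay in normal form with $\mu\leq n$), the inductive hypothesis gives that each $L_n[\gamma_j]$ and each $L_n[\delta_k]$ is star-free. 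By closure of star-free languages under concatenation, it therefore suffices to prove that for a Lyndon word $\delta$ of rank $\rk\alpha-1$, with $\mu[\delta]\leq n$, the starred language $L_n[\delta]^*$ (equivalently $L_n[(\delta)]$) is star-free.

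This reduction to a single $\omega$-power is where the real content lies, and it is the step I expect to be the main obstacle. The idea is that $L_n[\delta]^*$ fails to be star-free only because ``counting'' concatenations of words from $L_n[\delta]$ could detect periodicity; but the synchronization results of Section~\ref{sec:combinatorial-lemmas} say that distinct factorizations into $n$-fold (or higher) expansions of the inner Lyndon words must align, so in fact membership in $L_n[\delta]^*$ is \emph{locally testable} in a suitable sense rather than genuinely counting. Concretely, I would pass to the freeze $\freeze\delta$ (Remark~\ref{remark:freezed-parentheses}), which lowers the rank and keeps normal form and $\mu\leq n$, so that the inductive hypothesis applies to $\freeze\delta$ and to its crucial portions; then Proposition~\ref{p:synchro} (applied with $\alpha=\beta=\delta$ suitably deparenthesized, using the hypothesis $n\geq\mu[\delta]$) shows that any word of $L_n[\delta]^*$ has an \emph{essentially unique} decomposition as a concatenation of blocks from $L_n[\delta]$, with the block boundaries determined by the positions of the frozen inner $\omega$-powers. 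Using this rigidity, I would express $L_n[\delta]^*$ as a Boolean combination of languages of the form $u^{-1}L_n[\delta]v^{-1}$ and of star-free languages built from $L_n[\delta]$ and $L_n[\freeze\delta]$ — all of which are star-free by the inductive hypothesis and closure properties — thereby concluding that $L_n[\delta]^*$ is star-free.

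The delicate point in executing the previous paragraph is to convert ``uniqueness of factorization'' into an \emph{automaton-free} description: one must show that a word $w$ lies in $L_n[\delta]^*$ iff it can be cut at the (uniquely determined) positions between consecutive full copies of $\delta$'s expansion, each piece lying in $L_n[\delta]$, and that this condition is expressible without a Kleene star. I would handle this by arguing that the ``first block'' of any $w\in L_n[\delta]^+$ is the \emph{longest} prefix of $w$ lying in $L_n[\delta]$ that is not itself a proper prefix of a longer such decomposition-compatible prefix — a condition captured by finitely many quotients — and then peeling off blocks one at a time, which terminates because the freeze of $\delta$ has strictly smaller rank and the number of frozen $\omega$-powers in $w$ is finite and strictly decreasing. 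Formally this is where I would invoke Lemma~\ref{l:conseq-FW} and Lemma~\ref{l:conseq-FW-2} to pin down, inside $w$, exactly where one copy of the outer power ends and the next begins, ensuring the decomposition is not merely unique but \emph{definable}. Packaging all of this, together with the routine closure-property bookkeeping, completes the induction and hence the proof of Theorem~\ref{t:star-free}.
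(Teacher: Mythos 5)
Your overall architecture --- induction on $\rk\alpha$, the decomposition of $L_n[\alpha]$ via Lemma~\ref{l:Ln}, and the identification of the star-freeness of $L_n[\delta_j]^*$ as the crux --- matches the paper's proof. However, there are two genuine gaps.

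The smaller one: you apply the induction hypothesis to the subterms $\gamma_j$ and $\delta_k$, but these need not themselves be in normal form (for instance, $\delta_1=(a)ab(b)ba$ inside the normal form term $((a)ab(b)ba)(a)ab(b)$ violates condition~\ref{item:rank-i-4} as a standalone term), so the theorem cannot be invoked for them directly. The paper circumvents this by proving the stronger claim that $L_n[\gamma_0]$, $L_n[\delta_j]$ and $L_n[\delta_j\gamma_j]$ are star-free, by an induction that passes to the doubled expansion $\alpha'=\gamma_0\delta_1\delta_1\gamma_1\cdots\delta_r\delta_r\gamma_r\in E_2[\alpha]$, which \emph{is} in normal form by condition~\ref{item:rank-i-3} and has rank $i-1$; this is also why the paper introduces the notion of \emph{circular} normal form.

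The serious one: your argument that $L_n[\delta]^*$ is star-free does not go through as sketched. Uniqueness of the factorization into blocks from $L_n[\delta]$ implies neither that the block boundaries are first-order definable nor that the star is star-free: $\{aa\}$ is star-free and every word of $(aa)^*$ factors uniquely into blocks $aa$, yet $(aa)^*$ is not star-free --- the boundary positions are exactly the even positions, which are not definable. So ``unique, hence definable'' is a non sequitur, and a \emph{finite} Boolean combination of quotients cannot encode an unbounded peeling process without further argument. The hypothesis you never exploit is the one that makes the step work: each $\delta_j$ is \emph{primitive}. The paper's Lemma~\ref{l:pump-bases} establishes star-freeness of $L_n[\delta]^*$ by applying Sch\"utzenberger's theorem in the converse direction, showing that $xy^Nz\in L_n[\delta]^*$ forces $xy^{N+1}z\in L_n[\delta]^*$ for $N$ large; the decisive combinatorial input there is Lemma~\ref{l:root} (if $z^\ell\in L_n[\delta]^k$ then $z\in L_n[\delta]^m$ for some $m\leq k$), which rests on primitivity and circular normal form via Lemma~\ref{l:base-primitive-rank-1}. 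Without an ingredient of this kind, your reduction to a single $\omega$-power cannot be completed.
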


A simpler but also important property is stated in
Lemma~\ref{l:separation} below, which follows from the synchronization
property of Proposition~\ref{p:synchro}.  For an $\omega$-term $\alpha$,
we set $E_n^*[\alpha]=\bigcup_{i\geq0}E_n^i[\alpha]$.  

\begin{Lemma}
  \label{l:separation}
  Let $\alpha$ and $\beta$ be two $\omega$-terms in normal form with $\rk \beta\geq \rk \alpha$,
  and let $n>\max\{\mu[\alpha],\mu[\beta]\}$. If $L_n[\alpha]\cap L_n[\beta]\neq\emptyset$, then $\alpha\in E_n^*[\beta]$.
\end{Lemma}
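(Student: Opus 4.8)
The idea is to induct on $\rk\beta - \rk\alpha$, using Proposition~\ref{p:synchro} after suitably ``freezing'' the innermost $\omega$-powers so that everything becomes rank~$1$. Pick a word $w\in L_n[\alpha]\cap L_n[\beta]$. If $\rk\alpha=\rk\beta=0$, then $w=\alpha=\beta$ and $\alpha\in E_n^*[\beta]$ trivially. If $\rk\alpha=\rk\beta\geq 1$, I would argue that $w$ already lies in $L_n[\alpha']\cap L_n[\beta']$ for the rank-$1$ ``skeletons'' obtained by freezing all deeper parentheses: write $\freeze\alpha$ and $\freeze\beta$ (Remark~\ref{remark:freezed-parentheses}), both in normal form with their crucial portions in normal form, and observe by Lemma~\ref{l:Ln}\ref{item:Ln-3}--\ref{item:Ln-4} that membership of $w$ in $L_n[\alpha]$ factors through membership of the correspondingly ``expanded'' word in $L_n[\freeze\alpha]$ over the enlarged alphabet. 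Since $n>\max\{\mu[\alpha],\mu[\beta]\}\geq\max\{\mu[\freeze\alpha],\mu[\freeze\beta]\}$, Proposition~\ref{p:synchro} applies and forces $\freeze\alpha=\freeze\beta$, hence the top-level factorizations of $w$ coming from $\alpha$ and $\beta$ agree block by block: each innermost $(\delta_k)$ of $\alpha$ is matched with an equal $(\delta'_k)$ of $\beta$ with equal exponents $n_k=m_k$, and equal connecting factors. Recursing into each matched pair of subterms $\delta_k,\delta'_k$ (which are Lyndon, hence in normal form, with strictly smaller rank difference equal to zero again) gives $\alpha=\beta$, so $\alpha\in E_n^*[\beta]$.

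The interesting case is $\rk\beta>\rk\alpha$. Here I would apply $E_n$ repeatedly to $\beta$: by Lemma~\ref{l:Ln}\ref{item:Ln-1b} we have $L_n[\beta]=L_n[E_n[\beta]]=\bigcup_{\bar\beta\in E_n[\beta]}L_n[\bar\beta]$, so $w\in L_n[\bar\beta]$ for some $\bar\beta\in E_n[\beta]$. By Lemma~\ref{l:McCammond-Lemma-10.7}, $\bar\beta$ is in normal form, and $\rk\bar\beta=\rk\beta-1$; moreover by Lemma~\ref{l:mu-vs-expansions}, $\mu[\bar\beta]\leq\mu[\beta]$, so still $n>\max\{\mu[\alpha],\mu[\bar\beta]\}$. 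Thus the induction hypothesis applies to the pair $(\alpha,\bar\beta)$ (the rank difference has dropped by one), yielding $\alpha\in E_n^*[\bar\beta]$. Combining with $\bar\beta\in E_n[\beta]\subseteq E_n^*[\beta]$ and the evident transitivity $E_n^*[E_n^*[\beta]]=E_n^*[\beta]$, we conclude $\alpha\in E_n^*[\beta]$, closing the induction.

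The main obstacle is the base case $\rk\alpha=\rk\beta$, specifically making the reduction to Proposition~\ref{p:synchro} fully rigorous: one must check that freezing the innermost parentheses is compatible with the expansion operator $E_n$ in the precise sense that a word $w\in X^*$ lies in $L_n[\alpha]$ if and only if the word over $X\cup\{\op,\cl\}$ obtained by re-inserting the frozen parentheses at the matched positions lies in $L_n[\freeze\alpha]$, and that this bijection of ``marked'' words respects the two factorizations of $w$ coming from $\alpha$ and $\beta$. This amounts to a careful bookkeeping argument combining Lemma~\ref{l:Ln}\ref{item:Ln-3} with the definition of $\freeze{\cdot}$ and an induction on rank; once it is in place, Proposition~\ref{p:synchro} does all the combinatorial work. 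A secondary point to verify is that the strict inequality $n>\max\{\mu[\alpha],\mu[\beta]\}$ (rather than $\geq$) is what we need when we pass to $E_n[\beta]$ and invoke the hypothesis at the next level --- but since $\mu$ does not increase under $E_n$ and under freezing, the same strict bound is preserved throughout the recursion.
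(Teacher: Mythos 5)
Your inductive step for $\rk\beta>\rk\alpha$ is sound: expanding $\beta$ once, invoking Lemmas~\ref{l:McCammond-Lemma-10.7} and~\ref{l:mu-vs-expansions} to preserve the hypotheses, and concluding by transitivity of $E_n^*$ is exactly the right reduction. The problem lies in the equal-rank case, which you correctly single out as the crux but misdiagnose as ``careful bookkeeping.'' To apply Proposition~\ref{p:synchro} to the rank-1 skeletons you need a \emph{single} word over the enlarged alphabet $X\cup\{\op,\cl\}$ lying in both $L_n[\freeze\alpha]$ and $L_n[\freeze\beta]$. What you actually have are two markings of $w$: one obtained by re-inserting the frozen parentheses according to the parse of $w$ as an element of $L_n[\alpha]$, the other according to its parse as an element of $L_n[\beta]$. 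That these two marked words coincide --- that the two parses place the frozen parentheses at the same positions of $w$ --- is precisely the synchronization you are trying to establish, so the reduction as written is circular. (A symptom of the confusion: if $\freeze\alpha=\freeze\beta$ really held, then $\alpha=\beta$ would follow immediately, since freezing only renames parentheses; there would be nothing left to ``recurse into.'')

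The paper's proof performs the same two operations in the opposite order and thereby avoids the issue. It first expands both terms all the way down to rank~1 \emph{over $X$ itself}: by definition of $L_n$ there are $\alpha_1\in E_n^{\rk\alpha-1}[\alpha]$ and $\beta_1\in E_n^{\rk\beta-1}[\beta]$ with $w\in L_n[\alpha_1]\cap L_n[\beta_1]$. Here $w$ is a genuine common word of two rank-1 normal forms, so Proposition~\ref{p:synchro} applies directly and yields $\alpha_1=\beta_1$, i.e.\ $E_n^{\rk\alpha-1}[\alpha]\cap E_n^{\rk\beta-1}[\beta]\neq\emptyset$. Only \emph{then} does it freeze one level: since freezing commutes with $E_n$, this common expansion becomes a common word of $L_n[\freeze\alpha]$ and $L_n[\freeze\beta]$, and a single induction on $\rk\alpha$ (not on the rank difference) closes the argument. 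If you wish to keep your architecture, you would have to restructure the equal-rank case as a further induction on the common rank, using the statement one rank lower to show that the two parses of $w$ already agree before any appeal to Proposition~\ref{p:synchro}; as written, the base case does not go through.
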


\begin{proof}
  Let $w\in L_n[\alpha]\cap L_n[\beta]$. We proceed by induction on $\rk \alpha=i$. If
  $i=0$, that is $\alpha\in X^+$, we have $w=\alpha$ so that $\alpha\in L_n[\beta]=E_n^{\rk
    \beta}[\beta]$. Assume next $i\geq 1$ and that the result holds for $\rk
  \alpha<i$. By definition of $L_n$ and the choice of~$w$, there exist
  $\alpha_1\in E_n^{\rk \alpha-1}[\alpha]$ and $\beta_1\in E_n^{\rk \beta-1}[\beta]$ such that $w\in
  L_n[\alpha_1]\cap L_n[\beta_1]$. By Lemma~\ref{l:McCammond-Lemma-10.7}, the
  $\omega$-terms $\alpha_1$ and $\beta_1$ are in normal form. Let $u_0(v_1) u_1\cdots
  (v_r) u_r$ and $z_0(t_1) z_1\cdots (t_s) z_s$ be the normal form
  expressions of $\alpha_1$ and $\beta_1$, respectively. We have
  $n\geq\max\{\mu[\alpha_1],\mu[\beta_1]\}$ by Lemma~\ref{l:mu-vs-expansions}. By
  Proposition~\ref{p:synchro}, it follows that $\alpha_1=\beta_1$, so
  \begin{equation}
    \label{eq:separation-1}
    E_n^{\rk \alpha-1}[\alpha]\cap E_n^{\rk \beta-1}[\beta]\neq\emptyset.
  \end{equation}
  If $i=1$, then $\alpha_1=\alpha$ so that $\alpha\in E_n^{\rk
    \beta-1}[\beta]$. If $i>1$, consider the freezes $\freeze \alpha$
  and $\freeze \beta$. Then $L_n[\freeze \alpha]\cap L_n[\freeze
  \beta]\neq\emptyset$ follows from~(\ref{eq:separation-1}), and by
  Fact~\ref{remark:freezed-parentheses}, $\freeze \alpha$ and $\freeze
  \beta$ are in normal form, $n>\max\{\mu[\freeze \alpha],\mu[\freeze
  \beta]\}$, and $\rk {\freeze \beta}\geq \rk{\freeze \alpha}=i-1$. By
  the induction hypothesis, we obtain $\freeze \alpha\in E_n^*[\freeze
  \beta]$ and, therefore, $\alpha\in E_n^*[\beta]$, which completes
  the induction step and the proof of the lemma.
\end{proof}

By raising the lower bound for $n$, we obtain a more precise result.

\begin{Thm}\label{t:separation}
  Let $\alpha$ and $\beta$ be two $\omega$-terms in normal form and
  let $n$~be an integer such that
  $n>\max\{|\alpha|,|\beta|,\mu[\alpha],\mu[\beta]\}$. If
  $L_n[\alpha]\cap L_n[\beta]\neq\emptyset$, then $\alpha=\beta$.
\end{Thm}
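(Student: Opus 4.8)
The plan is to bootstrap Theorem~\ref{t:separation} from Lemma~\ref{l:separation} by exploiting the extra hypothesis $n>\max\{|\alpha|,|\beta|\}$ to rule out any \emph{proper} membership $\alpha\in E_n^*[\beta]$. By symmetry (Lemma~\ref{l:separation} is stated for $\rk\beta\ge\rk\alpha$ but the roles of $\alpha$ and $\beta$ are interchangeable in the statement being proved) we may assume $\rk\beta\ge\rk\alpha$. Since $n>\max\{\mu[\alpha],\mu[\beta]\}$, Lemma~\ref{l:separation} applies and gives $\alpha\in E_n^*[\beta]$, i.e.\ $\alpha\in E_n^k[\beta]$ for some $k\ge0$. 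It therefore suffices to show that $E_n^k[\beta]$ contains no $\omega$-term strictly shorter than $n$ unless $k=0$ (and that $k=0$ forces $\alpha=\beta$, which is immediate since $E_n^0[\beta]=\{\beta\}$).

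The key length estimate is this: if $\bar\beta\in E_n[\beta]$ and $\rk\beta\ge1$, then $\beta$ contains at least one rank-maximal $\omega$-power $(\delta)$, which is replaced in $\bar\beta$ by $\delta^{n_j}$ with $n_j\ge n$; since $|\delta|\ge1$, we get $|\bar\beta|\ge n$. Consequently, if $k\ge1$ and $\alpha=E_n^k[\beta]$ has the additional property $\rk(E_n^{k-1}[\beta]\text{-representatives})\ge1$ at the last step, then $|\alpha|\ge n$, contradicting $n>|\alpha|$. The only way to avoid this is $k=0$, or that the rank has already dropped to $0$ before the last application of $E_n$ — but $E_n$ applied to a rank-$0$ $\omega$-term (a word) is the identity, so iterating $E_n$ on $\beta$ reaches a fixed word exactly when the rank reaches $0$, and the first time this happens the representative already has length $\ge n$ if $\rk\beta\ge1$. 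Hence $\rk\beta=0$, so $\beta\in X^+$, and since $L_n[\beta]=\{\beta\}$ meets $L_n[\alpha]$ we get $\alpha=\beta$ directly; while if $\rk\beta\ge1$ then any element of $E_n^*[\beta]\setminus\{\beta\}$ has length $\ge n>|\alpha|$, forcing $k=0$ and $\alpha=\beta$.

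Let me organize the steps in order. First, by symmetry assume $\rk\alpha\le\rk\beta$. Second, invoke Lemma~\ref{l:separation} (legitimate since $n>\max\{\mu[\alpha],\mu[\beta]\}$) to obtain $k\ge0$ with $\alpha\in E_n^k[\beta]$. Third, dispose of the case $\rk\beta=0$: then $\beta\in X^+$, $L_n[\beta]=\{\beta\}$, and $L_n[\alpha]\ni\beta$ combined with $L_n[\alpha]$ having $\alpha$ in its closure forces $\alpha=\beta$ (or, more simply, $\alpha\in E_n^k[\beta]=\{\beta\}$). Fourth, for $\rk\beta\ge1$: observe that every term in $E_n[\beta]$ has length $\ge n$, because an outermost maximal-rank $\omega$-power $(\delta)$ of $\beta$ (which exists since $\rk\beta\ge1$) gets expanded to $\delta^{n_j}$, $n_j\ge n$, contributing length at least $n$. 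If $k\ge1$ this gives $|\alpha|\ge n$ when $\rk\beta\ge1$, contradicting the hypothesis $n>|\alpha|$; so $k=0$ and $\alpha=\beta$.

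The main obstacle to watch is the bookkeeping in the fourth step: one must be careful that $E_n$ applied to an $\omega$-term of rank $\ge1$ really does produce only terms of length $\ge n$, even though $E_n$ expands only the \emph{outermost} maximal-rank parentheses. The point is that ``maximal-rank'' here means the whole rank of the term — $E_n$ acts on a rank-$i$ term by expanding the rank-$i$ $\delta_k$'s into powers $\delta_k^{n_k}$, $n_k\ge n$ — so as soon as the rank of the current representative is $\ge1$, applying $E_n$ once yields a term of length $\ge n$. Thus the first application of $E_n$ to $\beta$ (when $\rk\beta\ge1$) already blows the length past $n$, and every further application keeps it there; this is what makes $k\ge1$ impossible under $n>|\alpha|$. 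One should also double-check the symmetry reduction is clean: if instead $\rk\alpha>\rk\beta$ we simply swap the names, since the conclusion $\alpha=\beta$ and the hypotheses are symmetric.
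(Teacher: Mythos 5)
Your proof is correct and takes essentially the same route as the paper's: reduce to Lemma~\ref{l:separation} after the symmetry reduction $\rk\alpha\le\rk\beta$, then use the hypothesis $n>|\alpha|$ to show that any genuine application of $E_n$ (i.e.\ to a term of positive rank) would force $|\alpha|\ge n$, so that $\alpha\in E_n^*[\beta]$ degenerates to $\alpha=\beta$. The paper phrases the exclusion in terms of $\rk\beta>\rk\alpha$ rather than the number $k$ of expansion steps, but the underlying length argument is identical.
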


\begin{proof}
  Suppose that $\rk \alpha\leq \rk \beta$, so
  that, by Lemma~\ref{l:separation}, $\alpha\in E_n^*[\beta]$. If $\rk \beta>\rk \alpha$, it follows that $|\alpha|\geq
  n$, which contradicts the assumption on $n$. Hence we must have $\rk
  \beta=\rk \alpha$ and so $\alpha=\beta$.
\end{proof}

Combining Theorems~\ref{t:star-free} and~\ref{t:separation}, we obtain
a new proof of uniqueness of McCammond's normal form for elements
of~\omc XA.

\begin{Cor}[McCammond's solution of the $\omega$-word problem over \pv
  A~\cite{McCammond:1999a}]
  \label{c:word-problem}
  Let $\alpha$ and $\beta$ be $\omega$-terms in normal form which define the
  same pseudoword over~\pv A, that is, such that $\epsilon[\alpha]=\epsilon[\beta]$. Then $\alpha=\beta$.
\end{Cor}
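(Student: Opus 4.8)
The plan is to deduce the corollary from the two theorems just proved, using the fact that the languages $L_n[\alpha]$ all contain, in their topological closure, the pseudoword $\epsilon[\alpha]$. First I would recall the observation already made in Section~\ref{sec:term-expansions}: for every $\omega$-term $\alpha$ and every $n\geq 1$, the element $\epsilon[\alpha]$ belongs to the closure $\closv A{L_n[\alpha]}$ of $L_n[\alpha]$ in $\Om XA$. Indeed this is clear by induction on $\rk\alpha$, since $E_n[\alpha]$ is obtained by replacing each outermost $\omega$-power $(\delta)$ by powers $\delta^{n_k}$ with $n_k\geq n$, and $\delta^\omega$ is by definition the limit of $\delta^{n!}$; the general case follows by iterating, using continuity of multiplication and of the $\omega$-power in $\Om XA$.

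The next step is to choose $n$ large enough. Set $n>\max\{|\alpha|,|\beta|,\mu[\alpha],\mu[\beta]\}$, so that Theorem~\ref{t:separation} applies. By the hypothesis $\epsilon[\alpha]=\epsilon[\beta]$, this common pseudoword lies in both $\closv A{L_n[\alpha]}$ and $\closv A{L_n[\beta]}$. I now want to conclude that $L_n[\alpha]\cap L_n[\beta]\neq\emptyset$; once this is established, Theorem~\ref{t:separation} immediately yields $\alpha=\beta$ and we are done.

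The crux is therefore the implication ``$\closv A{L_n[\alpha]}\cap\closv A{L_n[\beta]}\neq\emptyset \Rightarrow L_n[\alpha]\cap L_n[\beta]\neq\emptyset$'', which is exactly where star-freeness enters. By Theorem~\ref{t:star-free}, both $L_n[\alpha]$ and $L_n[\beta]$ are star-free, hence recognized by finite aperiodic semigroups; equivalently (by Reiterman/standard profinite-language theory) a star-free language $L\subseteq X^+$ is \emph{clopen} in $\Om XA$ — more precisely $\closv AL=\pj A^{-1}(\rho^{-1}(P))$ for the syntactic \pv A-morphism, so its characteristic function is continuous, which gives $\closv A{L_1}\cap\closv A{L_2}=\closv A{L_1\cap L_2}$ for star-free $L_1,L_2$. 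Applying this with $L_1=L_n[\alpha]$, $L_2=L_n[\beta]$: since their closures meet, $\closv A{L_n[\alpha]\cap L_n[\beta]}\neq\emptyset$, so in particular $L_n[\alpha]\cap L_n[\beta]\neq\emptyset$ (a set whose closure is nonempty is nonempty).

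I expect the main obstacle to be making the topological argument of the last paragraph fully rigorous and citing it at the right level of generality: one must be careful that ``star-free implies recognized by an aperiodic semigroup'' (Schützenberger's theorem) combines correctly with the description of closures of regular languages in free profinite semigroups, so that $\closv A{\,\cdot\,}$ commutes with finite intersections on star-free languages. All the genuinely combinatorial work has already been done in Theorems~\ref{t:star-free} and~\ref{t:separation}; the remaining task is just to package the closure-of-$\epsilon[\alpha]$ remark together with this clopenness fact. A clean alternative, avoiding any explicit invocation of star-freeness at this point, is to argue that $\epsilon[\alpha]$ and $\epsilon[\beta]$ are $\omega$-words and hence have canonical ``readings'' as limits of specific sequences of words from $L_n[\alpha]$ and $L_n[\beta]$; but tying two such sequences together to force a common word again essentially needs the clopenness, so I would present the star-free route as the primary argument.
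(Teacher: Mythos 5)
Your proposal is correct and follows essentially the same route as the paper: choose $n>\max\{|\alpha|,|\beta|,\mu[\alpha],\mu[\beta]\}$, use Theorem~\ref{t:star-free} to get that the closures $\closv A{L_n[\alpha]}$ and $\closv A{L_n[\beta]}$ are clopen, deduce $L_n[\alpha]\cap L_n[\beta]\neq\emptyset$ from $\epsilon[\alpha]=\epsilon[\beta]$, and conclude by Theorem~\ref{t:separation}. The only cosmetic difference is in the last topological step: the paper argues that the nonempty open intersection of the two closures must meet the dense set $X^+$ and then uses $\closv A{L_n[\gamma]}\cap X^+=L_n[\gamma]$, whereas you invoke the (equivalent, and equally standard) fact that $\closv A{\,\cdot\,}$ commutes with finite intersections of star-free languages.
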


\begin{proof}
  Let $n>\max\{|\alpha|,|\beta|,\mu[\alpha],\mu[\beta]\}$. Since
  $L_n[\alpha]$ and $L_n[\beta]$ are star-free languages by
  Theorem~\ref{t:star-free}, their respective closures $\closv
  A{L_n[\alpha]}$ and $\closv A{L_n[\beta]}$ in~\Om XA are clopen
  subsets. Since $\epsilon[\alpha]=\epsilon[\beta]\in\closv
  A{L_n[\alpha]}\cap\closv A{L_n[\beta]}$, the nonempty open set
  $\closv A{L_n[\alpha]}\cap\closv A{L_n[\beta]}$ contains some
  elements of the dense set~$X^+$, which in turn belong to
  $L_n[\alpha]\cap L_n[\beta]$ since,
  by~\cite[Theorem~3.6]{Almeida:2003c}, we have $\closv
  A{L_n[\gamma]}\cap X^+=L_n[\gamma]$ ($\gamma\in\{\alpha,\beta\}$).
  Therefore, $L_n[\alpha]\cap L_n[\beta]\neq\emptyset$, whence
  $\alpha=\beta$ by Theorem~\ref{t:separation}.
\end{proof}

\section{\texorpdfstring{Star-freeness of the languages
    $L_n[\alpha]$}{Star-freeness of the languages
    Ln[alpha]}}
\label{sec:star-freeness-Ln}

This section is dedicated to the proof of Theorem~\ref{t:star-free}.

We say that an $\omega$-term $\alpha$ is in \emph{circular normal form} if the
crucial portions of $\alpha^2$ are in normal form.
A consequence of Lemma~\ref{l:McCammond-Lemma-10.7}, is that the
property of being in circular normal form is preserved by expansions.

\begin{Lemma}
  \label{l:cnf-expansion}
  Let $\alpha$ be an $\omega$-term in circular normal form and let $\beta\in
  E_n[\alpha]$. Then $\beta$ is also in circular normal form.
\end{Lemma}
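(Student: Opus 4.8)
The statement to prove is Lemma~\ref{l:cnf-expansion}: if $\alpha$ is in circular normal form and $\beta\in E_n[\alpha]$, then $\beta$ is also in circular normal form. By definition, being in circular normal form means that the crucial portions of $\alpha^2$ are in normal form, so I need to show that the crucial portions of $\beta^2$ are in normal form. The plan is to relate crucial portions of $\beta^2$ to crucial portions of $\alpha^2$ via the expansion operator, and then invoke Lemma~\ref{l:McCammond-Lemma-10.7} (McCammond's Lemma~10.7) to transfer the normal form property.

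The first step is to make precise the relationship between $\beta^2$ and $\alpha^2$ when $\beta\in E_n[\alpha]$. Write $\alpha=\gamma_0(\delta_1)\gamma_1\cdots(\delta_r)\gamma_r$ as in~\eqref{eq:an-omega-word}, where the $\delta_k$ are the subterms of maximum rank in $\alpha$; then $\beta=\gamma_0\delta_1^{n_1}\gamma_1\cdots\delta_r^{n_r}\gamma_r$ for some exponents $n_j\geq n$. By Lemma~\ref{l:Ln}, $\beta^2\in E_n[\alpha^2]$, and from the explicit form of the $E_n$ operator one sees that every crucial portion of $\beta^2$ is a factor of an expression obtained from a crucial portion $(\delta)\gamma(\delta')$ of $\alpha^2$ by replacing the outer powers $(\delta)$, $(\delta')$ by $\delta^k$, $\delta'^\ell$ (with $k,\ell\geq n$), possibly truncated —- this is essentially the same analysis already carried out in the proof of Lemma~\ref{l:mu-vs-expansions}. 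So a crucial portion $\bar\beta$ of $\beta^2$ is a factor of $\delta^k\gamma\delta'^\ell$ for some crucial portion $(\delta)\gamma(\delta')$ of $\alpha^2$; in fact, since $\bar\beta$ is itself a crucial portion, it has the shape $(\bar\delta)\bar\gamma(\bar\delta')$ where $\bar\delta,\bar\delta'$ are among the innermost $\omega$-powers of $\delta^k$ and $\delta'^\ell$ respectively — equivalently, $\bar\beta\in E_n[(\delta)\gamma(\delta')]$ possibly after discarding some $\omega$-power factors at the two ends.

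The second step transfers normal form status. Since $\alpha$ is in circular normal form, $(\delta)\gamma(\delta')$ is in normal form; hence by Lemma~\ref{l:McCammond-Lemma-10.7}, every $\omega$-term in $E_n^k[(\delta)\gamma(\delta')]$ is in normal form. In particular the element of $E_n[(\delta)\gamma(\delta')]$ containing $\bar\beta$ is in normal form, and since $\bar\beta$ is obtained from it by deleting a (possibly empty) block of leading $\omega$-powers and a block of trailing ones, $\bar\beta$ is again a crucial portion of that normal form $\omega$-term, hence —- because any crucial portion of a normal form is in normal form, which follows directly from conditions~\ref{item:rank-i-1}–\ref{item:rank-i-4} restricted to the relevant subword —- $\bar\beta$ is in normal form. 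As this holds for every crucial portion $\bar\beta$ of $\beta^2$, we conclude $\beta$ is in circular normal form.

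The main obstacle is purely bookkeeping: carefully verifying that a crucial portion of $\beta^2$ really does arise as (a subword, stripped of flanking $\omega$-powers, of) an element of $E_n$ applied to a crucial portion of $\alpha^2$, handling the edge cases where a crucial portion of $\beta^2$ straddles the ``seam'' of the square or lies entirely inside a single expanded power $\delta_j^{n_j}$ (in which case it comes from the crucial portion of $\alpha^2$ straddling two copies of $\delta_j$). None of this is conceptually hard, but it requires the same patient case analysis as in Lemma~\ref{l:mu-vs-expansions}, and one should double-check that truncating an expanded power from either side of a normal-form $\omega$-term cannot destroy conditions~\ref{item:rank-i-2} or~\ref{item:rank-i-4} for the portion that remains —- this is where the ``crucial portion'' shape (having genuine $\omega$-powers on both ends) is used, ensuring that $\bar\beta$ is an interior crucial portion rather than a boundary fragment.
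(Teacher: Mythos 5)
Your proof is correct and follows essentially the same route as the paper's: both reduce to $\beta^2\in E_n[\alpha^2]$, apply Lemma~\ref{l:McCammond-Lemma-10.7} to the crucial portions of $\alpha^2$ (which are in normal form by hypothesis), and observe that every crucial portion of $\beta^2$ sits as a crucial portion inside one of the resulting normal-form expansions. The paper's version is merely more economical in the bookkeeping, noting directly that each crucial portion of $\beta^2$ is a crucial portion of one of the factors $\delta_k^{n_k}\gamma_k\delta_{k+1}^{n_{k+1}}$ or $\delta_r^{n_r}\gamma_r\gamma_0\delta_1^{n_1}$.
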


\begin{proof}
  If
  $\alpha$ is a word, then $\beta=\alpha$ is certainly in circular normal
  form. Otherwise $\alpha$ and $\beta$ are of the form $\alpha=\gamma_0(\delta_1)\gamma_1\cdots(\delta_r)\gamma_r$
  and $\beta=\gamma_0\delta_1^{n_1}\gamma_1\cdots\delta_r^{n_r}\gamma_r$ with each $n_k\geq n$. Now, $\beta^2\in
  E_n[\alpha]^2=E_n[\alpha^2]$ according to Lemma~\ref{l:Ln}~\ref{item:Ln-1}. By
  Lemma~\ref{l:McCammond-Lemma-10.7}, applied to the crucial portions
  of~$\alpha^2$ which are by hypothesis in normal form, we conclude that
  all factors $\delta_k^{n_k}\gamma_k\delta_{k+1}^{n_{k+1}}$, as well as
  $\delta_r^{n_r}\gamma_r\gamma_0\delta_{1}^{n_1}$, are in normal form.  Since each
  crucial portion of~$\beta^2$ is a crucial portion of one of these
  factors, it is in normal form, hence $\beta$ is in circular normal~form.
\end{proof}

Let us now derive a corollary of Proposition~\ref{p:synchro}, which applies to
$\omega$-terms in circular normal form (rather than to $\omega$-terms in normal form as
in the proposition).

\begin{Cor}[of Prop.~\ref{p:synchro}]\label{c:synchro}
Let $\alpha=(v_1) u_1\cdots (v_r) u_r$ and $\beta=(t_1) z_1\cdots (t_s) z_s$ be  two $\omega$-terms of rank~1 in circular normal form, and let $n\geq\max\{\mu[\alpha],\mu[\beta]\}$. If $L_n[\alpha]\cap L_n[\beta]\neq\emptyset$, then $\alpha=\beta$.
\end{Cor}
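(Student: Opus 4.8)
The plan is to reduce the circular-normal-form statement to the normal-form statement of Proposition~\ref{p:synchro} by a cyclic conjugation trick. Pick a word $w\in L_n[\alpha]\cap L_n[\beta]\neq\emptyset$. Writing $w=v_1^{n_1}u_1\cdots v_r^{n_r}u_r=t_1^{m_1}z_1\cdots t_s^{m_s}z_s$ with all exponents at least $n$, the point is that $\alpha$ and $\beta$ are now rank~1 $\omega$-terms \emph{beginning with an $\omega$-power}, so the leftmost $\omega$-powers $v_1^{n_1}$ and $t_1^{m_1}$ start at the same position (namely position~$0$) in~$w$. First I would observe, as in the opening lines of the proof of Proposition~\ref{p:synchro}, that $|v_1^{n_1}|\geq|v_1|+n_1-1\geq|v_1|+2|t_1|-1\geq|v_1t_1|$ and symmetrically $|t_1^{m_1}|\geq|v_1t_1|$, so the two powers overlap on a factor of length at least $|v_1t_1|$; since by circular normal form the crucial portions of $\alpha^2$ and of $\beta^2$ are in normal form, $v_1$ and $t_1$ are Lyndon, and Lemma~\ref{l:FW} gives $v_1=t_1$ together with a synchronization $w_1w_2=v_1$, $v_1^{k}\preceq$ both occurrences. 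But because both powers start at position~$0$, the splitting is trivial: $v_1^{n_1}$ and $t_1^{m_1}$ are synchronized powers of the same Lyndon word anchored at the origin, so in fact $n_1=m_1$ (any length mismatch would force, via Lemma~\ref{l:conseq-FW} applied to the shorter power inside the longer, that $u_1$ or $z_1$ is a prefix of a power of $v_1$, contradicting normality of the crucial portion $(v_1)u_1(v_2)$, resp.\ $(t_1)z_1(t_2)$, of the squared term).

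Once the first blocks agree, I would peel them off: set $\alpha'=u_1(v_2)u_2\cdots(v_r)u_r(v_1)$ and $\beta'=z_1(t_2)z_2\cdots(t_s)z_s(t_1)$, the cyclic rotations bringing the tail $u_1\cdots$ to the front and the head $v_1$ to the back. These are now rank~1 $\omega$-terms that do \emph{not} begin with an $\omega$-power, and — this is the crucial point where circular normal form is exactly what is needed — their crucial portions are precisely crucial portions of $\alpha^2$ (resp.\ $\beta^2$), hence in normal form; moreover the intermediate-$\alpha_j$ condition \ref{item:rank-i-2} and the minimality condition \ref{item:rank-i-4} for $\alpha'$ are inherited from the normal-form status of the crucial portions of $\alpha^2$, so $\alpha'$ and $\beta'$ are genuinely in normal form in the sense of Proposition~\ref{p:synchro}. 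The word $w'=v_1^{-n_1}w\,v_1^{n_1}\in L_n[\alpha']\cap L_n[\beta']$ (using $w=v_1^{n_1}(\text{rest})$ and that the rest ends in $u_r$, so appending $v_1^{n_1}$ lands us in $L_n[\alpha']$; same for $\beta'$), and $n\geq\max\{\mu[\alpha'],\mu[\beta']\}$ because cyclic rotation does not change the set of crucial portions of the square, so $\mu[\alpha']=\mu[\alpha]$ and likewise for $\beta$.

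Now Proposition~\ref{p:synchro} applies directly to $\alpha'$, $\beta'$, $w'$ and yields $r=s$ and $u_i=z_i$, $n_{i+1}=m_{i+1}$, $v_{i+1}=t_{i+1}$ for the indices appearing in $\alpha'$, $\beta'$; combined with $v_1=t_1$, $n_1=m_1$, $u_1=z_1$ already established, this gives $\alpha=\beta$. The main obstacle I anticipate is the bookkeeping around the first block: making sure that synchronizing $v_1^{n_1}$ with $t_1^{m_1}$ at the origin really forces $n_1=m_1$ and not merely $v_1=t_1$, and that the rotated terms $\alpha',\beta'$ satisfy \emph{all four} clauses of the normal-form definition (not just the Lyndon and circular-portion clauses). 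An alternative, perhaps cleaner, route avoiding the explicit first-block argument is to prepend a fresh $\omega$-power: set $\hat\alpha=(v_1)v_1^{-1}\alpha=(v_1)v_1^{-1}(v_1)u_1\cdots$ — but this risks leaving normal form, so I expect the rotation argument above to be the one to write out, with the delicate step being the verification that circular normal form of $\alpha$ is precisely the hypothesis that turns $\alpha'$ into a bona fide normal-form term.
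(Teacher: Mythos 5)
Your overall strategy --- massage $\alpha$ and $\beta$ into genuine normal forms and then invoke Proposition~\ref{p:synchro} --- is the right one, but the specific massaging you chose (cyclic rotation, moving the leading $\omega$-power to the end) does not work: the rotated term $\alpha'=u_1(v_2)u_2\cdots(v_r)u_r(v_1)$ need not be in normal form. The problem is the new \emph{leading} segment $u_1$. For a leading segment, condition~\ref{item:rank-i-4} forbids $v_2$ from being a suffix of $u_1$ outright, because stripping a suffix $\beta_1$ from $\alpha_0$ can never make condition~\ref{item:rank-i-2} fail (that condition only constrains \emph{intermediate} segments) and condition~\ref{item:rank-i-3} is automatic at rank~$1$. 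By contrast, the crucial portion $(v_1)u_1(v_2)$ of $\alpha^2$ being in normal form does allow $u_1$ to end in $v_2$, provided the stripped word becomes a prefix of a power of~$v_1$. A concrete counterexample is $\alpha=(a)ab(b)$, which the paper certifies to be in normal form (hence in circular normal form): its rotation $ab(b)(a)$ is not in normal form, since the leading $ab$ strips to $a$ without violating \ref{item:rank-i-2} or~\ref{item:rank-i-3}. So Proposition~\ref{p:synchro} cannot be applied to your $\alpha'$, $\beta'$ as claimed. The same example undermines your first-block argument: you assert that an exponent mismatch $m_1>n_1$ forces $u_1$ or $z_1$ to be a prefix of a power of $v_1$, but it only forces $v_1^{m_1-n_1}$ to be a prefix of $u_1v_2^{n_2}\cdots$, and $u_1=ab$ has $a$ as a proper prefix without being a prefix of a power of $a$; moreover Lemma~\ref{l:conseq-FW} anchored at $p=\varepsilon$ yields only $v_1=t_1$, not $n_1=m_1$ --- for the latter you need Lemma~\ref{l:conseq-FW-2}, i.e., you must track where $z_1$ and $t_2^{m_2}$ land.

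The paper's reduction exploits the fact that, for a term $(v_1)u_1\cdots(v_r)u_r$ in circular normal form, the only clause of the normal-form definition that can fail concerns the \emph{trailing} segment $u_r$, and only through $u_r$ beginning with some power $v_r^k$. Writing $u_r=v_r^ku_r'$ with $k$ maximal, the term $\alpha'=(v_1)u_1\cdots(v_r)u_r'$ is in normal form, the same word $w$ lies in $L_n[\alpha']$ (the stripped $v_r^k$ is absorbed into the exponent $n_r$), and $\mu[\alpha']\leq\mu[\alpha]$; Proposition~\ref{p:synchro} then gives $\alpha'=\beta'$, and the residual equality $k=\ell$ is recovered by applying condition~\ref{item:rank-i-4} to the wrap-around crucial portions $(v_r)v_r^ku_r'(v_1)$ and $(v_r)v_r^{\ell}u_r'(v_1)$ of $\alpha^2$ and $\beta^2$. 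If you want to keep a rotation-style argument, you would have to normalize the rotated leading segment as well and then undo that normalization, which amounts to the same bookkeeping at a less convenient position.
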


\begin{proof}
  Let $w=v_1^{n_1} u_1\cdots v_r^{n_r} u_r=t_1^{m_1} z_1\cdots t_s^{m_s}
  z_s\in L_n[\alpha]\cap L_n[\beta]$ and let $\alpha'$ and $\beta'$ be
  respectively the normal forms of $\alpha$ and $\beta$.  As $\alpha$ is in
  circular normal form by hypothesis, all its crucial portions are in normal
  form. Therefore $\alpha'$ is obtained from $\alpha$ by simply reducing
  the final portion $(v_r) u_r$ to its normal form. This is done by 
  applying all possible, say $k\geq 0$, reductions of type~$4R$. That is,
  $\alpha'=(v_1) u_1\cdots (v_r) u'_r$ with $u_r=v_r^ku'_r$. Analogously,
  $\beta'=(t_1) z_1\cdots (t_s) z'_s$ with $z_s=t_s^\ell z'_s$ for some
  $\ell\geq0$. Clearly $\mu[\alpha]\geq \mu[\alpha']$ and $\mu[\beta]\geq
  \mu[\beta']$, whence $n\geq\max\{\mu[\alpha'],\mu[\beta']\}$. On the other
  hand, $w=v_1^{n_1} u_1\cdots v_r^{n_r+k} u'_r=t_1^{m_1} z_1\cdots
  t_s^{m_s+\ell} z'_s$ belongs to $L_n[\alpha']\cap L_n[\beta']$. Hence,
  $\alpha'=\beta'$ by Proposition~\ref{p:synchro}. In particular $v_1=t_1$,
  $v_r=t_s$ and $u'_r=z'_s$.  The crucial portions $(v_r) u_r(v_1)$ and $(t_s)
  z_s(t_1)$ of, respectively, $\alpha^2$ and $\beta^2$ are in normal
  form. Then, as $(v_r) u_r(v_1)=(v_r) v_r^ku'_r(v_1)$ and $(t_s)
  z_s(t_1)=(v_r) v_r^\ell u'_r(v_1)$, we deduce from
  property~\ref{item:rank-i-4} of normal forms that $k=\ell$. This completes
  the proof that $\alpha=\beta$.
\end{proof}

The next lemma reflects periodicities of sufficiently large expansions
of an $\omega$-term of rank~1 in the term itself, provided it is in circular
normal form.

\begin{Lemma}
  \label{l:base-primitive-rank-1}
  Let $\alpha$ be an $\omega$-term of rank 1 in circular normal form and let
  $n\geq\mu[\alpha]$. If $z^\ell\in L_n[\alpha]$, then there exists an $\omega$-term of rank~1
  in circular normal form $\zeta$ such that $\alpha=\zeta^\ell$ and $z\in L_n[\zeta]$.
\end{Lemma}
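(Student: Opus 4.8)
The plan is to argue that the word $z$ must itself respect the periodic structure of $\alpha$, and then to reconstruct $\zeta$ as a "fractional root" of $\alpha$. First I would observe that, since $z^{\ell}\in L_{n}[\alpha]$ and $\alpha=(v_{1})u_{1}\cdots(v_{r})u_{r}$, we have a factorization $z^{\ell}=v_{1}^{n_{1}}u_{1}\cdots v_{r}^{n_{r}}u_{r}$ with all $n_{k}\geq n\geq\mu[\alpha]$. The key point is that $z^{\ell}$ is highly periodic: its period $|z|$ is small compared with the huge blocks $v_{k}^{n_{k}}$ (recall $n\geq\mu[\alpha]\geq 2|v_{k}u_{k}v_{k+1}|$ from the remark after Lemma~\ref{l:FW}). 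I would then consider the family of shifted factorizations of $z^{\ell}$ coming from cyclic rotation by $|z|$, i.e. compare the given factorization of $w:=z^{\ell}$ with the factorization of the same word $w$ obtained by reading $z^{\ell}$ starting $|z|$ positions in (using $w=z\cdot z^{\ell-1}=z^{\ell-1}\cdot z$). Since $w\in L_{n}[\alpha]$ under both readings and $\alpha$ is in circular normal form, Corollary~\ref{c:synchro} (applied to $\alpha$ against itself, or rather to the two factorizations) forces the two factorizations to be "aligned up to a shift by a whole number of $\delta_{k}$-periods", which is exactly the statement that $z$ is obtained from a cyclic conjugate of the pattern of $\alpha$.

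More precisely, the second step is to extract $\zeta$. Using the synchronization we find an index and a factorization of $v_{1}$ (or of the relevant block) as $v_{1}=pq$ so that $z$ begins at a $p$-boundary inside the first power $v_{1}^{n_{1}}$; writing everything out, $z^{\ell}$ breaks as a power of the cyclically-shifted $\omega$-term. Concretely, I expect to obtain that $z=q\,v_{1}^{?}u_{1}\cdots v_{j}^{?}p'$ for suitable exponents, and that the cyclic conjugate $\zeta:=(v_{1}') u_{1}'\cdots$ obtained by rotating the word $\alpha$ by $|z|$ letters — then re-closing the innermost parentheses around the (cyclically shifted) Lyndon roots — satisfies $\alpha=\zeta^{\ell}$ as $\omega$-terms. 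Since conjugates of Lyndon words need not be Lyndon, I would not claim $\zeta$ is literally in normal form, only in \emph{circular} normal form: its crucial portions, being cyclic rotations of crucial portions of $\alpha^{2}$ (which are in normal form by hypothesis, and whose Lyndon-ness is irrelevant to being a crucial portion in normal form once we pass to $\zeta^{2}$), remain in normal form. That $z\in L_{n}[\zeta]$ then follows by reading off the exponents: each block of $z$ is a power $v^{n_{k}}$ or a fraction thereof, and since all the $n_{k}\geq n$, the shifted exponents are still $\geq n$.

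The final bookkeeping step is to check $\alpha=\zeta^{\ell}$ at the level of $\omega$-terms, not just values. Here I would again invoke Corollary~\ref{c:synchro}: both $\alpha$ and $\zeta^{\ell}$ are rank-$1$ $\omega$-terms in circular normal form, $\mu[\zeta^{\ell}]$ is bounded in terms of $\mu[\alpha]$ (rotation does not increase the relevant crucial-portion lengths, by the same estimate used in Lemma~\ref{l:mu-vs-expansions} and Remark~\ref{remark:freezed-parentheses}), so choosing $n$ large enough that $L_{n}[\alpha]\cap L_{n}[\zeta^{\ell}]\ni z^{\ell}\neq\emptyset$ yields $\alpha=\zeta^{\ell}$; and the uniqueness there also pins down $\zeta$.

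\medskip

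The main obstacle I anticipate is purely combinatorial: carefully handling the position where $z$ begins relative to the block structure of $\alpha$, i.e. the case analysis on whether the cut point $|z|$ lands inside some power $v_{k}^{n_{k}}$ or inside some $u_{k}$, and showing it cannot land inside a $u_{k}$ (this is where normality of the crucial portions $(v_{k})u_{k}(v_{k+1})$ is crucial, exactly as in the last paragraph of the proof of Lemma~\ref{l:conseq-FW}). Once the cut is pinned to a $v_{1}$-boundary, propagating the alignment around the whole word to recover all of $\zeta$ and the identity $\alpha=\zeta^{\ell}$ is routine via Lemma~\ref{l:FW} and Corollary~\ref{c:synchro}, though it requires some care with the exponents so that the "$\geq n$" condition is preserved in $L_{n}[\zeta]$.
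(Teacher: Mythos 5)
Your overall plan---synchronize the block structure of $w=z^{\ell}$ with itself, extract a root $\zeta$, and close with Corollary~\ref{c:synchro}---is the same skeleton as the paper's argument, but two points in the middle are genuinely off. First, $\zeta$ cannot be a ``cyclic conjugate obtained by rotating $\alpha$ by $|z|$ letters'' with $z$ of the form $q\,v_1^{?}u_1\cdots p'$: the conclusion $\alpha=\zeta^{\ell}$ must hold literally as $\omega$-terms, and an $\ell$-th power of a nontrivial conjugate is a conjugate of $\alpha$, not $\alpha$. Moreover $z$ is a \emph{prefix} of $w$, so it begins exactly as the expansion of $\alpha$ does. The actual content of the lemma is that (after the easy reduction to $u_0$ empty) the cut at position $|z|$ falls precisely at the end of some $u_i$, so that $z=v_1^{n_1}u_1\cdots v_i^{n_i}u_i$ is a product of \emph{whole} blocks and $\zeta=(v_1)u_1\cdots(v_i)u_i$ is a literal prefix of $\alpha$; there is no fractional block, no re-closing of parentheses around shifted Lyndon roots, and $z\in L_n[\zeta]$ is then automatic since the exponents $n_k\geq n$ are inherited.

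Second, the synchronization step is not supplied by the tools you invoke in the way you invoke them. Proposition~\ref{p:synchro} and Corollary~\ref{c:synchro} compare two factorizations of the \emph{same} word; the ``shifted reading'' of $w$ starting $|z|$ positions in is a factorization of $z^{-1}wz$ (equivalently of $z^{\ell-1}$), a different word, so applying $\alpha$ against itself yields nothing. The paper's device is to pass to $w^2=z^{2\ell}\in L_n[\alpha^2]$ and apply Lemma~\ref{l:conseq-FW-2} repeatedly, with $p=z$, then $p=zw[1]u_1$, etc., to propagate the identities $v_j=v_{i+j}$, $u_j=u_{i+j}$, $zw[j]=w^2[i+j]$ around the entire word; combined with $zw=wz$ this forces $z=w[i]u_i$. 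Two preliminary cases also need handling: $\ell=1$, and $|z|\leq|v_1^{n_1}|$, where Lemma~\ref{l:Lyndon-less-than-suffices} forces $z\in v_1^{+}$ and hence $u_1$ to be a prefix of a power of $v_1$, contradicting normality. Finally, you cannot ``choose $n$ large enough'' at the end---$n$ is given; what saves the day is that every crucial portion of $\zeta^2$ is a crucial portion of $\alpha^2$, so $\mu[\zeta^{\ell}]=\mu[\zeta]\leq\mu[\alpha]\leq n$ and Corollary~\ref{c:synchro} applies at the given $n$.
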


\begin{proof}
  Let $\alpha=u_0(v_1)u_1\cdots(v_r)u_r$. Since $z^\ell\in L_n[\alpha]$, either $z\prec u_0\prec
  z^\ell$ or $u_0\preceq z$. In both cases, we reduce the question to the case
  where $u_0$ is empty, by replacing $\alpha$ by $(v_1)u_1\cdots(v_r)u_ru_0$ and
  $z$ by an appropriate conjugate: $z_1^{-1}zz_1$, where $z=z_1z_2$
  and $u_0=(z_1z_2)^kz_1$ in the first case, or $u_0^{-1}zu_0$ in the
  second case. So write $\alpha=(v_1)u_1\cdots(v_r)u_r$, and let $w=z^\ell$. Then,
  we have $w=v_1^{n_1}u_1\cdots v_r^{n_r}u_r$, with $n_1,\ldots, n_r\geq n$.
  Taking into account the resulting factorization of~$w^2$, we also
  set $v_{r+i}=v_i$, $u_{r+i}=u_i$, and $n_{r+i}=n_i$ for
  $i=1,\ldots,r$. Note also that $w^2[i+r]=ww[i]$ for $1\leq i\leq r$.

  If $\ell=1$, then we choose $\zeta=\alpha$. For $\ell\geq2$,
  assume first that $|z|\leq|v_1^{n_1}|$. Since both $z$ and
  $v_1^{n_1}$ are prefixes of $w$, this implies that $v_1^{k-1}\prec
  z\preceq v_1^k$ for some $k\geq1$. Then $t=z^{-1}v_1^k$ is a suffix
  of $(v_1^{k-1})^{-1}v_1^k=v_1$. Further, $v_1\preceq w$ and
  $t\preceq z^{-1}w=z^{\ell-1}\prec z^\ell=w$, so $t\preceq v_1$.
  Since $t$ is both a prefix and a suffix of the Lyndon word $v_1$,
  $t$ is either empty or equal to $v_1$ by
  Lemma~\ref{l:Lyndon-less-than-suffices}, hence $z=v_1^k$ and
  $w=z^\ell=v_1^{k\ell}$. It follows that $u_1\preceq
  v_1^{k\ell-n_1}$, which contradicts the hypothesis on~$\alpha$.
  Therefore, we have $|v_1^{n_1}|<|z|$ and $v_1^{n_1}\prec z$.

  In particular, the equalities $|z|>n_1\geq n\geq\mu[\alpha]>|u_r|$
  hold, so that $w[1]\prec z\preceq z^{\ell-1}\prec w[r]$. Hence $r\geq2$
  and there exists $i\in\{1,\ldots,r-1\}$ such that $w[i] \preceq z
  \prec w[i+1]$, which is the same as $w^2[i] \preceq z \prec
  w^2[i+1]$. We prove the following property by induction on
  $k\in\{1,\ldots,r\}$:
  \begin{equation}
    \label{eq:z-induction}
    \left\{
      \begin{aligned}
        &zw[j]=w^2[i+j]\text{ and } u_j=u_{i+j}\\
        &v_j=v_{i+j}\end{aligned}
    \right.
    \begin{aligned}
      &\qquad\text{for $j\leq k$},\\
      &\qquad\text{for $j\leq k+1$}.
    \end{aligned}
    \tag*{$H(k)$}
  \end{equation}
  Observe that $zwv_1^{n_1}\preceq z^{\ell+2}\preceq z^{2\ell}=w^2$.
  We will apply several times Lemma~\ref{l:conseq-FW-2} to $\alpha^2$,
  choosing prefixes of $zwv_1^{n_1}$ for the successive values of the
  prefix $p$ of $w^2\in L_n[\alpha^2]$ which is considered in that
  lemma. First, since $w^2[i] \preceq z \prec w^2[i+1]$ and
  $n\geq\mu[\alpha]=\mu[\alpha^2]$, we may apply
  Lemma~\ref{l:conseq-FW-2} to~$\alpha^2$, with $\beta=(v_1)u_1(v_2)$
  and $p=z$, to obtain $zw[1]=zv_1^{n_1}=w^2[i+1]$, $v_1=v_{i+1}$,
  $u_1=u_{i+1}$, and $v_2=v_{i+2}$, which establishes~$H(1)$. Next,
  assuming that $H(k-1)$ holds for a certain $k\leq r$, we deduce that
  $w^2[i+k-1]\preceq zw[k-1]u_{k-1}\prec w^2[i+k]$.
  Lemma~\ref{l:conseq-FW-2} applied to~$\alpha^2$ with
  $\beta=(v_k)u_k(v_{k+1})$ and $p=zw[k-1]u_{k-1}$ then yields~$H(k)$.

  In particular $zw[r]=w^2[i+r]=ww[i]$ and $u_r=u_{i+r}=u_i$. It
  follows that $zw=zw[r]u_r=ww[i]u_i$. Since $zw=wz\,(=z^{\ell+1})$,
  we deduce that $z=w[i]u_i=v_{1}^{n_{1}}u_{1}\cdots v_i^{n_i}u_i$.
  Let $\zeta=(v_1)u_1\cdots(v_i)u_i$. Then $z$ belongs to
  $L_n[\zeta]$ and $z^\ell\in L_n[\zeta^\ell]\cap
  L_n[\alpha]$. Since each crucial portion of $\zeta^2$ is a crucial portion of
  $\alpha^2$, we have $\mu[\zeta^\ell]=\mu[\zeta]\leq\mu[\alpha]\leq n$. Therefore, $\alpha=\zeta^\ell$ by
  Corollary~\ref{c:synchro}, which completes the proof.
\end{proof}

We call \emph{primitive} an $\omega$-term which is primitive when
represented as a parenthesized word. An immediate consequence of
Lemma~\ref{l:base-primitive-rank-1} is the following observation.

\begin{Cor}
  \label{c:base-primitive-rank-1}
  Let $\alpha$ be an $\omega$-term of rank 1 in circular normal form
  and let $n\geq\mu[\alpha]$. If $\alpha$ is primitive and $w\in
  L_n[\alpha]$, then $w$ is also primitive.\qed
\end{Cor}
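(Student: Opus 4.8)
The strategy is to argue by contraposition: assuming $w \in L_n[\alpha]$ is \emph{not} primitive, we will deduce that $\alpha$ is not primitive either. So suppose $w = z^\ell$ for some word $z$ and some $\ell \geq 2$. Since $\alpha$ is in circular normal form and $n \geq \mu[\alpha]$, the hypotheses of Lemma~\ref{l:base-primitive-rank-1} are met, so there exists an $\omega$-term $\zeta$ of rank~1 in circular normal form such that $\alpha = \zeta^\ell$ (as parenthesized words) and $z \in L_n[\zeta]$.

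The key observation is then purely combinatorial: the equality $\alpha = \zeta^\ell$ with $\ell \geq 2$, read at the level of parenthesized words over $Y$, exhibits $\alpha$ as a nontrivial power of the parenthesized word $\zeta$. By the definition of a primitive $\omega$-term (one that is primitive when represented as a parenthesized word), this means precisely that $\alpha$ is \emph{not} primitive. Contraposition now gives the statement: if $\alpha$ is primitive, then every $w \in L_n[\alpha]$ is primitive.

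There is essentially no obstacle here; the corollary is a direct unwinding of Lemma~\ref{l:base-primitive-rank-1} together with the definition of primitivity for $\omega$-terms. The only point that requires a moment's care is to confirm that the conclusion $\alpha = \zeta^\ell$ delivered by Lemma~\ref{l:base-primitive-rank-1} is genuinely an equality of parenthesized words (so that it witnesses non-primitivity in the stipulated sense), which is exactly how that lemma is phrased.
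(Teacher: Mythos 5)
Your argument is exactly the one the paper intends: the corollary is stated with a \qed as an immediate consequence of Lemma~\ref{l:base-primitive-rank-1}, obtained by contraposition just as you describe (if $w=z^\ell$ with $\ell\geq2$, the lemma yields $\alpha=\zeta^\ell$ with $\zeta$ a nonempty parenthesized word of rank~1, contradicting primitivity of~$\alpha$). Your proposal is correct and matches the paper's approach.
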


The next result may be regarded as a generalization of
Corollary~\ref{c:base-primitive-rank-1} to $\omega$-terms of larger
rank.

\begin{Cor}
  \label{c:base-primitive-rank-i}
  Let $\alpha$ be an $\omega$-term of rank $i\geq1$ in circular normal
  form and let $n\geq\mu[\alpha]$. If $\alpha$ is primitive and
  $\beta\in E_n[\alpha]$, then $\beta$ is also primitive.
\end{Cor}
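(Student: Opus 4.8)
The plan is to reduce Corollary~\ref{c:base-primitive-rank-i} to the rank-1 case already handled in Corollary~\ref{c:base-primitive-rank-1}, using the freeze operation. The point is that freezing turns an $\omega$-term $\alpha$ of rank $i\geq 1$ into an $\omega$-term $\freeze\alpha$ of rank $i-1$ over the enlarged alphabet $X\cup\{\op,\cl\}$, and the operator $E_n$ on $\alpha$ corresponds, after freezing, to iterating $E_n$ down to rank $1$. More precisely, I want to observe that if $\beta\in E_n[\alpha]$, then passing to the innermost $\omega$-powers, the word $\freeze\beta$ is built from $\freeze\alpha$ by the same substitution that $L_n$ performs at the top two levels—so it suffices to control primitivity through a single application of $E_n$ at rank~$1$, combined with the fact (Remark~\ref{remark:freezed-parentheses}) that freezing preserves circular normal form, rank drops by one, and $\mu$ at most halves.

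Here is the argument in more detail. First I would dispose of the case $i=1$, which is exactly Corollary~\ref{c:base-primitive-rank-1}. For $i\geq 2$, write $\alpha=\gamma_0(\delta_1)\gamma_1\cdots(\delta_r)\gamma_r$ as in~\eqref{eq:an-omega-word}, so that $\beta=\gamma_0\delta_1^{n_1}\gamma_1\cdots\delta_r^{n_r}\gamma_r$ with each $n_k\geq n$. Suppose for contradiction that $\beta=\eta^\ell$ for some $\omega$-term $\eta$ and some $\ell\geq 2$; I want to derive that $\alpha$ is a proper power, contradicting primitivity. Apply the freeze: $\freeze\beta=(\freeze\eta)^\ell$, and $\freeze\beta$ is obtained from $\freeze\alpha$ by the rank-1 expansion sending each outermost frozen $\omega$-power $(\delta_k)$ to $\delta_k^{n_k}$; that is, $\freeze\beta\in E_n[\freeze\alpha]$ (using Lemma~\ref{l:Ln}\ref{item:Ln-1} to see that the expansion distributes correctly across the factorization of $\freeze\alpha$). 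By Remark~\ref{remark:freezed-parentheses}, $\freeze\alpha$ has rank $i-1$, is in circular normal form, and $n\geq\mu[\alpha]\geq\mu[\freeze\alpha]$. By the induction hypothesis on rank (with base case $i=1$ being Corollary~\ref{c:base-primitive-rank-1}), applied to $\freeze\alpha$, we conclude that $\freeze\alpha$ is \emph{not} primitive—wait, that is the wrong direction; rather, since we are assuming $\freeze\beta$ is a proper power, the contrapositive of the rank-$(i-1)$ statement tells us $\freeze\alpha$ is a proper power as well, say $\freeze\alpha=\freeze\zeta^{\,\ell'}$ for some $\ell'\geq 2$. Unfreezing recovers $\alpha=\zeta^{\ell'}$, contradicting primitivity of $\alpha$.

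The main obstacle is making precise the claim that $\freeze\beta\in E_n[\freeze\alpha]$ and, more delicately, that a proper-power factorization of $\freeze\beta$ forces one of $\freeze\alpha$: the cleanest route is probably not to argue by contradiction via the contrapositive, but to mimic the proof of Corollary~\ref{c:base-primitive-rank-1}, i.e.\ to state and prove directly the analogue of Lemma~\ref{l:base-primitive-rank-1} for $\freeze\alpha$ (an $\omega$-term of rank $i-1$), namely that $\freeze\beta\in L_n[\freeze\zeta^{\,\ell}]\cap L_n[\freeze\alpha]$ for a suitable $\freeze\zeta$, and then invoke an induction on rank whose inductive step is exactly the separation-type argument (Corollary~\ref{c:synchro} at higher rank, or Lemma~\ref{l:separation}) to conclude $\freeze\alpha=\freeze\zeta^{\,\ell}$. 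Concretely, from $\freeze\beta=(\freeze\eta)^\ell$ and $\freeze\beta\in L_n[\freeze\alpha]$ one gets, via the higher-rank version of Lemma~\ref{l:base-primitive-rank-1} (itself proved by the same freeze-induction), an $\omega$-term $\zeta$ of rank $i$ with $\alpha=\zeta^\ell$ and $\beta\in E_n[\zeta^\ell]$. The routine verifications—that $E_n$ commutes with freezing in the required sense, that $\mu$ and the circular-normal-form hypothesis survive, and that conjugating to normalize a possibly-nonempty prefix $\gamma_0$ does no harm—are exactly as in the rank-1 proof and I would not spell them out.
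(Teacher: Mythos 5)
Your argument is correct, but it takes a genuinely different route from the paper's. The paper reduces to Corollary~\ref{c:base-primitive-rank-1} in a single step: it keeps only the outermost (rank~$i$) parentheses of $\alpha$ alive and relabels \emph{all} the remaining parentheses as letters $\op,\cl$, so that $\alpha$ becomes a rank-$1$ term $\alpha_Z$ in circular normal form over $Z=X\cup\{\op,\cl\}$ with $\mu[\alpha_Z]\leq\mu[\alpha]$, and $\beta$ becomes a word $\beta_Z\in L_n[\alpha_Z]$; one application of the rank-$1$ corollary then finishes, with no induction. You instead freeze only the innermost parentheses and induct on the rank. This works: the contrapositive argument of your second paragraph is already a complete proof once one checks that $E_n$ commutes with the freeze (true, because $E_n$ acts on the outermost parentheses and the freeze on the innermost, both locally on well-parenthesized factors) and that $\alpha$ is primitive if and only if $\freeze\alpha$ is (true, because unfreezing is a letter-to-letter substitution, and any word root of a balanced word is itself balanced, so freezing distributes over the power). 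Your third paragraph, however, is a step in the wrong direction: the implication ``$\freeze\beta$ a proper power $\Rightarrow\freeze\alpha$ a proper power'' is \emph{exactly} the contrapositive of the rank-$(i-1)$ instance of the statement being proved, so no higher-rank analogue of Lemma~\ref{l:base-primitive-rank-1}, no root extraction, and no appeal to Corollary~\ref{c:synchro} are needed here --- that heavier machinery belongs to Lemma~\ref{l:root}, not to this corollary, which only asserts primitivity of $\beta$ and not the existence of a compatible root of $\alpha$. The paper's one-step relabelling buys a proof with no induction and no commutation check; your version costs those two routine verifications but nothing more. (A small slip: the expansion taking $\freeze\alpha$ to $\freeze\beta$ is not a ``rank-1 expansion'' unless $i=2$; it unfolds the outermost $\omega$-powers of $\freeze\alpha$, whose contents have rank $i-2$.)
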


\begin{proof}
  We distinguish two types of parentheses in the $\omega$-term $\alpha$: write
  $(,)$ for the parentheses corresponding to the $\omega$-powers of
  rank~$i$, and ${\op},{\cl}$ for the remaining parentheses. Consider
  the alphabet $Z=X\cup\{{\op},{\cl}\}$, with the extended ordering
  ${\op}< x<{\cl}$ ($x\in X$). Then $\beta$ may be viewed as a word $\beta_Z$
  over $Z$ and $\alpha$ as an $\omega$-term $\alpha_Z$, of rank~1, over the same
  alphabet such that $\beta_Z\in L_n[\alpha_Z]$. Moreover
  $\mu[\alpha_Z]\leq\mu[\alpha]$ and it is
  clear by McCammond's definition of rank~$i$ normal form that $\alpha_Z$
  is a primitive $\omega$-term in circular normal form (over $Z$), whence
  $\alpha_Z$ and $\beta_Z$ satisfy the hypotheses of
  Corollary~\ref{c:base-primitive-rank-1}. To conclude the proof, it
  suffices to invoke Corollary~\ref{c:base-primitive-rank-1}.
\end{proof}

Iterating the application of Corollary~\ref{c:base-primitive-rank-i},
we obtain another extension of
Corollary~\ref{c:base-primitive-rank-1} to $\omega$-terms of any rank.

\begin{Prop}
  \label{p:base-primitive}
  Let $\alpha$ be an $\omega$-term in circular normal form and let $n\geq\mu[\alpha]$. If
  $\alpha$ is a primitive $\omega$-term and $w\in L_n[\alpha]$, then $w$ is a primitive
  word.
\end{Prop}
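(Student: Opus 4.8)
The plan is to prove Proposition~\ref{p:base-primitive} by induction on the rank $i=\rk\alpha$, using Corollary~\ref{c:base-primitive-rank-i} to strip off the outermost layer of $\omega$-powers at each stage, exactly in the spirit of how Lemma~\ref{l:separation} was reduced to Proposition~\ref{p:synchro} via the freeze operation and how $L_n$ was defined by iterating $E_n$.

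\begin{proof}
  We argue by induction on $i=\rk\alpha$. If $i=0$, then $\alpha\in X^+$ and $L_n[\alpha]=\{\alpha\}$, so $w=\alpha$ is primitive by hypothesis. If $i=1$, the statement is precisely Corollary~\ref{c:base-primitive-rank-1}.

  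Now suppose $i\geq2$ and that the result holds for all $\omega$-terms in circular normal form of rank less than $i$. Let $\alpha$ be a primitive $\omega$-term of rank $i$ in circular normal form, let $n\geq\mu[\alpha]$, and let $w\in L_n[\alpha]$. By the definition of $L_n$ and Lemma~\ref{l:Ln}, there is an $\omega$-term $\beta\in E_n[\alpha]$ with $w\in L_n[\beta]$, and since $\rk\beta=\rk\alpha-1=i-1$ we have $L_n[\beta]=E_n^{i-1}[\beta]$. By Lemma~\ref{l:cnf-expansion}, $\beta$ is in circular normal form. By Corollary~\ref{c:base-primitive-rank-i}, $\beta$ is primitive. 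Moreover, by Lemma~\ref{l:mu-vs-expansions}, $\mu[\beta]\leq\mu[\alpha]\leq n$, so $n\geq\mu[\beta]$. The $\omega$-term $\beta$ thus satisfies all the hypotheses of the proposition, but has rank $i-1$; since $w\in L_n[\beta]$, the induction hypothesis yields that $w$ is a primitive word. This completes the induction step and the proof.
\end{proof}

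The only point requiring care is the verification that $\beta$ inherits all three hypotheses — primitivity, circular normal form, and the threshold bound $n\geq\mu[\beta]$ — which is exactly what Corollaries~\ref{c:base-primitive-rank-i}, Lemma~\ref{l:cnf-expansion}, and Lemma~\ref{l:mu-vs-expansions} provide, respectively; given these three inputs the argument is a routine descent on rank. I do not expect any genuine obstacle here: the substantive combinatorics has already been absorbed into Corollary~\ref{c:base-primitive-rank-1} (via Fine and Wilf and Proposition~\ref{p:synchro}) and its rank-$i$ upgrade Corollary~\ref{c:base-primitive-rank-i}, so the proposition is simply the clean packaging of an induction that unfolds one level of $\omega$-powers at a time.
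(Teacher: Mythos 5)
Your proof is correct and follows essentially the same route as the paper's: induction on the rank, peeling off one level of $\omega$-powers via an expansion $\beta\in E_n[\alpha]$ and transferring the three hypotheses using Corollary~\ref{c:base-primitive-rank-i}, Lemma~\ref{l:cnf-expansion}, and Lemma~\ref{l:mu-vs-expansions}. The only cosmetic difference is that you also record the trivial rank-$0$ base case, which the paper omits.
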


\begin{proof}
  We proceed by induction on~$\rk \alpha$. The case $\rk \alpha=1$ is given by
  Corollary~\ref{c:base-primitive-rank-1}. Assume that the result
  holds for $\omega$-terms whose rank is~$\rk \alpha-1\geq1$. By definition of $L_n[\alpha]$,
  there is an $\omega$-term $\alpha'\in E_n[\alpha]$ such that $w\in L_n[\alpha']$. By
  Corollary~\ref{c:base-primitive-rank-i}, $\alpha'$ is
  primitive. Moreover, $\mu[\alpha']\leq\mu[\alpha]$ by Lemma~\ref{l:mu-vs-expansions},
  and $\alpha'$ is in circular normal form by
  Lemma~\ref{l:cnf-expansion}. Hence, by induction hypothesis,
  $w$ is primitive, which completes the induction step.
\end{proof}

The following result generalizes Lemma~\ref{l:base-primitive-rank-1}
in case $\alpha$ is a primitive $\omega$-term.

\begin{Lemma}
  \label{l:root}
  Let $\alpha$ be a primitive $\omega$-term in circular normal form
  and let $n\geq\mu[\alpha]$. If $z^\ell\in L_n[\alpha]^k$ then $z\in
  L_n[\alpha]^m$ for some $m$ such that $1\leq m\leq k$.
\end{Lemma}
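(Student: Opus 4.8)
The plan is to reduce the statement about a power $z^\ell$ of an arbitrary word to the primitive-word setting, where the root is unique, and then to combine Proposition~\ref{p:base-primitive} with the rank-1 periodicity result Lemma~\ref{l:base-primitive-rank-1} through the freeze construction. First I would dispose of trivial cases: if $\ell=1$ there is nothing to prove, and if $\rk\alpha=0$ then $\alpha\in X^+$ and $L_n[\alpha]=\{\alpha\}$, so $z^\ell=\alpha^k$ and the statement follows from uniqueness of primitive roots in $X^+$ (with $m$ determined by $|z|\cdot\ell=|\alpha|\cdot k$). So assume $\ell\geq 2$ and $\rk\alpha\geq 1$.

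The key observation is that $L_n[\alpha]^k = L_n[(\alpha)]'$-type languages are generated by $L_n[\alpha]$, and by Proposition~\ref{p:base-primitive} every word in $L_n[\alpha]$ is primitive (since $\alpha$ is primitive and in circular normal form). Write $z=p^d$ with $p$ primitive. From $z^\ell = p^{d\ell}\in L_n[\alpha]^k$ one wants to conclude $z\in L_n[\alpha]^m$; equivalently, one must locate, inside a factorization $p^{d\ell}=w_1\cdots w_k$ with each $w_j\in L_n[\alpha]$, a synchronized block $w_{j+1}\cdots w_{j+m}$ equal to $z=p^d$. Here I would pass to the rank-1 picture via iterated freezing exactly as in Corollary~\ref{c:base-primitive-rank-i}: treating all parentheses except the innermost rank-1 ones as letters over an enlarged alphabet $Z$, the $\omega$-term $\alpha$ becomes a rank-1 primitive $\omega$-term $\alpha_Z$ in circular normal form over $Z$, with $\mu[\alpha_Z]\leq\mu[\alpha]\leq n$, and $L_n[\alpha]^k$ over $X$ corresponds to $L_n[\alpha_Z^k]$ over $Z$ (using that $\alpha_Z^k$ is again in circular normal form — its crucial portions are crucial portions of $(\alpha_Z)^2$-type factors, which are in normal form by Lemma~\ref{l:cnf-expansion}/Lemma~\ref{l:McCammond-Lemma-10.7}). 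Then $z^\ell$, viewed over $Z$, lies in $L_n[\alpha_Z^k]$, and Lemma~\ref{l:base-primitive-rank-1} applied to $\alpha_Z^k$ produces a rank-1 $\omega$-term $\zeta$ in circular normal form with $\alpha_Z^k=\zeta^\ell$ and $z\in L_n[\zeta]$. Since $\alpha_Z$ is primitive and $\alpha_Z^k=\zeta^\ell$, primitivity forces $\zeta=\alpha_Z^m$ where $km=\ell m'$... — more precisely $\zeta$ must be a power of the primitive root $\alpha_Z$, say $\zeta=\alpha_Z^{m}$, and comparing lengths of parenthesized words gives $\ell m = k$, so $m\mid k$ and $1\le m\le k$; then $z\in L_n[\zeta]=L_n[\alpha_Z^m]$, which unfreezes to $z\in L_n[\alpha]^m$ over $X$.

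The step I expect to be the main obstacle is the careful bookkeeping of the freeze/unfreeze correspondence together with the divisibility relation between $\ell$, $k$ and $m$: one must check that $L_n[\alpha]^k$ over $X$ really does correspond to $L_n[\alpha_Z^k]$ over the frozen alphabet (so that Lemma~\ref{l:base-primitive-rank-1} is applicable to a genuine circular-normal-form $\omega$-term of rank 1), and that the primitive-root argument $\alpha_Z^k=\zeta^\ell \Rightarrow \zeta=\alpha_Z^{k/\ell}$ — which relies on $\alpha_Z$ being primitive as a parenthesized word and on uniqueness of primitive roots in the free \emph{semigroup} of parenthesized words — yields an integer $m=k/\ell$ in the range $1\le m\le k$. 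The rest is routine: conjugation adjustments (if the factorization of $z^\ell$ does not begin at a block boundary, replace $z$ by a suitable conjugate and $\alpha$ by a cyclic rotation, exactly as in the proof of Lemma~\ref{l:base-primitive-rank-1}) and the trivial cases already handled above.
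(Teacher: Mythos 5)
Your toolkit is the right one---Lemma~\ref{l:base-primitive-rank-1} as the engine, primitivity of $\alpha$ to pin down the root, freezing to move between ranks---and your base case and the rank-$1$ case agree with the paper's. But the central reduction, the claim that ``$z^\ell$, viewed over $Z$, lies in $L_n[\alpha_Z^k]$'', is a genuine gap and not mere bookkeeping. The word $z^\ell$ lives over $X$, at the bottom of the expansion tower, whereas $L_n[\alpha_Z^k]=E_n[\alpha_Z^k]$ is a set of words over the enlarged alphabet $Z$, i.e.\ of only partially expanded $\omega$-terms. What you actually get from $z^\ell\in L_n[\alpha]^k=L_n[\alpha^k]$ is some $\gamma\in E_n[\alpha^k]$ (equivalently some $\gamma_Z\in L_n[\alpha_Z^k]$) with $z^\ell\in L_n[\gamma]$; to apply Lemma~\ref{l:base-primitive-rank-1} to the rank-$1$ term $\alpha_Z^k$ you would need $\gamma_Z$ itself to be an $\ell$-th power \emph{as a word over $Z$}, and nothing in your argument shows that the $\ell$-th-power structure of the fully expanded word $z^\ell$ lifts to the partially expanded level. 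That lift is precisely the content of the lemma, so the key step assumes what is to be proved. (A secondary symptom: your conclusion $\ell m=k$ is stronger than the statement and is only justified once $\zeta^\ell$ literally equals $\alpha_Z^k$ as a parenthesized word, which is again the unproved point.)

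The paper resolves this by induction on $\rk\alpha$, organized in the opposite direction. One first descends all the way down: pick a rank-$1$ term $\beta\in E_n^{\rk\alpha-1}[\alpha^k]$ with $z^\ell\in L_n[\beta]$; here $z^\ell$ is an honest word and $\beta$ is honestly of rank $1$ over $X$, so Lemma~\ref{l:base-primitive-rank-1} applies directly and yields $\beta=\zeta^\ell$ with $z\in L_n[\zeta]$. The remaining problem---that $\zeta^\ell\in E_n^{\rk\alpha-1}[\alpha^k]$ forces $\zeta\in E_n^{\rk\alpha-1}[\alpha^m]$ for some $m\leq k$---is then exactly the statement of the lemma for $\freeze\alpha$ applied to the word $\freeze\zeta$ over $X\cup\{{\op},{\cl}\}$, since $\freeze\zeta^\ell\in L_n[\freeze\alpha]^k$ and $\rk{\freeze\alpha}=\rk\alpha-1$; the induction hypothesis finishes the job, and only in the rank-$1$ case does primitivity of $\alpha$ enter, to force $\zeta=\alpha^m$. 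If you want to keep your plan, you must replace the single top-level freeze by this rank-by-rank lifting of the periodicity.
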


\begin{proof}
  We proceed by induction on~$\rk \alpha$. If $\rk \alpha=0$, then
  $\alpha$ is a word, so $L_n[\alpha]=\{\alpha\}$ and
  $z^\ell=\alpha^k$. By~\cite[Prop.~1.3.1]{Lothaire:1983}, $z$ and
  $\alpha$ are powers of the same word, whence $z=\alpha^m$ since
  $\alpha$ is primitive, and $m=k/\ell\leq k$. Assume now that $\rk
  \alpha\geq1$ and that the result holds at lower ranks.

  Since $L_n[\alpha]^k=L_n[\alpha^k]$ by
  Lemma~\ref{l:Ln}\ref{item:Ln-2}, we have $z^\ell\in E_n[E_n^{\rk
    \alpha-1}[\alpha^k]]$. Pick an $\omega$-term $\beta\in
  E_n^{\rk\alpha-1}[\alpha^k]$ of rank~$1$ such that $z^\ell\in
  E_n[\beta]=L_n[\beta]$. Since $\alpha$ is in circular normal form,
  so is $\alpha^k$, whence so is $\beta$ by
  Lemma~\ref{l:cnf-expansion}. Since
  $n\geq\mu[\alpha]=\mu[\alpha^k]\geq\mu[\beta]$ by
  Lemma~\ref{l:mu-vs-expansions}, one can apply
  Lemma~\ref{l:base-primitive-rank-1}: there exists an $\omega$-term
  $\zeta$ of rank~$1$ such that $\beta=\zeta^\ell$ and $z\in
  L_n[\zeta]$.

  If $\rk \alpha=1$, then $\zeta^\ell\in E_n^{\rk
    \alpha-1}[\alpha^k]=\{\alpha^k\}$. Since $\alpha$ is primitive, it
  follows that $\zeta=\alpha^m$, for some~$m\leq k$, and $z\in
  L_n[\zeta]=L_n[\alpha]^m$, as required. If $\rk \alpha>1$, let us
  check that we may apply the induction hypothesis to the freeze
  $\freeze \alpha$ of $\alpha$ and $\freeze\zeta\in(X\cup\{{\op},{\cl}\})^*$.
  First, $\freeze \zeta^\ell\in L_n[\freeze \alpha^k]$. Next, since
  the crucial portions of
  $\alpha^2$ are in normal form, so are those of
  $\freeze{\alpha\cdot\alpha}=\freeze \alpha\cdot\freeze \alpha$ by
  Fact~\ref{remark:freezed-parentheses}, whence $\freeze \alpha$ is in
  circular normal form. Finally, the relations
  $n\geq\mu[\alpha]\geq\mu[\freeze \alpha]$ and $\rk{\freeze
    \alpha}=\rk \alpha-1$ hold, also by
  Fact~\ref{remark:freezed-parentheses}. By induction, we obtain
  therefore $m$ such that $1\leq m\leq k$ and $\freeze \zeta\in
  L_n[\freeze \alpha^m]$. We deduce that $\zeta\in E_n^{\rk
    \alpha-1}[\alpha^m]$, so that $z\in L_n[\zeta]\subseteq E_n^{\rk
    \alpha}[\alpha^m]=L_n[\alpha^m]$.
\end{proof}

We proceed to establish the following important property of the
languages $L_n[\alpha]$ for primitive $\omega$-terms $\alpha$. In its
proof, we apply in both directions Sch\"utzenberger's Theorem
\cite{Schutzenberger:1965}, stating that a language is star-free if
and only if its syntactic semigroup is finite and satisfies the
pseudoidentity $x^{\omega+1}=x^\omega$.

\begin{Lemma}
  \label{l:pump-bases}
  Let $\alpha$ be a primitive $\omega$-term in circular normal form
  and let $n\geq\mu[\alpha]$. If $L_n[\alpha]$ is a star-free
  language, then so is $L_n[\alpha]^*$.
\end{Lemma}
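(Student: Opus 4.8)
The plan is to use Schützenberger's Theorem in both directions, as announced. Write $L = L_n[\alpha]$; by hypothesis $L$ is star-free, so its syntactic semigroup $S$ is finite and aperiodic. I want to show $L^*$ is star-free, equivalently that the syntactic semigroup of $L^*$ is finite and aperiodic. Finiteness is automatic (regular languages have finite syntactic semigroups, and $L^*$ is regular), so the real content is aperiodicity: I must produce an integer $N$ such that $w^N$ and $w^{N+1}$ are syntactically equivalent with respect to $L^*$ for every word $w$. Concretely, it suffices to show that for all words $x,y,w$, we have $xw^Ny \in L^*$ iff $xw^{N+1}y \in L^*$. Since $S$ is aperiodic, there is $M$ with $s^M = s^{M+1}$ for all $s \in S$; I expect to take $N$ to be a suitable multiple of $M$ and $|\alpha|$, large enough to also exceed the combinatorial thresholds ($\mu[\alpha]$, $|\alpha|$, $n$) needed to invoke the primitivity and root results.

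The heart of the argument is a pumping analysis: given a factorization of $xw^Ny$ as a product of elements of $L$, I want to locate a ``middle'' occurrence of $w$ (or of a power of $w$) that sits well inside a single $L$-factor, or spans a controlled number of consecutive $L$-factors, and then add or remove one copy of $w$ there. Here is where the structural results come in. First, because $N$ is huge compared to $|\alpha|$, a long block $w^{N'}$ (for $N'$ somewhat smaller than $N$) must be covered by the $L$-factorization in a way that eventually becomes periodic: some factor of $w^{N'}$ that is itself a word in $L$ (a single $L$-factor, or a short product of them) repeats. Using Lemma~\ref{l:root} — after passing, if necessary, to the primitive root of $\alpha$, which is legitimate since $\alpha$ is in circular normal form and $n \geq \mu[\alpha]$ — I can extract that a suitable power $z^\ell$ lying in $L^k$ forces $z \in L^m$ with $m \leq k$; combined with Proposition~\ref{p:base-primitive} (words in $L_n[\alpha]$ are primitive when $\alpha$ is), this pins down the period of the relevant block as a genuine power of a single element of $L$. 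Then inserting or deleting one such period changes the membership pattern in $L^*$ only through the behaviour of the syntactic image in $S$ of the surrounding context, and aperiodicity of $S$ (the identity $s^M = s^{M+1}$) absorbs that change.

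More precisely, the second main step is: once I know the long $w$-block reads as $(\text{a fixed word } p)^t$ with $p \in L$ and $t$ large, I consider the syntactic morphism $\eta\colon X^+ \to S$ of $L$. For $v \in S$ the element $\eta(p)$, aperiodicity gives $\eta(p)^M = \eta(p)^{M+1}$, so the words $p^M$ and $p^{M+1}$ are $L$-equivalent, and hence the full words $xw^Ny$ and $xw^{N+1}y$ — which differ (after the above normalization of the central block) by replacing $p^{M'}$ with $p^{M'+1}$ for some $M' \geq M$ — define the same language membership with respect to any finite product $L^j$. Summing over the finitely many possible values of $j$ (bounded because $|xw^Ny|$-factorizations into $L$-pieces use at most linearly many pieces, and more to the point because $L^*$ is itself regular) yields $xw^Ny \in L^*$ iff $xw^{N+1}y \in L^*$. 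By Schützenberger's Theorem in the converse direction, $L^*$ is star-free.

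The step I expect to be the main obstacle is the pumping/periodicity extraction: showing rigorously that a long power $w^{N'}$, when cut by an arbitrary $L$-factorization of the ambient word, must contain a block that is \emph{synchronously} a power of a single $L$-element, so that Lemma~\ref{l:root} and Proposition~\ref{p:base-primitive} can be applied. This requires a pigeonhole argument on the (finitely many) ``states'' of the $L$-factorization as it scans across $w^{N'}$, together with a Fine–Wilf-type synchronization (to reconcile the period $|w|$ of the block with the period coming from the repeated $L$-factor) — essentially re-deriving that two overlapping periodicities of total length exceeding their sum must coincide. Handling the boundary effects at the two ends of the $w$-block, and making sure the inserted/deleted copy of $w$ lands in the periodic region rather than near a boundary, is the delicate bookkeeping; choosing $N$ as a large enough multiple of $\operatorname{lcm}(M,|\alpha|,\mu[\alpha])$ should give all the slack needed.
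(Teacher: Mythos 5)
Your overall strategy matches the paper's: write $L=L_n[\alpha]$, and show that for $N$ large enough, $xy^Nz\in L^*$ forces $xy^{N+1}z\in L^*$, splitting into the case where some block $y^M$ (with $M$ an aperiodicity exponent for the syntactic semigroup of $L$) lies entirely inside a single factor of an $L$-factorization of $xy^Nz$ --- where pumping inside that factor works exactly as you say --- and the case where every such block straddles several factors, handled by a pigeonhole on automaton states plus Lemma~\ref{l:root}. However, the second case, which you yourself flag as the main obstacle, is left unresolved, and the mechanism you sketch for it does not work as stated. You propose to identify the long central block as $p^t$ with $p$ a \emph{single} element of $L$ and then to replace $p^{M'}$ by $p^{M'+1}$ using aperiodicity of the syntactic semigroup of $L$. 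Two problems: Lemma~\ref{l:root} only yields membership of a root in $L^m$ for some $m\geq1$, not in $L$ itself; and, more seriously, inserting one copy of $p$ lengthens the word by $|p|$ letters, which is not $|y|$ in general, so the resulting word is not $xy^{N+1}z$. What is actually needed is a conjugate $t$ of $y$ itself, located at a conjugacy-compatible cut point, together with the fact that $t\in L^*$.

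The missing idea --- which also dissolves your worry about a Fine--Wilf synchronization between the period $|y|$ and the period of the repeated factor --- is to sample the state of a DFA for $L$ at the starting positions of the first $K$ consecutive $y^M$-blocks; each such position falls strictly inside some factor $w_{j_s}$, splitting it as $w_{j_s,1}w_{j_s,2}$ with $y^M=w_{j_s,2}x_sw_{j_{s+1},1}$. Pigeonhole on $K$ exceeding the number of states gives $p<q$ with $w_{j_q,1}w_{j_p,2}\in L$, whence $(w_{j_{p+1},1}w_{j_p,2}x_p)^{q-p}\in L^*$; this word is \emph{by construction} a conjugate of $y^{M(q-p)}$, i.e.\ equal to $t^{M(q-p)}$ for a conjugate $t$ of $y$ --- the alignment with the period $|y|$ is automatic because the sampling was done at block boundaries, so no Fine--Wilf argument is required. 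Lemma~\ref{l:root} (the only place where primitivity and circular normal form of $\alpha$ enter) then gives $t\in L^m\subseteq L^*$, and inserting this single $t$ between $w_{j_p}x_p$ and $w_{j_{p+1}}$ produces exactly $xy^{N+1}z$ as a product of words of $L^*$. A small point in your favour: you need not establish the converse implication separately, since the syntactic semigroup of the regular language $L^*$ is finite and the forward implication for all large $N$ already forces it to be aperiodic.
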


\begin{proof}
  Let $M$ be an integer such that the syntactic semigroup
  of~$L_n[\alpha]$ satisfies the identity $x^M=x^{M+1}$ and let $K$ be
  a positive integer to be identified later. Let $N>MK$ be an integer
  and suppose that $x,y,z$ are words such that $xy^Nz\in
  L_n[\alpha]^*$. The result follows from the claim that, for
  sufficiently large $K$, depending only on $\alpha$ and $n$,
  $xy^{N+1}z$ belongs to~$L_n[\alpha]^*$.

  To prove the claim, we start with a factorization $xy^Nz=w_1\cdots
  w_m$ where each $w_j\in L_n[\alpha]$. Consider each product of $M$
  consecutive $y$'s within the factor $y^N$. If at least one of the
  factors appears completely within one of the $w_j$, then we have a
  factorization $w_j=x'y^Mz'$ as indicated in
  Figure~\ref{fig:l:pump-bases-case-short}.
  \begin{figure}[ht]
    \centering
    \unitlength=.45\unitlength
    \begin{picture}(250,35)(0,-20)
      \small
      \put(0,0){\line(1,0){250}}
      \put(0,-2){\line(0,1){4}}
      \put(250,-2){\line(0,1){4}}
      \curvel(0,0,10,23,0,$w_1$,l)
      \put(28,0){$\cdots$}
      \curvel(40,0,10,70,0,$w_{j-1}$,l)
      \curvel(70,0,10,160,0,$w_j$,l)
      \curvel(160,0,10,200,0,$w_{j+1}$,l)
      \put(208,0){$\cdots$}
      \curvel(222,0,10,250,0,$w_m$,l)

      \curvel(90,0,-10,130,0,$y^M$,r)

      \curvel(0,-8,-18,90,0,$xy^p$,r)
      \curvel(130,-8,-18,250,0,$y^{N-M-p}z$,r)

      \straight(70,90,r,$x'$,.7)
      \straight(130,160,r,$z'$,.7)
    \end{picture}
    \caption{Case where some $y^M$ falls within some $w_j$}
    \label{fig:l:pump-bases-case-short}
  \end{figure}
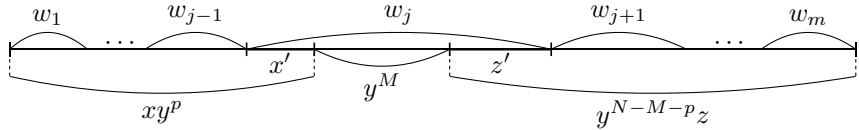
  In particular, the word $x'y^Mz'$ belongs to the star-free language
  $L_n[\alpha]$. By the choice of~$M$, we deduce that
  $w_j'=x'y^{M+1}z'\in L_n[\alpha]$. Hence, for $p$ as in
  Figure~\ref{fig:l:pump-bases-case-short},
  $$xy^{N+1}z
  =xy^p\cdot y^{M+1}\cdot y^{N-M-p}z =w_1\cdots
  w_{j-1}w_j'w_{j+1}\cdots w_m$$
  is again a word from $L_n[\alpha]^m$, independently of the value
  of~$K\geq1$.

  We may therefore assume that no factor $y^M$ appears completely
  within some factor $w_j$. Thus, each of the
  first $K<N/M$ consecutive factors $y^M$, which form a prefix
  of~$y^N$, as well as the product $y^{N-KM}z$, start in a different
  $w_j$, say in $w_{j_1},\ldots,w_{j_{K+1}}$, with
  $j_1<\cdots<j_{K+1}$. This determines factorizations
  \begin{align}
    w_{j_s}&=w_{j_s,1}w_{j_s,2}
    \label{eq:pump-bases-1}\\
    y^M&=w_{j_s,2}x_sw_{j_{s+1},1}\ (s=1,\ldots,K)
    \label{eq:pump-bases-2}\\
    x&=x'w_{j_1,1}
    \nonumber\\
    y^{N-KM}z&=w_{j_{K+1},2}z'
    \nonumber
  \end{align}
  where each $x_s$, $x'$, and $z'$ is a word from $L_n[\alpha]^*$, as
  represented in Figure~\ref{fig:l:pump-bases-case-long}.
  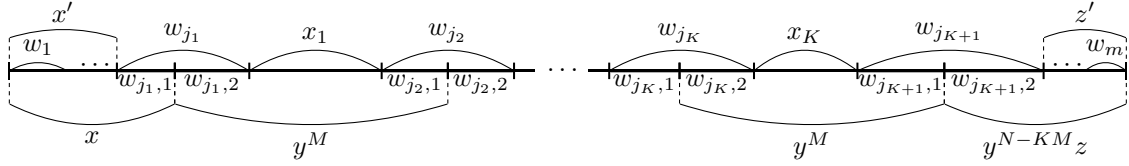
\begin{figure}[h]
    \centering
    \unitlength=.55\unitlength
    \begin{picture}(270,40)(0,-20)
      \small
      \put(0,0){\line(1,0){127}}
      \put(130,0){$\ldots$}
      \put(140,0){\line(1,0){130}}
      \put(0,-2){\line(0,1){4}}
      \put(270,-2){\line(0,1){4}}
      \curvel(0,0,4,14,0,$w_1$,l)
      \put(17,0){$\cdots$}
      \curvel(0,8,12,26,0,$x'$,l)
      \curvel(26,0,10,58,0,$w_{j_1}$,l)
      \curvel(58,0,10,90,0,$x_1$,l)
      \curvel(90,0,10,122,0,$w_{j_2}$,l)

      \curvel(145,0,10,180,0,$w_{j_{K}}$,l)
      \curvel(180,0,10,205,0,$x_{K}$,l)
      \curvel(205,0,10,250,0,$w_{j_{K+1}}$,l)
      \put(252,0){$\cdots$}
      \curvel(260,0,4,270,0,$w_m$,l)
      \curvel(250,8,12,270,0,$z'$,l)

      \curvel(0,-8,-18,40,0,$x$,r)
      \curvel(40,-8,-18,106,0,$y^M$,r)

      \curvel(162,-8,-18,226,0,$y^M$,r)
      \curvel(226,-8,-18,270,0,$y^{N-KM}z$,r)

      \straight(26,40,r,$w_{j_1,1}$,.7)
      \straight(40,58,r,$w_{j_1,2}$,.7)
      \straight(90,106,r,$w_{j_2,1}$,.7)
      \straight(106,122,r,$w_{j_2,2}$,.7)

      \straight(145,162,r,$w_{j_{K},1}$,.7)
      \straight(162,180,r,$w_{j_{K},2}$,.7)
      \straight(205,226,r,$w_{j_{K+1},1}$,.7)
      \straight(226,250,r,$w_{j_{K+1},2}$,.7)
    \end{picture}
    \caption{Case where each $y^M$ overlaps several $w_j$}
    \label{fig:l:pump-bases-case-long}
  \end{figure}

  Consider a finite deterministic automaton recognizing the language
  $L_n[\alpha]$. Each pair of words $(w_{j_s,1},w_{j_s,2})$ determines
  two consecutive paths leading from the initial state to a final
  state. Thus, if $K$~is greater than the number of states, then there
  exist two indices $p,q$ such that $1\leq p<q\leq K$ and the words
  $w_{j_p,1}$ and $w_{j_q,1}$ both lead from the initial state to the
  same state. It follows that $w_{j_q,1}w_{j_p,2}$ belongs to
  $L_n[\alpha]$.
  Hence the word
  \begin{align*}
    w_{j_{p+1},1}y^{M(q-p-2)}w_{j_{q-1},2}x_{q-1}w_{j_q,1}\cdot w_{j_p,2}x_p
    &=w_{j_{p+1},1}y^{M(q-p-1)}w_{j_p,2}x_p\\
    &=(w_{j_{p+1},1}w_{j_p,2}x_p)^{q-p}
  \end{align*}
  belongs to~$L_n[\alpha]^*$ where, for the second equality, we use the
  factorization~\eqref{eq:pump-bases-2} with $s=p$ for each $y^M$.
  Now, $w_{j_{p+1},1}w_{j_p,2}x_p$ is a conjugate of $y^M$ again by
  \eqref{eq:pump-bases-2} and, therefore, it is of the form $t^M$,
  where $t$ is a conjugate of~$y$. By Lemma~\ref{l:root}, $t$ belongs
  to~$L_n[\alpha]^*$. On the other hand, note that
  \begin{align*}
    xy^Nz
    &=x'w_{j_1}x_1\cdots w_{j_{p-1}}x_{p-1}w_{j_p}x_p
    \cdot
    w_{j_{p+1}}x_{p+1}\cdots w_{j_K}x_Kw_{j_{K+1}}z'
    \\
    xy^{N+1}z
    &=x'w_{j_1}x_1\cdots w_{j_{p-1}}x_{p-1}w_{j_p}x_p
    \cdot
    t
    \cdot
    w_{j_{p+1}}x_{p+1}\cdots w_{j_K}x_Kw_{j_{K+1}}z'
  \end{align*}
  where each of the factors separated by the $\cdot$'s belongs
  to~$L_n[\alpha]^*$. Hence $xy^{N+1}z\in L_n[\alpha]^*$.
\end{proof}

We are now ready to complete the proof of our key result, namely that,
for $\alpha$ in normal form and $n\geq\mu[\alpha]$, the languages
$L_n[\alpha]$ are star-free.
    
\begin{proof}[\bf Proof of Theorem~\ref{t:star-free}]
  Let $i=\rk \alpha$. If $i=0$, then $L_n[\alpha]=\{\alpha\}$ is certainly a
  star-free language. We will therefore assume that $i\geq1$. Let
  $\alpha=\gamma_0(\delta_1) \gamma_1\cdots (\delta_r)\gamma_r$ be the normal form
  expression of~$\alpha$.

  We claim that each of the languages $L_n[\gamma_0]$, $L_n[\delta_j]$
  and $L_n[\delta_j\gamma_j]$ ($j=1,\ldots,r$) is star-free.
  Since, by the definition of the normal form, each $\delta_j$ is
  primitive and in circular normal form, we deduce by
  Lemma~\ref{l:pump-bases} that $L_n[\delta_j]^*$~is star-free.
  In view of Lemma~\ref{l:Ln}\ref{item:Ln-2} and since the set of star-free
  languages is closed under concatenation, it follows that each language
  $L_n[(\delta_j)\gamma_j]
  =L_n[\delta_j]^*L_n[\delta_j]^{n-1}L_n[\delta_j\gamma_j]$ is also
  star-free. Taking also into account Lemma~\ref{l:Ln}\ref{item:Ln-3},
  we conclude that the product
  $$L_n[\alpha]
  =L_n[\gamma_0]L_n[(\delta_1)\gamma_1]\cdots L_n[(\delta_r)\gamma_r]$$
  is star-free, as stated in the theorem.

  To prove the claim, we proceed by induction on $i\geq1$. The case
  $i=1$ is immediate since then all the $\gamma_j$ and $\delta_j$ are
  words in~$X^*$. Suppose that $i\geq2$ and assume inductively that
  the claim holds for $\omega$-terms of rank less than~$i$. Consider
  the $\omega$-term
  \begin{math}
    \alpha'=\gamma_0\delta_1\delta_1\gamma_1\cdots \delta_r\delta_r\gamma_r.
  \end{math}
  By condition~\ref{item:rank-i-3} of the definition of an $\omega$-term in
  normal form, $\alpha'$ is in normal form.
  By Lemma~\ref{l:mu-vs-expansions}, since $\alpha'\in E_2[\alpha]$,
  we have $\mu[\alpha]\geq\mu[\alpha']$.  Hence $n\geq\mu[\alpha']$ and we may apply the
  induction hypothesis to the $\omega$-term $\alpha'$ of rank $i-1\geq1$. Since $\alpha$
  is in normal form and the $\omega$-terms $\delta_j$ are Lyndon words of
  positive rank, the first letter of each $\delta_j$ is the opening
  parenthesis of an $\omega$-subterm of highest (and positive) rank.
  Hence, if $\alpha'=u_0(v_1) u_1\cdots (v_s) u_s$ 
  is the normal form expression of~$\alpha'$, then each factor
  $\gamma_0,\delta_j,\delta_j\gamma_j$ ($j=1,\ldots,r$) must be a product of some of the
  factors $u_0,(v_k),(v_k) u_k$ ($k=1,\ldots,s$). By the induction
  hypothesis, each of the languages $L_n[u_0]$, $L_n[v_k]$, and
  $L_n[v_ku_k]$ ($k=1,\ldots,s$) is star-free. By the above argument, it
  follows that so are the languages $L_n[u_0]$, $L_n[(v_k)]$, and
  $L_n[(v_k) u_k]$ ($k=1,\ldots,s$). Finally, by
  Lemma~\ref{l:Ln}\ref{item:Ln-2}, we deduce that each of the
  languages $L_n[\gamma_0],L_n[\delta_j],L_n[\delta_j\gamma_j]$ ($j=1,\ldots,r$) is star-free,
  thus proving the induction step. This proves the claim and completes
  the proof of Theorem~\ref{t:star-free}.
\end{proof}
We do not know whether the bound $n\geq\mu[\alpha]$ is optimal but we do know
that some bound is required, that is that $L_n[\alpha]$ may not be
star-free for $\alpha$ in normal form. An example is obtained by taking
$\alpha=((a) ab(b) a^2b^2)$, where $a$ and $b$ are letters.  Then
$L_1[\alpha]\cap[a^2b^2]^*=[a^2b^2a^2b^2]^+$ so that $L_1[\alpha]$ is not star-free
since $[a^2b^2]^*$ is star-free and $[a^2b^2a^2b^2]^+$ is not.

\section{\texorpdfstring{Factors of $\omega$-words over \pv A}{Factors of
    omega-words over A}}
\label{sec:factors-of-omega-words}

In this section we present further properties of the languages
$L_n[\alpha]$ and derive some applications. The main result of this section
is that every factor of an $\omega$-word over \pv A is also an $\omega$-word
over~\pv A.

Recall that given a pseudovariety \pv V, a finite semigroup $T\in\pv V$
\emph{satisfies} the pseudoidentity $u=v$, with $u,v\in\Om XV$, if, for
every continuous homomorphism $\varphi:\Om XV\to T$, we have $\varphi(u)=\varphi(v)$.  For
a finite semigroup $T$, let $\mathrm{ind}(T)$ be the smallest $\ell\geq1$
such that for some $k\geq1$ and every $s\in T$, we have
$s^{\ell+k}=s^\ell$. Equivalently, $\mathrm{ind}(T)$ is the minimum positive
integer $\ell$ such that $T$ satisfies the pseudoidentity
$x^{\omega+\ell}=x^\ell$. Note that $\mathrm{ind}(T)\leq|T|$. We begin by proving
that finite aperiodic semigroups do not separate an $\omega$-term from its
expansions of sufficiently large exponent.

\begin{Lemma}
  \label{l:trivial-in-finite-aperiodic}
  Let $\alpha\in \terms{X}$ be an $\omega$-term and let $T\in\pv A$.
  If $n\geq\ind T$ and $w\in L_n[\alpha]$, then $T$ satisfies the
  pseudoidentity~$\epsilon[\alpha]=w$.
\end{Lemma}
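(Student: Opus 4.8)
The plan is to prove the statement by induction on the rank of $\alpha$, mirroring the recursive structure of the definition of $L_n[\alpha]$ and of the $\omega$-power. The base case $\rk\alpha=0$ is trivial: then $\alpha\in X^*$, $L_n[\alpha]=\{\alpha\}$, so $w=\alpha$ and $\epsilon[\alpha]=\epsilon[w]$ holds in any semigroup. For the inductive step, I would first reduce to the case of a single outermost $\omega$-power via Lemma~\ref{l:Ln}. Indeed, writing $\alpha=\gamma_0(\delta_1)\gamma_1\cdots(\delta_r)\gamma_r$ as in~\eqref{eq:an-omega-word}, Lemma~\ref{l:Ln}\ref{item:Ln-3} factors $L_n[\alpha]$ as a product $L_n[\gamma_0]L_n[(\delta_1)]\cdots L_n[(\delta_r)]L_n[\gamma_r]$, so $w$ decomposes accordingly as $w=w_0'w_1\cdots w_r'w_r$ with $w_j\in L_n[\gamma_j]$ and $w_k'\in L_n[(\delta_k)]$. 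Since $\epsilon$ and $\varphi$ are homomorphisms, it suffices to show $T$ satisfies $\epsilon[\gamma_j]=w_j$ and $\epsilon[(\delta_k)]=w_k'$ for each index. The former follows from the induction hypothesis applied to $\gamma_j$ (whose rank is at most $\rk\alpha-1$); the work is in the $\omega$-power factors.

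For a factor $w_k'\in L_n[(\delta_k)]$, Lemma~\ref{l:Ln}\ref{item:Ln-4} gives $L_n[(\delta_k)]=L_n[\delta_k]^*L_n[\delta_k]^n$, so $w_k'=x_1x_2\cdots x_m$ with each $x_i\in L_n[\delta_k]$ and $m\geq n\geq\ind T$. Fix a continuous homomorphism $\varphi\colon\Om XS\to T$. By the induction hypothesis applied to $\delta_k$ (rank at most $\rk\alpha-1$), $T$ satisfies $\epsilon[\delta_k]=x_i$ for each $i$, so $\varphi(x_i)=\varphi(\epsilon[\delta_k])$ is a single fixed element $t\in T$, and hence $\varphi(w_k')=t^m$. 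On the other hand $\varphi(\epsilon[(\delta_k)])=\varphi(\epsilon[\delta_k]^\omega)=\varphi(\epsilon[\delta_k])^\omega=t^\omega$, the unique idempotent power of $t$. Now since $m\geq\ind T$, the definition of $\ind T$ gives $t^{m}=t^{m+k'}$ for the relevant period $k'$, which forces $t^m$ to lie in the cyclic subsemigroup generated by $t$ at an index $\geq\ind T$; as that subsemigroup has a single idempotent $t^\omega$ and the eventual behaviour is periodic with the idempotent itself being $t^{\ell}$ for $\ell$ a multiple of the period beyond $\ind T$, one concludes $t^m=t^\omega$. (Concretely: in a finite semigroup, $t^\omega=t^{N}$ for $N$ a common multiple of the period that is $\geq$ index, and for any $m\geq\ind T$ one has $t^m=t^{m'}$ whenever $m\equiv m'$ mod the period and both are $\geq\ind T$, so $t^m=t^\omega$.) This yields $\varphi(w_k')=\varphi(\epsilon[(\delta_k)])$.

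Assembling the pieces: for any continuous homomorphism $\varphi\colon\Om XS\to T$ we have shown $\varphi(w_j)=\varphi(\epsilon[\gamma_j])$ and $\varphi(w_k')=\varphi(\epsilon[(\delta_k)])$ for all indices, hence by multiplicativity $\varphi(w)=\varphi(\epsilon[\alpha])$, i.e.\ $T$ satisfies $\epsilon[\alpha]=w$. This completes the induction and the proof.

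The only subtle point — which I expect to be the main obstacle to state cleanly — is the elementary semigroup fact that in an aperiodic finite semigroup $T$ (or more precisely using only $n\geq\ind T$), a power $t^m$ with exponent at least $\ind T$ equals the idempotent $t^\omega$. For aperiodic $T$ this is immediate because the cyclic subsemigroup generated by $t$ stabilises: $t^{\ind T}=t^{\ind T+1}=\cdots=t^\omega$. Even without full aperiodicity, the definition of $\ind T$ is precisely tailored so that $t^{\omega+\ell}=t^\ell$ holds for $\ell=\ind T$, and since $n\geq\ind T$ one can write $t^m=t^{\ell}\cdot t^{m-\ell}=t^{\omega+\ell}t^{m-\ell}=t^\omega t^{m-\ell}=t^\omega$ using $t^\omega t=t^\omega$ repeatedly. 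So the argument goes through and the hypothesis $n\geq\ind T$ is exactly what is needed.
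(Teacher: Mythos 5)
Your proof is correct and rests on the same key fact as the paper's: in an aperiodic $T$ with $n\geq\ind T$, all powers $t^m$ with $m\geq n$ coincide with $t^\omega=t^n$. The paper compresses this into two sentences (every word of $L_n[\alpha]$, and $\epsilon[\alpha]$ itself, map under $\varphi$ to the image of the single word obtained by replacing each $\omega$ by $n$); your induction on rank merely makes explicit the unfolding of nested powers that the paper leaves implicit, so it is essentially the same argument.
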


\begin{proof}
  Let $\varphi:\Om XA\to T$ be a continuous homomorphism. Since
  $n\geq\ind T$, for every $m\geq n$, the semigroup $T$ satisfies the
  identity $x^m=x^n$. Hence, for every word $w\in L_n[\alpha]$, we
  have $\varphi(w)=\varphi(u)$, where $u$ is the word which is
  obtained from $\alpha$ by replacing all occurrences of the $\omega$
  exponent by~$n$.
\end{proof}

Recall from Section~\ref{sec:pseudowords--words} that the topological
closures $\clos L$ and $\closv AL$ of a language $L$ in \Om XS and \Om
XA, respectively, are such that $\pj A(\clos L)=\closv AL$. The
following consequence of Lemma~\ref{l:trivial-in-finite-aperiodic}
will be useful.
\begin{Cor}
  \label{c:trivial-in-aperiodic}
  If $\alpha\in \terms{X}$ is an arbitrary $\omega$-term, then
  $$\pj A\Bigl(\bigcap_n\clos{L_n[\alpha]}\Bigr)
  =\{\epsilon[\alpha]\}
  =\bigcap_n\pj A\bigl(\clos{L_n[\alpha]}\bigr).$$
\end{Cor}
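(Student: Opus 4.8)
The plan is to prove the two equalities in Corollary~\ref{c:trivial-in-aperiodic} by a combination of elementary topology and the lemma that immediately precedes it. Write $C=\bigcap_n\clos{L_n[\alpha]}$. Since the sequence $\bigl(L_n[\alpha]\bigr)_n$ is decreasing, so is the sequence of closures $\bigl(\clos{L_n[\alpha]}\bigr)_n$ in the compact space $\Om XS$; each $\clos{L_n[\alpha]}$ is closed, hence compact, and nonempty (it contains words from $L_n[\alpha]$, which is nonempty). Thus $C$ is a nonempty intersection of a decreasing family of compact sets. The continuous homomorphism $\pj A$ maps compact sets to compact sets and preserves nonempty decreasing intersections of compact sets, so $\pj A(C)=\bigcap_n\pj A\bigl(\clos{L_n[\alpha]}\bigr)$; this already gives the right-hand equality for free, and reduces the whole statement to showing $\pj A(C)=\{\epsilon[\alpha]\}$.

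Next I would show $\epsilon[\alpha]\in\pj A(C)$. Recall from Section~\ref{sec:term-expansions} that $\epsilon[\alpha]$ lies in the closure $\closv A{L_n[\alpha]}$ of $L_n[\alpha]$ in $\Om XA$ for every $n$; equivalently, since $\pj A(\clos{L_n[\alpha]})=\closv A{L_n[\alpha]}$, there is for each $n$ some element of $\clos{L_n[\alpha]}$ mapping to $\epsilon[\alpha]$ under $\pj A$. To produce a single element of $C$ mapping to $\epsilon[\alpha]$, consider the sets $D_n=\clos{L_n[\alpha]}\cap\pj A^{-1}(\epsilon[\alpha])$. Each $D_n$ is closed (hence compact) in $\Om XS$, nonempty by the previous remark, and the family is decreasing; by compactness $\bigcap_n D_n\neq\emptyset$, and any element of this intersection lies in $C$ and maps to $\epsilon[\alpha]$. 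Hence $\epsilon[\alpha]\in\pj A(C)$.

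For the reverse inclusion $\pj A(C)\subseteq\{\epsilon[\alpha]\}$, let $\pi\in C$; I must show $\pj A(\pi)=\epsilon[\alpha]$. Here is where Lemma~\ref{l:trivial-in-finite-aperiodic} enters. Fix any finite aperiodic semigroup $T$ and any continuous homomorphism $\varphi:\Om XA\to T$. Choose $n\geq\ind T$. Since $\pi\in\clos{L_n[\alpha]}$, there is a net (or, using metrizability of $\Om XS$, a sequence) of words $w_k\in L_n[\alpha]$ converging to $\pi$ in $\Om XS$; applying $\pj A$ and then $\varphi$, and using continuity of $\varphi\circ\pj A$, we get $\varphi(\pj A(w_k))\to\varphi(\pj A(\pi))$. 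But $\varphi\circ\pj A$ takes finitely many values, so it is eventually constant along the sequence: $\varphi(\pj A(\pi))=\varphi(\pj A(w_k))$ for $k$ large. By Lemma~\ref{l:trivial-in-finite-aperiodic}, $T$ satisfies $\epsilon[\alpha]=w_k$, that is $\varphi(\epsilon[\alpha])=\varphi(\pj A(w_k))$ (viewing $w_k\in X^+\subseteq\Om XA$, whose image under $\pj A$ is itself). Combining, $\varphi(\pj A(\pi))=\varphi(\epsilon[\alpha])$. Since $T$ and $\varphi$ were arbitrary and $\Om XA$ is residually in $\pv A$ (finite aperiodic semigroups separate points of $\Om XA$), we conclude $\pj A(\pi)=\epsilon[\alpha]$. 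Thus $\pj A(C)=\{\epsilon[\alpha]\}$, which together with the first paragraph completes the proof.

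The main obstacle is bookkeeping rather than conceptual: one must be careful that $\pj A$ genuinely commutes with the decreasing compact intersection (it does, because continuous images of compact sets are compact and $\pj A^{-1}$ of a point is closed), and that passing from ``$T$ satisfies $\epsilon[\alpha]=w$ for all finite aperiodic $T$'' to ``$\pj A(\pi)=\epsilon[\alpha]$'' is legitimate — this is exactly the statement that $\Om XA$ is the free pro-$\pv A$ semigroup, so its elements are separated by continuous homomorphisms into $\pv A$. No new combinatorics is needed; everything rests on Lemma~\ref{l:trivial-in-finite-aperiodic} and soft compactness arguments.
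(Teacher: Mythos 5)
Your proof is correct and follows essentially the same route as the paper: the easy inclusion comes from exhibiting a preimage of $\epsilon[\alpha]$ in every $\clos{L_n[\alpha]}$, and the hard one from Lemma~\ref{l:trivial-in-finite-aperiodic} together with the fact that $\Om XA$ is residually in~$\pv A$. The only (harmless) difference is organizational: you establish $\pj A\bigl(\bigcap_n\clos{L_n[\alpha]}\bigr)=\bigcap_n\pj A\bigl(\clos{L_n[\alpha]}\bigr)$ directly by a compactness argument, whereas the paper gets it for free by sandwiching both sets between $\{\epsilon[\alpha]\}$ and itself.
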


\begin{proof}
  Denote by $\partial$ the unique homomorphism of unary semigroups
  $\terms{X}\to\Om XS$ extending the identity mapping on~$X$ so that
  $\epsilon=\pj A\circ\partial$. First note that, since $\partial[\alpha]\in\clos{L_n[\alpha]}$ for every
  $n$, certainly $\epsilon[\alpha]\in\pj A\bigl(\bigcap_n\clos{L_n[\alpha]}\bigr)$,~so
  $$\{\epsilon[\alpha]\}
  \subseteq \pj A\Bigl(\bigcap_n\clos{L_n[\alpha]}\Bigr) \subseteq
  \bigcap_n\pj A \bigl(\clos{L_n[\alpha]}\bigr).
  $$
  Let $v\in\bigcap_n\pj A\bigl(\clos{L_n[\alpha]}\bigr)$. For a
  continuous homomorphism $\psi:\Om XA\to T$ onto a finite aperiodic
  semigroup $T$, let $\varphi=\psi\circ\pj A:\Om XS\to T$ and choose any
  $n\geq\ind T$. Then
  $$\psi(v)
  \in\varphi\bigl(\clos{L_n[\alpha]}\bigr)
  =\varphi(L_n[\alpha])
  =\{\varphi(\partial[\alpha])\}
  $$
  where the first equality follows from the continuity of~$\varphi$
  and the finiteness of~$T$, and the second equality is a consequence
  of Lemma~\ref{l:trivial-in-finite-aperiodic}. Since \Om XA is
  residually in~\pv A, it follows that $v=\epsilon[\alpha]$.
\end{proof}

We also have the following stronger result for $\omega$-terms in normal
form.

\begin{Thm}
  \label{t:pA-inverse-image-of-kterms}
  Let $w\in\omc XA$ and let $\alpha$ be the normal form representation
  of $w$. Then
  $$\pj A^{-1}(w)=\bigcap_n\clos{L_n[\alpha]}.$$
\end{Thm}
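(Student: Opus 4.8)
The plan is to prove the two inclusions separately. For the inclusion $\bigcap_n\clos{L_n[\alpha]}\subseteq\pj A^{-1}(w)$, I would apply \pj A and use Corollary~\ref{c:trivial-in-aperiodic}: since $\pj A\bigl(\bigcap_n\clos{L_n[\alpha]}\bigr)\subseteq\bigcap_n\pj A(\clos{L_n[\alpha]})=\{\epsilon[\alpha]\}=\{w\}$, every element of $\bigcap_n\clos{L_n[\alpha]}$ maps to $w$ under \pj A, which is exactly the desired inclusion. This direction does not even use that $\alpha$ is in normal form.

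The reverse inclusion $\pj A^{-1}(w)\subseteq\bigcap_n\clos{L_n[\alpha]}$ is where the normal form hypothesis and star-freeness enter. Fix $u\in\Om XS$ with $\pj A(u)=w$; I must show $u\in\clos{L_n[\alpha]}$ for every $n$. It suffices to treat $n\geq\mu[\alpha]$, since the sequence $\bigl(\clos{L_n[\alpha]}\bigr)_n$ is decreasing (as $L_n[\alpha]\supseteq L_{n+1}[\alpha]$). For such $n$, Theorem~\ref{t:star-free} tells us $L_n[\alpha]$ is star-free, hence recognized by a finite aperiodic semigroup; let $\varphi_n:\Om XS\to T_n$ be the corresponding continuous homomorphism, with $T_n\in\pv A$ and $L_n[\alpha]=\varphi_n^{-1}(P_n)$ for some $P_n\subseteq T_n$, so that $\clos{L_n[\alpha]}=\varphi_n^{-1}(P_n)$ as well (recognizable subsets are clopen). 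Now I want to replace $n$ by a larger threshold if needed: choose $N\geq\max\{n,\ind{T_n}\}$. Then $\clos{L_N[\alpha]}\subseteq\clos{L_n[\alpha]}$, so it is enough to show $u\in\clos{L_N[\alpha]}$, i.e. $\varphi_n(u)\in P_n$. Pick any word $w_0\in L_N[\alpha]$; by Lemma~\ref{l:trivial-in-finite-aperiodic}, $T_n$ satisfies $\epsilon[\alpha]=w_0$, i.e. (composing with the section of \pj A used implicitly) $\varphi_n$ agrees on $\partial[\alpha]$ and on $w_0$. Since $\pj A(u)=w=\epsilon[\alpha]=\pj A(\partial[\alpha])$ and $T_n$ is aperiodic, $\varphi_n$ factors through \pj A, so $\varphi_n(u)=\varphi_n(\partial[\alpha])=\varphi_n(w_0)\in P_n$ because $w_0\in L_N[\alpha]\subseteq L_n[\alpha]$. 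Hence $u\in\clos{L_n[\alpha]}$.

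The main obstacle — or rather the point requiring care — is the bookkeeping about which semigroup recognizes $L_n[\alpha]$ and the passage from the arbitrary $n$ in the statement to a threshold large enough to simultaneously (i) make $L_n[\alpha]$ star-free, i.e. $n\geq\mu[\alpha]$, and (ii) exceed $\ind{T_n}$ so that Lemma~\ref{l:trivial-in-finite-aperiodic} collapses $\epsilon[\alpha]$ with an arbitrary expansion word in $T_n$. The monotonicity $\clos{L_{n+1}[\alpha]}\subseteq\clos{L_n[\alpha]}$ is what lets these threshold increases be harmless. One must also keep straight that \pj A being surjective and $T_n$ aperiodic forces $\varphi_n$ to factor through \pj A — this is the standard fact that any continuous homomorphism from $\Om XS$ to a semigroup in \pv A factors through $\Om XA$, which justifies writing $\varphi_n(u)$ in terms of $w=\pj A(u)$ alone. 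Everything else is a routine combination of Theorem~\ref{t:star-free}, Corollary~\ref{c:trivial-in-aperiodic} and the clopenness of closures of star-free (indeed regular) languages.
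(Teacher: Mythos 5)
Your proof is correct, and the first inclusion is handled exactly as in the paper (via Corollary~\ref{c:trivial-in-aperiodic}). For the reverse inclusion, however, you take a genuinely different route. The paper argues topologically: it picks a sequence of words $v_n\to v$ in $\Om XS$, notes that $\pj A\bigl(\clos{L_n[\alpha]}\bigr)=\closv A{L_n[\alpha]}$ is \emph{open} in $\Om XA$ by Theorem~\ref{t:star-free} and contains $w=\lim \pj A(v_n)$, and uses this to force $v_n\in L_n[\alpha]$ along a subsequence, whence $v\in\clos{L_n[\alpha]}$ by monotonicity. You instead work algebraically with an aperiodic recognizer $T_n$ of the star-free language $L_n[\alpha]$: you factor $\varphi_n$ through $\pj A$ (legitimate precisely because $T_n\in\pv A$), invoke Lemma~\ref{l:trivial-in-finite-aperiodic} at a threshold $N\geq\ind{T_n}$ to identify $\varphi_n(\partial[\alpha])$ with $\varphi_n(w_0)$ for a concrete $w_0\in L_N[\alpha]$, and conclude $\varphi_n(u)\in P_n$. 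Both arguments ultimately rest on the same fact (star-freeness $\Leftrightarrow$ aperiodic recognizer $\Leftrightarrow$ $\closv A{L_n[\alpha]}$ clopen), but yours avoids convergent subsequences and makes the role of aperiodicity explicit, at the cost of more bookkeeping with thresholds. One cosmetic slip: after ``it is enough to show $u\in\clos{L_N[\alpha]}$'' you write ``i.e.\ $\varphi_n(u)\in P_n$'', but $\varphi_n(u)\in P_n$ is equivalent to $u\in\clos{L_n[\alpha]}$, not to $u\in\clos{L_N[\alpha]}$; since that is what your subsequent computation actually establishes, and it is what is needed, the argument stands.
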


\begin{proof}
  The inclusion $\bigcap_n \clos{L_n[\alpha]}\subseteq \pj A^{-1}(w)$ follows from
  Corollary~\ref{c:trivial-in-aperiodic}. For the reverse inclusion,
  assuming that $v\in\Om XS$ is such that $\pj A(v)=w$, we have $\pj
  A(v)\in\pj A\bigl(\clos{L_n[\alpha]}\bigr)$ for all $n$.  Let $(v_n)_n$ be
  a sequence of words converging to~$v$ in~\Om XS.  Then $\lim v_n=w$
  in~\Om XA and so, since by Theorem~\ref{t:star-free} the set $\pj
  A\bigl(\clos{L_n[\alpha]}\bigr)$ is open and contains~$w$, by taking a
  suitable subsequence we may assume that $v_n\in\pj
  A\bigl(\clos{L_n[\alpha]}\bigr)\cap X^+=L_n[\alpha]$. Since $(L_n[\alpha])_n$ is a
  decreasing sequence of languages, it follows that $v\in\clos{L_n[\alpha]}$
  for all~$n$.
\end{proof}

We now prove the announced main result of this section which does not
apparently follow easily from McCammond's results.

\begin{Thm}
  \label{t:factors-of-kt}
  If $v\in\omc XA$ and $u\in\Om XA$ is a factor of~$v$, then $u\in\omc
  XA$.
\end{Thm}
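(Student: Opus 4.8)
The plan is to prove, by induction on $\rk\alpha$, the stronger statement that whenever $\alpha\in\terms X$ is in normal form, every factor in $\Om XA$ of $\epsilon[\alpha]$ is an $\omega$-word; the theorem then follows because every $v\in\omc XA$ has a normal-form representative. Using a normal form is what gives a usable description of $\epsilon[\alpha]$.

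\textbf{Base case ($\rk\alpha=0$).} Here $\epsilon[\alpha]=\alpha\in X^+$. I would first observe that each singleton $\{w\}$ with $w\in X^+$ is clopen in $\Om XA$: the language $\{w\}$ is star-free, so by Sch\"utzenberger's theorem $\closv A{\{w\}}=\{w\}$ is clopen, and hence $X^+$ is a discrete subspace of $\Om XA$. If $u$ is a factor of $\alpha$, write $\alpha=xuy$ in $\Om XA$ and approximate $x,u,y$ by words $x_m,u_m,y_m$ (possibly empty); then $x_mu_my_m\to\alpha$, so $x_mu_my_m=\alpha$ eventually, whence $u_m$ is eventually a factor of the word $\alpha$, and being a convergent sequence in the discrete space $X^+$ it is eventually equal to $u$. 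Thus $u\in X^*\subseteq\omc XA$.

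\textbf{Inductive step.} Let $\rk\alpha=i+1$ and write $\alpha=\gamma_0(\delta_1)\gamma_1\cdots(\delta_r)\gamma_r$ in normal form, so that $\epsilon[\alpha]=\epsilon[\gamma_0]\,\epsilon[\delta_1]^{\omega}\,\epsilon[\gamma_1]\cdots\epsilon[\delta_r]^{\omega}\,\epsilon[\gamma_r]$, where $\rk{\gamma_j},\rk{\delta_j}\le i<\rk\alpha$, so that the induction hypothesis applies to each $\gamma_j$ and each $\delta_j$. I would show that a factor $u$ of $\epsilon[\alpha]$ can be written in $\Om XA$ as a finite product, each term of which is either a factor of some $\epsilon[\gamma_j]$, or a factor of some $\epsilon[\delta_j]$, or a power $\epsilon[\delta_j]^{k}$ ($k\ge1$) or the $\omega$-power $\epsilon[\delta_j]^{\omega}$. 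The first two kinds of term are $\omega$-words by the induction hypothesis, and powers and $\omega$-powers of the $\omega$-words $\epsilon[\delta_j]$ are plainly $\omega$-words; hence $u\in\omc XA$, completing the induction. To obtain such a decomposition one locates where $u$ begins and ends in the displayed product: this splits $u$ as (a prefix or suffix of the first block it meets)$\cdot$(a product of whole intermediate blocks)$\cdot$(a prefix of the last block), and one further unfolds a prefix or suffix of an $\omega$-power $\epsilon[\delta_a]^{\omega}$ as a power of $\epsilon[\delta_a]$ (or $\epsilon[\delta_a]^{\omega}$) multiplied by a prefix or suffix of $\epsilon[\delta_a]$.

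\textbf{Main obstacle.} The crux is to justify this decomposition \emph{inside} $\Om XA$: that a prefix (resp.\ suffix) of a product $w_1\cdots w_k$ is $w_1\cdots w_{j-1}$ times a prefix of some $w_j$, and that a prefix of $\epsilon[\delta_a]^{\omega}$ is $\epsilon[\delta_a]^{k}$ (or $\epsilon[\delta_a]^{\omega}$) times a prefix of $\epsilon[\delta_a]$. Such statements fail for arbitrary pseudowords and must be established precisely where the $\omega$-powers involved have primitive bases in circular normal form. This is where I expect to need the approximation of $\epsilon[\alpha]$ by the clopen sets $\closv A{L_n[\alpha]}$ (Theorem~\ref{t:star-free}), together with $\bigcap_n\closv A{L_n[\alpha]}=\{\epsilon[\alpha]\}$ (Corollary~\ref{c:trivial-in-aperiodic}, Theorem~\ref{t:pA-inverse-image-of-kterms}): these reduce the profinite prefix/suffix analysis to word combinatorics on the languages $L_n$, where preservation of primitivity under expansions (Proposition~\ref{p:base-primitive}, Corollary~\ref{c:base-primitive-rank-i}) and uniqueness of roots (Lemma~\ref{l:root}) control how the expansions of $(\delta_a)$ factor, and so pin its prefixes and suffixes down to the required form. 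Checking that the intermediate whole-block products are $\epsilon$ of contiguous sub-$\omega$-terms of $\alpha$ (and, if needed, in normal form, via the normal-form conditions and Lemma~\ref{l:McCammond-Lemma-10.7}) is routine by comparison.
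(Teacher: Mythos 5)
Your overall frame---induction on the rank of the normal form $\alpha$ of $v$, splitting a factor of $\epsilon[\alpha]$ along the block structure $\gamma_0(\delta_1)\gamma_1\cdots(\delta_r)\gamma_r$, and unfolding a prefix of $\epsilon[\delta_j]^{\omega}$ as $\epsilon[\delta_j]^{k}$ or $\epsilon[\delta_j]^{\omega}$ times a prefix of $\epsilon[\delta_j]$---is exactly the paper's, and your base case is fine. But the paragraph you label ``Main obstacle'' is not a side issue to be flagged: it \emph{is} the theorem. The claim that a prefix of a product $w_1\cdots w_k$ in \Om XA is $w_1\cdots w_{j-1}$ times a prefix of some $w_j$, and that a prefix of $\epsilon[\delta]^{\omega}$ has the stated form, is precisely what has to be proved, and saying you ``expect to need'' the clopen sets $\closv A{L_n[\alpha]}$ does not discharge it. Moreover, the auxiliary results you cite to close the gap are the wrong ones: preservation of primitivity (Proposition~\ref{p:base-primitive}, Corollary~\ref{c:base-primitive-rank-i}) and uniqueness of roots (Lemma~\ref{l:root}) are used in this paper only to establish star-freeness (Theorem~\ref{t:star-free}); they play no role in the factor theorem.

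What actually closes the gap is a pigeonhole-plus-compactness argument that you do not mention. Reduce by symmetry to $v=uw$ with $u$ a prefix. Star-freeness makes $\closv A{L_n[\alpha]}$ open, so there are words $u_n\to u$, $w_n\to w$ with $u_nw_n\in L_n[\alpha]$. By Lemma~\ref{l:Ln}, each word $u_nw_n$ factors along the blocks of $\alpha$, and the cut between $u_n$ and $w_n$ falls inside one block, giving a cutting index $c_n$; since only finitely many indices are possible, some subsequence has constant index $c$, and by compactness of \Om XA the resulting pieces $u_n',u_n'',w_n',w_n''$ converge. Corollary~\ref{c:trivial-in-aperiodic} then identifies the limits of the whole-block pieces as the corresponding $\omega$-words and shows $u''w'=\epsilon[x_c]$ for the block $x_c$ containing the cut; this reduces everything to the case $\alpha=(y)$, where the same dichotomy on the exponent $r_n$ of $u_n'\in L_n[y^{r_n}]$ (bounded on a subsequence versus tending to infinity) yields $u=\epsilon[y]^{r}u''$ or $u=\epsilon[y]^{\omega}u''$ with $u''$ a prefix of $\epsilon[y]$, to which the induction hypothesis applies since $\rk y<\rk\alpha$. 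Without this step, your claimed decomposition of $u$ remains an assertion about pseudowords that, as you yourself note, is false in general and is nowhere justified in your argument.
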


\begin{proof}
  By symmetry, it suffices to prove the result when $u$ is a prefix of
  $v$, that is, when there exists $w\in\Om XA$ such that $uw=v$. Let
  $\alpha$ be the normal form representation of~$v$. We proceed by
  induction on $\rk\alpha$. We assume inductively that the result
  holds for all elements of~\omc XA with rank strictly smaller
  than~$\rk\alpha$.

  Since $L_n[\alpha]$ is star-free for $n\geq\mu[\alpha]$ by
  Theorem~\ref{t:star-free}, its closure $\closv A{L_n[\alpha]}$ is an
  open subset of~\Om XA. Hence, there exist sequences $(u_m)_m$ and
  $(w_m)_m$ converging respectively to $u$ and~$w$ such that
  $u_nw_n\in L_n[\alpha]$ for all $n\geq\mu[\alpha]$.

  As an $\omega$-term, $\alpha$ admits a unique factorization in the
  semigroup $\terms{X}$
  of the form $\alpha=x_0x_1x_2\cdots x_{2p-1}x_{2p}$, where each
  $x_{2i}$ is a finite word and each $x_{2i-1}$ is an $\omega$-term of
  the form $x_{2i-1}=(y_{2i-1})$. Note that we include here the case
  where $\alpha$ is a word, for which $p=0$. Since $\alpha$ is in
  normal form, each $y_{2i-1}$ is an $\omega$-term of rank less than
  $\rk\alpha$ (although not necessarily of $\rk\alpha-1$). In view of
  Lemma~\ref{l:Ln} and each relation $u_nw_n\in L_n[\alpha]$, there is
  a ``cutting'' index $c_n\in\{0,\ldots,2p\}$ and there are
  factorizations $u_n=u_n'u_n''$ and $w_n=w_n'w_n''$ such that
  $$u_n'\in L_n[x_0\cdots x_{c_n-1}],\ u_n''w_n'\in L_n[x_{c_n}],\
  w_n''\in L_n[x_{c_n+1}\cdots x_{2p}].$$
  Since the number of possible cutting indices depends only
  on~$\alpha$ and not on~$n$, there is a strictly increasing sequence
  of indices $(n_k)_k$ whose corresponding cutting indices are all
  equal to a certain fixed~$c$. By compactness of~\Om XA, one may
  further assume that the sequences $(u_{n_k}')_k$, $(u_{n_k}'')_k$,
  $(w_{n_k}')_k$, and $(w_{n_k}'')_k$ converge, say respectively to
  $u',u'',w',w''$. By continuity of multiplication, and since
  $(L_n[\beta])_n$ is a decreasing sequence of languages for every
  $\omega$-term $\beta$, it follows that
  \begin{align*}
    u'&\in\bigcap_n\closv A{L_n[x_0\cdots x_{c-1}]},\\
    u''w'&\in\bigcap_n\closv A{L_n[x_c]},\\
    w''&\in\bigcap_n\closv A{L_n[x_{c+1}\cdots x_{2p}]}.
  \end{align*}
  By Corollary~\ref{c:trivial-in-aperiodic}, the preceding
  intersections are reduced respectively to the $\omega$-words
  $\epsilon[x_0\cdots x_{c-1}]$, $\epsilon[x_c]$, and
  $\epsilon[x_{c+1}\cdots x_{2p}]$. Hence $u',w''\in\omc XA$ and
  $u''w'=\epsilon[x_c]$. If $c$~is even, then $u''$ is a prefix of the
  word~$x_c$ and hence $u=u'u''\in\omc XA$, as required. Hence we may
  as well assume that $\alpha$ is of the form $\alpha=(y)$.

  By Lemma~\ref{l:Ln}\ref{item:Ln-4}, we have
  $L_n[\alpha]=L_n[y^n]L_n[y]^*$. Thus, in view of the relation
  $u_nw_n\in L_n[\alpha]$, there exist factorizations $u_n=u_n'u_n''$
  and $w_n=w_n'w_n''$ such that $u_n'\in L_n[y^{r_n}]$, $u_n''w_n'\in
  L_n[y]$, and $w_n''\in L_n[y^{s_n}]$, with $r_n+s_n+1\geq n$.
  Suppose that there is a strictly increasing sequence of indices
  $(n_k)_k$ such that $r_{n_k}=r$ is constant. We may assume that the
  sequences $(u_{n_k}')_k$, $(u_{n_k}'')_k$, $(w_{n_k}')_k$, and
  $(w_{n_k}'')_k$ converge, say respectively to $u',u'',w',w''$. As
  above, it follows that $u',w''\in\omc XA$ and $u''w'=\epsilon[y]$.
  Since $\rk y<\rk\alpha$, the induction hypothesis then implies that
  $u''$ is an $\omega$-term and, therefore so is $u=u'u''$.

  Hence we may assume that $r_n\to\infty$ as $n\to\infty$. This
  implies that $y^{r_n}\to (y)$ in \omc XA. Assuming again that
  $(u_{n_k}')_k$, $(u_{n_k}'')_k$, $(w_{n_k}')_k$, and $(w_{n_k}'')_k$
  converge respectively to $u',u'',w',w''$, we conclude that
  $u'=\epsilon[(y)]\in\omc XA$ and $u''w'=\epsilon[y]$. Invoking once more the
  induction hypothesis as above, the induction step is finally
  achieved, which proves the theorem.
\end{proof}

Note that Theorem~\ref{t:star-free} only intervenes in the above
proof to show that $\closv A{L_n[\alpha]}$ is an open subset of~\Om
XA, a property which in fact is equivalent to $L_n[\alpha]$~being
star-free.

Some applications of Theorem~\ref{t:factors-of-kt} can be found
in~\cite{Almeida&Costa&Zeitoun:2009}. It plays, in particular, an
important role in establishing the main result of that paper, namely a
characterization of pseudowords over \pv A which are given by
$\omega$-terms. Other applications of Theorem~\ref{t:factors-of-kt}
and of properties of the languages $L_n[\alpha]$, such as an algorithm
to compute the closure $\closv A{L}$ of a regular language $L$ have been
published in~\cite{ACZ:Closures:14}.

\bigskip
\footnotesize
\noindent\textit{Acknowledgments.}
This work was partly supported by the \textsc{Pessoa}
French-Portuguese project Egide-Grices 11113YM \emph{Automata,
  profinite semigroups and symbolic dynamics}.
The work leading to this paper has also been carried out within the
framework of the ESF programme ``Automata: from Mathematics to
Applications (AutoMathA)'', whose support is gratefully acknowledged.
The work of the first and second authors was supported, in part, by
the European Regional Development Fund, through the programme COMPETE,
and by the Portuguese Government through FCT -- \emph{Fundação para a
  Ciência e a Tecnologia}, respectively under the projects
PEst-C/MAT/UI0144/2011 and PEst-C/MAT/UI0013/2011.
Both the first and second authors were also supported by FCT through
the project PTDC/MAT/65481/2006, which was partly funded by the
European Community Fund FEDER.
The work of the third author was partially supported by ANR 2010 BLAN
0202 01 FREC.

\bibliographystyle{amsplain} %
\bibliography{McCam-NForms}

\end{document}